\pgfplotsset{compat=1.14, every tick label/.append style={font=\small}}
\numberwithin{equation}{section}
\newcounter{dummy} \numberwithin{dummy}{section}
\newtheorem{theorem}[dummy]{Theorem}
\newtheorem{lemma}[dummy]{Lemma}
\newtheorem{proposition}[dummy]{Proposition}
\newtheorem{definition}[dummy]{Definition}
\newtheorem{remark}[dummy]{Remark}
\newcounter{assum}
\DeclareMathOperator{\supp}{supp}
\newcommand{\F}{\mathcal{F}}
\newcommand{\R}{\mathbb{R}}
\renewcommand{\c}{\mathfrak{c}}
\newcommand{\iX}{\mathcal{X}}
\newcommand{\N}{\mathbb{N}}
\newcommand{\Z}{\mathbb{Z}}
\renewcommand{\L}{\mathfrak{L}}
\newcommand{\A}{\mathcal{A}}
\newcommand{\eps}{\varepsilon}
\newcommand{\Err}{\mathcal{E}}
\newcommand{\la}{\lambda}
\newcommand{\La}{\Lambda}
\newcommand{\ka}{\varkappa}
\newcommand{\M}{\mathcal{M}_1}
\newcommand{\D}{\mathfrak{D}}
\newcommand{\1}{\mathds{1}}
\newcommand{\<}{\langle}
\renewcommand{\>}{\rangle}
\renewcommand{\*}{\!\cdot\!}
\renewcommand{\#}{\star}
\let\copyint\int
\RenewDocumentCommand \int {o o}
{ \IfNoValueTF {#2} { \IfNoValueTF {#1} { \copyint } { \copyint\limits_{#1} } }	{ \copyint\limits_{#1}^{#2} } }
\title{Hydrodynamics of a particle model in contact with stochastic reservoirs.}
\date{}
\author[1]{Pasha Tkachov \thanks{pasha.tkachov@gssi.it}}
\affil[1]{GSSI, Mathematics, Viale Francesco Crispi 7, 67100 L'Aquila, Italy}
\begin{document}

\maketitle
\vspace{-10mm}
\begin{abstract}
	We consider an exclusion process with finite-range interactions in the microscopic interval $[0,N]$. 
	The process is coupled with the simple symmetric exclusion processes in the intervals $[-N,-1]$ and $[N+1,2N]$, which simulate reservoirs. 
	We show that an average of the empirical densities of the processes speeded up by the factor $N^2$ converge to solutions of parabolic partial differential equations inside $[-N,-1]$, $[0,N]$ and $[N+1,2N]$, which correspond to the macroscopic intervals $(-1,0)$, $(0,1)$ and $(1,2)$. 
	Since the total number of particles is preserved by the evolution, we obtain the Neumann boundary conditions on the external boundaries $u=-1$, $u=2$ of the reservoirs.
	Finally, a system of Neumann and Dirichlet boundary conditions is derived at the interior boundaries $u=0$, $u=1$ of the reservoirs.
\end{abstract}

\section{Introduction}
Our aim is to study the hydrodynamic limit of a discrete lattice gas model on a finite interval.
The molecules of a gas migrate between adjacent sites in $[0,N]$ randomly according to a spatially homogeneous exclusion process with finite-range interaction.
The interval $[0,N]$ lies between two reservoirs modeled by the intervals $[-N,-1]$ and $[N+1,2N]$.
The molecules may move between the interval $[0,N]$ and the reservoirs, in which they migrate according to the symmetric simple exclusion process.
We assume that the particles do not escape the external boundaries of the reservoirs. 
Hence, the total number of the molecules is preserved in the interval $[-N,2N]$.


\vspace{5mm}
\begin{tikzpicture}[trim axis left] 
	\begin{axis}[width=1.04*\textwidth,
							 height=0.4*\textwidth,
							hide y axis=true,
							axis x line=middle,
							hide obscured x ticks=false,
							scatter/classes={
								b={mark=square*,blue},
								a={mark=triangle*,red},
								c={mark=*,draw=black}
							},
							xtick={0,6,7,8,14,15,16,22},
							xticklabels={$-N$,$-1$,$0$,$1$,$N{-}1$,$N$,$N{+}1$,$2N$},
						]
	\addplot[blue,scatter, scatter src=explicit symbolic,] 
			coordinates {(0,0) [b] 
									 (1,0) [b] 
								   (2,0) [b]  
								   (3,0) [b]  
								   (4,0) [b]  
								   (5,0) [b]  
								   (6,0) [c]  
									};

	\draw [thick,decoration={brace,mirror,raise=15pt},decorate,blue] 
		(axis cs:0,0) -- node[below=20pt] {\small{\color{black}Simple exclusion process}} (axis cs:6,0);

	\addplot[black, thick, scatter, scatter src=explicit symbolic,] 
			coordinates{(6,0) [b]  
									(7,0) [c]  
								  (8,0) [a]  
								 };

	\addplot[red,scatter, scatter src=explicit symbolic,] 
			coordinates{(8,0) [a]
								  (9,0) [a]  
								  (10,0) [a] 
								  (11,0) [a] 
								  (12,0) [a] 
								  (13,0) [a] 
								  (14,0) [a] 
								 };

	\draw [thick,decoration={brace,raise=5pt},decorate,red] 
		(axis cs:8,0) -- node[above=10pt] {\small{\color{black}Exclusion process with finite-range interactions}} (axis cs:14,0);

	\addplot[black, thick, scatter, scatter src=explicit symbolic,] 
			coordinates{(14,0) [a]  
								  (15,0) [c] 
								  (16,0) [b] 
								};

	\addplot[blue,scatter, scatter src=explicit symbolic,] 
			coordinates{(16,0) [b] 
								  (17,0) [b] 
								  (18,0) [b] 
								  (19,0) [b] 
								  (20,0) [b] 
								  (21,0) [b] 
								  (22,0) [b] 
								}; 

	\draw [thick,decoration={brace,mirror,raise=15pt},decorate,blue] 
		(axis cs:16,0) -- node[below=20pt] {\small{\color{black}Simple exclusion process}} (axis cs:22,0);
\end{axis}
\end{tikzpicture}
\vspace{5mm}

Our goal is to find a limit of the empirical density of the process $\{\eta^N(t)\}_{t\geq0}$ speeded up by the factor $N^2$,
\begin{equation}\label{eq:empirical_density}
	\pi^N(t) := \pi^N(\dif u,t) := \frac{1}{N} \sum_{x=-N}^{2N} \eta_x^N(t) \delta_{\frac{x}{N}}(\dif u).
\end{equation}
The main result of the article is Theorem~\ref{thm:main_theorem} below, which states that an average of the empirical measure $\pi^N(t)$ converges to a measure, which density is a weak solution (see Definition~\ref{def:very_weak_solution} and \ref{def:weak_Neumann}) of the parabolic partial differential equation for $t>0$, 
\begin{gather}
	\label{eq:PDE} 	\diffp[1] \rho t(u,t) = 
			\left\{ \begin{aligned} 
				\diffp[2]{\rho}{u}&(u,t), \quad & &u\in(-1,0)\cup(1,2),\\ 
				\diffp[2]{}{u} \Phi&(\rho(u,t)), \quad & &u\in(0,1),
			\end{aligned} \right.\\
	\diffp[1]{\rho}{u}(-1_+,t) = \diffp[1]{\rho}{u}(2_-,t) = 0, \label{eq:Neum} \\
	\diffp[1]{\rho}{u}(0_-,t) = \diffp[1]{}{u} \Phi(\rho(0_+,t)),\qquad \diffp[1]{}{u} \Phi(\rho(1_-,t)) = \diffp[1]{\rho}{u}(1_+,t), \label{eq:Neumann_boundary} \\
	\la^-_{\rho(0_-,t)} = \la^{+}_{\rho(0_+,t)}, \quad \la^{+}_{\rho(1_-,t)} = \la^-_{\rho(1_+,t)}, \label{eq:Dirichlet_boundary}
\end{gather}
where $\Phi:[0,1]\to[0,\infty)$ in \eqref{eq:PDE} is non-linear in general.
Since the total number of molecules is preserved in time, then at $u=-1$ and $u=2$ we obtain the Neumann boundary conditions \eqref{eq:Neum}.
Continuity of the flow of the particles moving through $u=\frac{x}{N}=0$ and $u=\frac{x}{N}=1$ implies \eqref{eq:Neumann_boundary}. 
The continuity of the chemical potential corresponding to the local density of the molecules $\rho$ gives \eqref{eq:Dirichlet_boundary},
where the functions  $\rho\to \la^\pm_\rho:(0,1)\to\R$ are defined by the so-called equivalence of ensembles at the level of potentials (see Subsection~\ref{sec:equivalence_of_ensembles} for the rigorous definitions).
Note that in general the density $\rho(\cdot,t)$ is discontinuous at $u=0$ and $u=1$ by \eqref{eq:Dirichlet_boundary}.

There is no surprise, that in the macroscopic interval $(0,1)$, where the process evolves as a finite-range exclusion process, its empirical density converges to a solution of a parabolic equation with the nonlinear diffusion,  while in $(-1,0)$ and $(1,2)$, where the process evolves as a symmetric simple exclusion process, its empirical density converges to the (linear) heat equation \cite{ELS1990,ELS1991,FHU1991}.
The main novelty of the article are the boundary conditions \eqref{eq:Neumann_boundary} and \eqref{eq:Dirichlet_boundary} at $u=0$ and $u=1$, which glue the equations in \eqref{eq:PDE} together.

The classical work on the hydrodynamic limits is \cite{GPV1988}, where the nonlinear diffusion equation on the torus was derived from the diffusion process on the periodic lattice.
Similar result for the exclusion process with finite-range interactions was obtained in \cite{FHU1991}.
In both papers boundary conditions did not appear due to periodicity of the underlying space.
A model similar to ours was studied in \cite{ELS1990,ELS1991}, where instead of the microscopic intervals $[-N,-1]$, $[N+1,2N]$, the reservoirs were modeled fixing the chemical potentials at $\frac{x}{N}=0$ and $\frac{x}{N}=1$.
As a result, the nonlinear diffusion equation with Dirichlet boundary conditions was derived.
The boundary conditions of the form \eqref{eq:Neumann_boundary} and \eqref{eq:Dirichlet_boundary} were obtained in \cite{LOV1997}, where the hydrodynamic behavior of the zero range process with an asymmetry at the origin was studied.
In \cite{MPTV2011} the symmetric simple exclusion process was considered in the interval $[0,N]$, and the reservoirs were presented in form of a birth-death process, which was attached to the boundaries of the interval $[0,N]$.
The authors proved that the limiting density satisfies the linear heat equation with implicit Dirichlet boundary conditions.
In the preprints \cite{Ngu2018i,Ngu2018ii} the symmetric simple exclusion process interacts on the boundaries of the interval $[0,N]$ with reservoirs of various lengths.
The author considered a pairwise interaction of the particles at $\frac{x}{N}=0$ and $\frac{x}{N}=1$ with all particles in the adjacent reservoirs, deriving the linear heat equation with various Neumann-type boundary conditions.
The model is different of our case, since we allow only finite range interactions with the particles in the reservoirs.
As a result, in contrast with \cite{Ngu2018i,Ngu2018ii}, we can not neglect the evolution inside the reservoirs.

Our derivation of the parabolic equation \eqref{eq:PDE} is based on the papers \cite{ELS1990,ELS1991,FHU1991}. Although, for the convenience of the reader we attempted to repeat the relevant material from the papers in order to make our exposition self-contained, some details were omitted, as it required us to repeat the articles almost completely.

The paper is organized as follows: In Section~\ref{sec:rigorous_results} our main result is rigorously stated (see Theorem~\ref{thm:main_theorem}).
In Section~\ref{sec:notations} we set up notations and terminology needed for the proofs.
Section~\ref{sec:scaling_limit} establishes the proof of the main result.
The rest of the paper is devoted to the justification of several technical results we used in Section~\ref{sec:scaling_limit} including the so-called replacement lemma.

\textbf{Acknowledgments.} The author is greatly indebted to Professor Errico Presutti and Professor Anna De Masi for suggesting the problem and for many stimulating conversations.

\section{Rigorous results}\label{sec:rigorous_results}
For $x\in\{-N,\cdots,2N\}$, $N\in\N$, we consider a scaled lattice with sites $\frac{x}{N}$ on the macroscopic interval $[-1,2]$.
Thus, the adjacent sites are $\frac{x}{N}$, $\frac{x\pm1}{N}$.
There may be at most one molecule $\eta_x$ per site $\frac{x}{N}$.
Therefore, the molecules configuration is identified with a point $\eta$ in $\{0,1\}^{\{-N,\dots,2N\}}$.
The generator of the exclusion process $\{\eta^N_t\}_{t\geq0}$ is given by  
\begin{equation}\label{eq:generator}
	N^2 L_N = N^2\sum\limits_{x=-N}^{2N-1} L_{x,x+1}, \qquad (L_{x,x+1} g)(\eta) = c_{x,x+1}(\eta)(g(\eta^{x,x+1}) - g(\eta)),
\end{equation}
where $\eta^{x,x+1}$ is obtained from $\eta$ exchanging values of $\eta_x$ and $\eta_{x+1}$, and $g$ is a real-valued function on the configuration space.
The factor $N^2$ represents the scaling of time $t\to N^2 t$.

We assume that the exchange rates $c_{x,x+1}$ between sites $\frac{x}{N}$ and $\frac{x+1}{N}$ in the interval $(0,1)$ have finite-range. 
We will study the interactions similar to \cite[Example~2]{FHU1991}, namely for $x\in\{1,\dots,N-2\}$, $\theta,\alpha,\beta\in\R$, $\theta+\alpha>0$, $\theta+\beta>0$, $\theta+\alpha+\beta>0$, 
\begin{equation}\label{eq:nonlocal_exchange_rates}
	c_{x,x+1}(\eta) = \eta_x(1-\eta_{x+1})(\theta + \alpha \eta_{x-1} + \beta \eta_{x+2}) + \eta_{x+1}(1-\eta_x)(\theta +\alpha \eta_{x+2} + \beta \eta_{x-1}).
\end{equation}
For $\frac{x}{N}\in [-1,0)\cup(1,2]$ we consider the simple symmetric exclusion process: 
\begin{equation}\label{eq:simple_symmetric_exchange_rates}
	c_{x,x+1}(\eta) = \eta_x(1-\eta_{x+1}) + \eta_{x+1}(1-\eta_x).
\end{equation}
Note, that the exchange rates defined by \eqref{eq:nonlocal_exchange_rates} and \eqref{eq:simple_symmetric_exchange_rates} satisfy the gradient condition. 
Namely, for $x\in\{-N,\dots,-2\}\cup\{N+1,\dots,2N-1\}$, the exchange rates in \eqref{eq:simple_symmetric_exchange_rates} satisfy for $\Pi(\eta):=\eta_0$,
\begin{equation}\label{eq:DB_SSEP}
		\tau_{x+1}\Pi(\eta) - \tau_x\Pi(\eta) = c_{x,x+1}(\eta)(\eta_{x+1}-\eta_x),
\end{equation}
where $\tau$ denotes the shift operator: $(\tau_x \eta)_y := \eta_{y+x}$, $\tau_xh(\eta):= h(\tau_x\eta)$.
And, for $c_{x,x+1}$ in \eqref{eq:nonlocal_exchange_rates}, there exists a local function $h$ such that 
	\begin{equation}\label{assum:gradient_condition}
		\tau_x h(\eta) - \tau_{x+1} h(\eta) = c_{x,x+1}(\eta)(\eta_x-\eta_{x+1}), \quad x\in\{1,\dots,N-2\}. 
	\end{equation}
	One can define $h$ explicitly (for $\theta{=}1$ see \cite[(5.7)]{FHU1991}). However, we omit the explicit formula, since we will not use it relying on general properties of $h$ (cf.~Remark~\ref{rem:main_remark} below).

The exchange rates $c_{-1,0}$, $c_{0,1}$, $c_{N-1,N}$, $c_{N,N+1}$, are defined in a way which ensures the detailed balance condition:
\begin{equation}\label{assum:DB}
c_{x,x+1}(\eta) e^{\Delta_{x,x+1}H(\eta)} = c_{x,x+1}(\eta^{x,x+1}),\quad \Delta_{x,x+1} H(\eta) := H(\eta^{x,x+1})-H(\eta), 
\end{equation}
where, in our case, the Hamiltonian $H=H_{N}$ is defined as follows,
\begin{equation}\label{eq:Hamiltonian}
	H_{N}(\eta) := \sum_{x\in \{0,\dots,N-1\}} q\,\eta_x \eta_{x+1},\quad q:=\ln\frac{\theta+\alpha}{\theta+\beta}. 
\end{equation}
Hence, we put for $H=H_{N}$, $x\in\{-1,0,N-1,N\}$,
\begin{equation}\label{eq:exchange_rates_on_the_boundaries}
	c_{x,x+1}(\eta) := (\eta_x(1-\eta_{x+1}) + \eta_{x+1}(1-\eta_x)) e^{-\frac{1}{2}\Delta_{x,x+1} H(\eta)}.
\end{equation}
The detailed balance condition \eqref{assum:DB} implies that the process $\{\eta^N(t)\}_{t\geq0}$ given by the generator \eqref{eq:generator} is reversible with respect to the Gibbs measures $\nu^\la_N(\eta)\dif\eta$, $\la\in\R$, where 
\begin{equation}\label{eq:Gibbs}
	\nu^\la_{N} (\eta) := \frac{1}{Z^\la_{N}}e^{-H(\eta) + \la \sum_{x=-N}^{2N} \eta_x} 
\end{equation}
is  the (grand canonical) Gibbs distribution with the Hamiltonian $H=H_{N}$, and the normalizing constant $Z^\la_{N}$.

To ensure, that a limit of an average of the empirical density \eqref{eq:empirical_density} exists, we restrict our attention to the processes started from the initial distributions $\{\mu_0^N\}_{N\in\N}$ associated to a density profile $\rho_0:[-1,2]\to\R$, such that $\rho_0\in C((-1,0)\cup(0,1)\cup(1,2))$. Namely, for all $G\in\mathcal{D}$, 
\begin{equation}\label{assum:lim_of_init_distrib}
	\lim_{N\to\infty} \mu^N_0 \big[ \eta\in\{0,1\}^{\{-N,\dots,2N\}} \,:\, \big\vert \frac{1}{N} \sum_{x=-N}^{2N} G(\frac{x}{N}) \eta_x - \int[-1][2] G(u) \rho_0(u) \dif u \big\vert \geq \delta \big] =0,
\end{equation}
where 
\begin{equation}\label{eq:smooth_functions}
	\mathcal{D} := \{ G\in C^2([-1,2]) \,\big\vert\, G(-1)=G(0)=G(1)=G(2)=0\}.
\end{equation}


In the physical literature the parameter $\la$ in \eqref{eq:Gibbs} is called the chemical potential.
The so-called equivalence of ensembles at the level of potentials implies a one-to-one correspondence between the local densities of the molecules $0\leq\rho\leq1$ and the chemical potentials $\la=\la_\rho\in[-\infty,\infty]$ (see e.g. \cite{Geo1979}).
Let us look at $\frac{x}{N}\in(-1,0)$ and neglect for the moment the interaction at $x=0$.
Then, the simple symmetric exclusion process in $(-1,0)$ is invariant with respect to its own version of Gibbs distributions, which converges to a probability measure $\nu^{\la,-}$, as $N\to\infty$.
The equivalence of ensembles implies a one-to-one correspondence between the densities $0\leq\rho\leq1$ and chemical potentials $\la^-=\la^-_\rho\in\R$ (see Section~\ref{sec:equivalence_of_ensembles} for more details). The interval $(1,2)$ can be treated in the same way.
Similarly, if we remove the interactions at $x=0$, $\frac{x}{N}=1$ and look at the process in the interval $(0,1)$, we get its own Gibbs distributions, which converge to a probability measure $\nu^{\la,+}$. Then, the equivalence of ensembles implies the relation $\rho\leftrightarrow\la^+_\rho\in\R$. The chemical potentials $\la_\rho^\pm$ define the boundary conditions \eqref{eq:Dirichlet_boundary}.
Moreover, the function $\Phi$ in \eqref{eq:PDE} equals to the expectation of $h$ defined by \eqref{assum:DB} with respect to the Gibbs measure $\nu^{\la^+_\rho,+}$ (also see \eqref{eq:expectation_homogeneous}).
It is worth noting that we get the linear diffusion on $(-1,0)\cup(1,2)$, because the analogous expectation of the function $\Pi$ in \eqref{eq:DB_SSEP} with respect to the Gibbs measure $\nu^{\la^-_{\rho},-}$ equals $\rho$.

\begin{definition}\label{def:very_weak_solution}
	We say that $\rho$ is a weak solution to \eqref{eq:PDE} if $\rho\in L^1([-2,1]\times [0,T])$ for all $T>0$, and for all $G\in\mathcal{D}$, $t>0$,
	\begin{multline*}
		\int[-1][2] \rho(u,t) G(u) \dif u = \int[-1][2] \rho_0(u) G(u) \dif u + \int[0][t]\big( \int[0][1] \Phi(\rho(u,s)) \partial^2_{x}G(u) \dif u \\ 
			+ \big(\int[-1][0] + \int[1][2]\big) \rho(u,s) \partial^2_{x}G(u) \dif u + G'(-1) \rho(s,-1_+) + G'(0)(\Phi(\rho(s,0_+))-\rho(s,0_-))\\ 
					  + G'(1) (\rho(s,1_+)-\Phi(\rho(s,1_-))) - G'(2) \rho(s,2_-) \big)\dif s.
	\end{multline*}
\end{definition}
The definition is similar to the one of the very weak solution in \cite{Vaz2007}.
However, we do not require the test functions to depend on time.
\begin{definition}\label{def:weak_Neumann}
	For $u\in\R$ and $\rho,\tilde{\rho}\in L^1([-2,1]\times [0,T])$ for all $T>0$, we write
	\[
		\diffp[1]{\rho}{u} (u_+,t) = \diffp[1]{\tilde{\rho}}{u} (u_-,t),\quad t>0,
	\] 
	if the following limit holds 
\[
	\lim_{\substack{w\to u_+\\ v\to u_-}} \lim_{r\to0} \int[0][t] \big( \frac{\rho(w+r,s)-\rho(w,s)}{r} - \frac{\tilde{\rho}(v+r,s)-\tilde{\rho}(v,s)}{r} \big) \dif s =0, \quad t>0.
\]
\end{definition}

The closure of the set of functions $\mathcal{D}$ defined by \eqref{eq:smooth_functions} with respect to the supremum norm is the Banach space, which is equal to the set of continuous functions vanishing at the points $u\in\{-1,\,0,\,1,\,2\}$.
We identify the dual of the space with the measures on $[-1,2]$, which are finite on $(-1,0)\cup(0,1)\cup(1,2)$.
Its subsets of the non-negative measures and the probability measures we denote correspondingly by $\mathcal{M}_+$ and $\M$.
\begin{definition}\label{def:P_N_and_E_N}
	For a fixed $T>0$, we call $\mathbb{P}^N$ the probability distribution of the empirical density $\{\pi^N_t\}_{t\in[0,T]}$ defined by \eqref{eq:empirical_density} on the Skorohod space of right-continuous measure-valued trajectories $D([0,T]\to \mathcal{M}_+)$ given a random starting configuration $\eta^N(0)$ distributed by $\mu_0^N$.
We call $\mathbb{E}^N$ the expectation corresponding to $\mathbb{P}^N$.
\end{definition}

Let us consider a truncated version of the empirical density: 
\begin{equation}\label{eq:truncated_empirical_density}
	\pi^{N,k}(\dif u,t) := \frac{1}{N}\big( \sum_{x=-N+k}^{-k} + \sum_{x=k}^{N-k} + \sum_{x=N+k}^{2N-k} \big) \eta_x^N(t) \delta_{\frac{x}{N}}(\dif u),
\end{equation}
and its average 
\begin{equation}\label{eq:average_empirical_density}
	\tilde{\pi}^{N,k}(\dif u,t) := \frac{1}{k}\sum_{m=1}^{k-1} \frac{1}{m}\sum_{j=1}^{m-1} \pi^{N,j}(\dif u,t).
\end{equation}

We can now formulate our main result.
\begin{theorem}\label{thm:main_theorem}
	Let \eqref{assum:lim_of_init_distrib} hold true.
	Then, there exists 
	\[
		\rho\in L^2([0,T],H^1(K)), \quad K\Subset (-1,0)\cup(0,1)\cup(1,2),\ T>0, 
	\]
 	such that for all $G\in\mathcal{D}$, $T>0$, $t\in(0,T]$, $\delta>0$, 
	\[
		\lim_{k\to\infty}\lim_{N\to\infty} \mathbb{P}^N \big[ \int[0][t] \big\vert \int[-1][2] \tilde{\pi}^{N,k}(\dif u,s) G(u) \dif u - \int[-1][2] \rho(u,s) G(u) \dif u \big\vert \dif s \geq\delta \big] =0,
	\]
	and $\rho$ is a weak solution to \eqref{eq:PDE} with the initial condition $\rho(u,0) = \rho_0(u)$. 
	Moreover, the boundary conditions \eqref{eq:Neum} and \eqref{eq:Neumann_boundary} hold true in the sense of Definition~\ref{def:weak_Neumann},
and \eqref{eq:Dirichlet_boundary} is satisfied for almost all $t>0$.
\end{theorem}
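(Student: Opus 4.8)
The plan is to follow the entropy method of Guo--Papanicolaou--Varadhan \cite{GPV1988}, adapted to gradient systems as in \cite{FHU1991,ELS1990,ELS1991}. First I would fix a test function $G\in\mathcal{D}$ and write Dynkin's martingale decomposition
\[
	M^N_t(G) = \<\pi^N_t,G\> - \<\pi^N_0,G\> - \int[0][t] N^2 L_N \<\pi^N_s,G\>\,\dif s,
\]
and compute its predictable quadratic variation, which is of order $N^{-1}$ (the $N^2$ time rescaling and the $O(N)$ bonds being outweighed by the $O(N^{-4})$ size of each jump of $\<\pi^N,G\>$); hence $M^N_t(G)\ra0$ in $L^2(\mathbb{P}^N)$. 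Using these bounds together with Aldous' criterion I would establish tightness of $\{\mathbb{P}^N\}$ on $D([0,T]\to\mathcal{M}_+)$, so that subsequential limit points exist.

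The second step is to rewrite $N^2 L_N\<\pi^N,G\>$ using the gradient conditions \eqref{eq:DB_SSEP} and \eqref{assum:gradient_condition}. Summing by parts twice, the bulk sites $x\in\{1,\dots,N-2\}$ produce the discrete Laplacian of $G$ tested against $\tau_x h$, the reservoir sites produce the discrete Laplacian tested against $\eta_x$, and the uncompensated terms at $x\in\{-N,-1,0,1,N-1,N,N+1,2N\}$ generate precisely the first-derivative boundary contributions $G'(-1)$, $G'(0)$, $G'(1)$, $G'(2)$ appearing in Definition~\ref{def:very_weak_solution}. The crucial input is then the replacement (Boltzmann--Gibbs) lemma announced in the introduction: via the one-block and two-block estimates controlled by the Dirichlet form, I would replace $\tau_x h(\eta)$ by $\Phi$ evaluated at a mesoscopic spatial average of the occupation variables, where $\Phi(\rho)=\mathbb{E}_{\nu^{\la^+_\rho,+}}[h]$, and likewise replace $\eta_x$ by the local density in the reservoirs. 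This closes the equation and identifies any limit point with a weak solution of \eqref{eq:PDE} in the sense of Definition~\ref{def:very_weak_solution}.

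Third, to obtain the regularity $\rho\in L^2([0,T],H^1(K))$ I would invoke the energy estimate: bounding the time-integrated Dirichlet form by the relative entropy of $\mu_0^N$ with respect to a reference Gibbs measure $\nu^\la_N$ yields, in the limit and on each compact $K\Subset(-1,0)\cup(0,1)\cup(1,2)$, a weak spatial gradient in $L^2$. This regularity is what makes the one-sided traces $\rho(0_\pm,t)$, $\rho(1_\pm,t)$ well defined and gives meaning to Definition~\ref{def:weak_Neumann}. The Neumann conditions \eqref{eq:Neum} at $u=-1,2$ then follow from the absence of jumps across the sites $-N$ and $2N$ (conservation of the total number of particles), the flux-continuity conditions \eqref{eq:Neumann_boundary} from the equality of the microscopic current computed on either side of $x=0$ and $x=N$, and the Dirichlet-type conditions \eqref{eq:Dirichlet_boundary} from a local-equilibrium analysis at the interfaces: the boundary rates \eqref{eq:exchange_rates_on_the_boundaries} are tuned through the detailed balance \eqref{assum:DB} so that the two one-sided stationary measures share a common chemical potential, forcing $\la^-_{\rho(0_-,t)}=\la^+_{\rho(0_+,t)}$ for almost every $t$.

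The main obstacle, I expect, is the replacement step at the interfaces $u=0,1$: the process is not translation invariant there, so the homogeneous one-block and two-block estimates must be replaced by a boundary version, and establishing the continuity of the chemical potential \eqref{eq:Dirichlet_boundary} requires a careful matching of the one-sided Gibbs measures $\nu^{\la,-}$ and $\nu^{\la,+}$ across the detailed-balance-tuned bonds. This is precisely where the truncation and Cesàro averaging in \eqref{eq:truncated_empirical_density}--\eqref{eq:average_empirical_density} and the order of limits $\lim_{k\to\infty}\lim_{N\to\infty}$ enter: removing the $k$ sites nearest each interface and averaging over the truncation level smooths the boundary layers where the replacement is weakest, and lets me deduce the stated convergence in probability directly from subsequential compactness without invoking a full uniqueness theorem for the (generally discontinuous) weak solution.
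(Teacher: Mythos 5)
Your proposal is correct and follows essentially the same route as the paper: Dynkin's formula with Aldous' criterion for tightness, the gradient conditions \eqref{eq:DB_SSEP}, \eqref{assum:gradient_condition} and double summation by parts on the truncated, Ces\`aro-averaged density to produce the $G'$ boundary terms of Definition~\ref{def:very_weak_solution}, the replacement lemma to close the equation, the entropy/Dirichlet-form energy estimate for the $L^2([0,T],H^1(K))$ regularity, particle conservation for \eqref{eq:Neum}--\eqref{eq:Neumann_boundary}, and a two-block (local equilibrium) estimate across the detailed-balance-tuned interface bonds for \eqref{eq:Dirichlet_boundary}. This matches the paper's Lemmas~\ref{lem:Tightness_of_PN}, \ref{lem:Macroscopic_equation}, \ref{lem:replacement}, \ref{lem:Neumann_boundary} and \ref{lem:Dirichlet_boundary}, including the role of the double limit $\lim_{k\to\infty}\lim_{N\to\infty}$ in avoiding any uniqueness argument.
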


\begin{remark}\label{rem:main_remark}
	\begin{enumerate}
		\item It is not true in general, that $\la^-_{\rho}$ equals $\la^+_{\rho}$. As a result, \eqref{eq:Dirichlet_boundary} implies that $\rho(\cdot,t)$ can be discontinuous at $u{=}0$ and $u{=}1$, in which case 
			\[
				\rho\not\in L^2([0,T],H^1([-1,2])).
			\]
		\item\label{item:generality} Theorem~\ref{thm:main_theorem} remains true if we consider more general exchange rates in \eqref{eq:nonlocal_exchange_rates} and \eqref{eq:simple_symmetric_exchange_rates}.
For example, in place of \eqref{eq:nonlocal_exchange_rates} we could consider rates satisfying the following properties:
			\begin{itemize}
				\item For $x\in\{0,\dots,N\}$, $c_{x,x+1}$ are local functions: There exists $r\geq0$ such that
					\[
						c_{x,x+1}(\eta) = c_{x,x+1}(\eta_{x-r}, \eta_{x-r+1}, \dots, \eta_{x+r+1}).
					\]
				\item For $x\in\{r,\dots, N-r-1\}$, the translation invariance holds: $c_{x,x+1}=\tau_x c_{r,r+1}$. 
				\item The exchange rates are non-degenerate: $c_{r,r+1}(\eta)>0$ for $\eta_r\neq\eta_{r+1}$. 
				\item There exists a local function $h$ such that for all $x\in\{r,\dots,N{-}r{-}1\}$ the gradient condition \eqref{assum:gradient_condition} holds. 
				\item There exists a translation-invariant Hamiltonian $H$ such that,\\ for $x\in\{r,\dots,N-r-1\}$  the detailed balance condition \eqref{assum:DB} holds.
				\item For $x\in\{-1,\dots,r-1\}\cup\{N-r,\dots,N\}$, $c_{x,x+1}$ are defined by \eqref{eq:exchange_rates_on_the_boundaries}.

			\end{itemize}
			Under this assumptions formulation of the theorem remains unchanged.
Similar assumptions could be put on rates in place of \eqref{eq:simple_symmetric_exchange_rates}, in which case nonlinear diffusion equation would be derived in the intervals $(-1,0)$ and $(1,2)$ with the corresponding amendments to the boundary conditions \eqref{eq:Neum}, \eqref{eq:Neumann_boundary}.
Despite of the possible generalizations we decided to consider the explicit rates \eqref{eq:nonlocal_exchange_rates}, \eqref{eq:simple_symmetric_exchange_rates} in order to simplify presentation of the results.
		\end{enumerate}
\end{remark}

\subsection{Examples}
By \eqref{eq:simple_symmetric_exchange_rates} and \eqref{eq:equation_rho_la} below, 
\[
	\la^-_{\rho} = \log \frac{\rho}{1-\rho}.
\]
By \eqref{eq:nonlocal_exchange_rates} and \cite[Example~2]{FHU1991}, for $\theta=1$,
\[
	\Phi(\rho) = - \frac{1}{2 r^2}  \big(\frac{\alpha}{\rho}+\frac{\beta}{1-\rho}\big) \big( 1-\sqrt{1-4r(1-\rho)\rho}\big) + \text{const}, \quad r:=\frac{\alpha-\beta}{1+\alpha}.
\]
In this case $\la^+_{\rho}$ solves the equation 
\[ 
	p'(\la^+_\rho) = \rho, \quad p(\la) = \log\big[ e^{-q+\la} +1 + \sqrt{(e^{-q+\la}+1)^2 -4(e^{-q+\la}-e^{\la})} \big] -\log2.
\]
For $\alpha=\beta$, we have
\[
	\Phi(\rho) = \theta \rho + \alpha \rho^2, \qquad \la^+_\rho = \log \frac{\rho}{1-\rho},
\]
in which case $\la^+_\rho=\la^-_\rho$, and \eqref{eq:Dirichlet_boundary} is reduced to the statement that the density $\rho$ is continuous at $u=0$ and $u=1$ for \textit{almost} all $t>0$.
Note that for $\alpha=0$ the Nash theorem \cite{Nas1958} implies the continuity at $u=0$ and $u=1$ for all $t>0$.

\subsection{On uniqueness of the limiting density}

The question of uniqueness of the limiting density $\rho$ given by Theorem~\ref{thm:main_theorem} remains open.
Although the limiting equation is of the gradient form, it seems that the discontinuities of the density due to \eqref{eq:Dirichlet_boundary} make it impossible to apply classical techniques, which require regularity of the corresponding vector field (see e.g. \cite{AGS2008}).

A more direct approach applied in \cite{FHU1991,ELS1991,LOV1997} requires us to choose an appropriate weighted Banach space, where the norm of the difference of two solutions with the same initial condition would be non-increasing.
The condition \eqref{eq:Dirichlet_boundary} suggests that such norm should be time dependent.
In this case extra terms absent in the above mentioned articles will appear. As a result a subtle choice of the norm is needed.

\subsection{On spatial dimension}

In the present paper we consider processes in a one-dimensional space.
There are several obstacles, which do not allow a straightforward extension of our results to higher dimensions. 

The first problem is existence of the phase transition, which implies non-uniqueness of the Gibbs measures in the limit $N\to\infty$ (cf.~Proposition~\ref{prop:Gibbs_measure_exist_unique} below). 
The form of the Gibbs measures in \eqref{eq:Gibbs} is similar to the ones for the Ising model with the nearest neighbor interactions.
In fact, up to the change of variables $\eta_x\to 2(\eta_x-\frac{1}{2})$ the Gibbs measures in $K=[0,N]$ (see \eqref{eq:Gibbs_grandcanonical_homogeneous}) coincide with the ones for the Ising model. It is well known that in the Ising model, for $\la=0$ and large $q$ defined in \eqref{eq:Hamiltonian}, the limiting Gibbs measures (as $N\to\infty$) are non-unique for any spatial dimension higher than one (see e.g.~\cite{FV2017}). 
In this case, the proof of the replacement lemma (Lemma~\ref{lem:replacement}) becomes more complicated. 
Possibly, one could adapt the approach in \cite{Rez1990}. 

The second problem is the entropy estimate, which we prove in Lemma~\ref{lem:entropy_estimates} below.
Our proof works only for a one-dimensional space.
For models with periodic boundary conditions, an additional spatial averaging of the probability distribution of the process allows to prove the entropy estimate for any spatial dimension (see e.g.~\cite{Fun2018,FHU1991}).
In our case, the periodicity is absent. 
Nevertheless, one could expect, that the entropy estimate holds true in any dimension, since the entropy should not grow too fast as $N\to\infty$ in the regions which are separated form the boundaries, due to the spatial homogeneity.

\section{Further notation}\label{sec:notations}
We denote the set of natural numbers $\{1,2,\dots \}$ by $\N$, the set of integers by $\Z$.
The closed ball in $\Z$ with the center at $x$ and radius $r$ will be denoted by $B_r(y)$.
We write $B_r$ for $B_r(0)$,
$K^c$ for the complement of $K\subset \Z$ in $\Z$. 
The notation $K\Subset \Z$ means that $K$ is a compact subset of $\Z$ or equivalently $K$ is a finite subset of $\Z$.
Let us also denote 
\begin{equation}\label{eq:K_r}
	K_r := \{ x+y \,\vert\, x\in K,\ y\in B_r\}. 
\end{equation}
The configuration space of particles on $K\subset \Z$ will be denoted by 
\[
	\iX_K := \{0,1\}^K, \quad \iX:=\iX_\Z, \quad \iX_N:=\iX_{[-N,2N]\cap\Z},\ N\in\N.
\]

A function $g:\iX\to\R$ is called \textit{local} if it depends on a finite number of states $\eta_x$, $x\in\Z$.

An operator $\tau$ denotes spatial shifts of configurations: $(\tau_y \eta)_x := \eta_{x+y}$. 
A superposition of a spatial shift $\tau$ and a function $g:\Z\to\R$ is denoted by $(\tau_x g)(\eta) := g(\tau_x \eta)$.
A superposition of a spatial shift $\tau$ and a measure $\mu$ is denoted by $(\tau_x\circ \mu)(\{\eta\}) := \mu(\{\tau_x \eta\})$.

Let us denote $\Pi_x(\eta):=\eta_x$, and write $\Pi$ for $\Pi_0$. For example, $\tau_x \Pi = \Pi_x$.

On $\iX$ we consider the cylindric sigma-algebras generated by the maps $\{\Pi_x\}_{x\in\Z}$ 
\[
	\F_K = \sigma\,\{\Pi_x,\ x\in K\}, \quad \F_N:= \F_{[-N,2N]\cap\Z},\quad N\in\N.
\]
The sigma-algebra generated by $\{\F_K\}_{K\Subset \Z}$ will be denoted by $\F$.

Let $r\geq0$ be such that 
\[
	c_{x,x+1}(\eta) = c_{x,x+1} (\eta_{x-r}, \eta_{x-r+1}, \dots, \eta_{x+r+1}).
\]
Although, by the definition of $c_{x,x+1}$ it would be sufficient for us to take $r=1$, in order to demonstrate generality of the approach (see Remark~\ref{rem:main_remark}, item \ref{item:generality}), we keep the general notation in the proofs.

For any $K\subset\Z$, $\eta\in\iX_{K}$, $\omega\in\iX$, we write $\eta\*\omega$ for the configuration in $\iX$:
\[
	(\eta \* \omega)_x = \left\{ 
												\begin{aligned} 
													&\eta_x, &\quad &x\in K,\\
													&\omega_x, &\quad &x\in K^c.
												\end{aligned} 
											\right.  
\]	
The number of particles and its density in $K$ will be denoted correspondingly,
\begin{equation}\label{eq:particle_number}
	N_K(\eta) := \sum_{j\in K} \eta_j, \qquad M_K(\eta):=\frac{N_K(\eta)}{|K|}.
\end{equation} 

For $K\Subset\Z$, $\eta\in\iX_K$, $\omega\in\iX$, and $q$ given by \eqref{eq:Hamiltonian}, we define Hamiltonians with free boundary condition and a boundary condition $\omega$ correspondingly by 
\begin{align}
	H_K(\eta) &:= \sum_{x\in K \,\text{and}\, x+1\in K} q \, \eta_x \eta_{x+1}, \label{eq:Hamiltonian_with_free_condition} \\ 
	H_K(\eta\*\omega) &:= \sum_{x\in K \,\text{or}\, x+1\in K} q \, (\eta\*\omega)_x (\eta\*\omega)_{x+1}.  \label{eq:Hamiltonian_with_boundary_condition}
\end{align}
Change of the energy $H$ after swap of particles is denoted by 
\[
	\Delta_{x,y}H(\eta) := H(\eta^{x,y})-H(\eta), \qquad \Delta_xH(\eta) := H(\eta^x)-H(\eta),
\]
where $\eta^x$ equals $\eta$ with $\eta_x$ replaced by $1-\eta_x$.

The grand canonical Gibbs distributions on $K\Subset\Z$ with free boundary condition and Hamiltonians $H\equiv 0$ and $H\equiv H_K$ are defined correspondingly by 
\begin{equation}\label{eq:Gibbs_grandcanonical_homogeneous}
	\nu^{\la,-}_K(\eta) := \frac{1}{Z^{\la,-}_K} e^{\la N_K(\eta)}, \quad 
	\nu^{\la,+}_K(\eta) := \frac{1}{Z^{\la,+}_K} e^{-H_K(\eta) + \la N_K(\eta)},
\end{equation}
where $Z^{\la,\pm}_K$ are the normalizing constants called grand canonical partition functions.
Note that $\nu^\la_N$ in \eqref{eq:Gibbs} is defined in the same way on $[-N,2N]\cap\Z$ with the Hamiltonian $H=H_{[0,N]}$ given by \eqref{eq:Hamiltonian}.

The probability measure $\nu_K^{\la,+}(\eta)\dif\eta$ defines the invariant measure on $\iX_K$ for the exclusion process with exchange rates given by \eqref{eq:nonlocal_exchange_rates} for $x,x+1\in K$, while $\nu^{\la,-}_K(\eta)\dif\eta$ is invariant for the simple symmetric exclusion process on $K$.
Here $\dif\eta := \prod_{x} \dif \eta_x$ denotes the counting measure.

The grand canonical Gibbs distributions on $K\Subset\Z$ with a boundary condition $\omega\in\iX$ and Hamiltonian $H\equiv H_K(\cdot\,\vert\,\omega)$ is  defined by 
\begin{equation}\label{eq:Gibbs_grandcanonical_homogeneous_boundary}
	\nu^{\la,+}_K(\eta\,\vert\,\omega) := \frac{1}{Z^{\la,+}_K(\omega)} e^{-H_K(\eta\,\vert\,\omega) + \la N_K(\eta)},
\end{equation}
where the corresponding normalizing constant $Z_K^{\la,+}(\omega)$ is the partition function with the boundary $\omega$. Note, that the analogous definition of $\nu^{\la,-}_K(\eta\,\vert\,\omega)$ coincides with the one of $\nu^{\la,-}_K(\eta)$. 

We will often abuse notations and write 
\begin{equation}\label{eq:abuse_notations_of_Gibbs}
	\nu^\la_N (\dif \eta) := \nu^\la_N(\eta)\dif\eta,\quad N\in\N; \qquad \nu^{\la,\pm}_K (\dif \eta) := \nu^{\la,\pm}_K(\eta)\dif\eta, \quad K\Subset\Z.
\end{equation}

Next, let us denote the set of the probability measures on $(\iX_K,\F_K)$ by $\M(\iX_K)$, and the ones on $(\iX,\F)$ by $\M(\iX)$.
We extend $\mu\in\M(\iX_K)$ to $\M(\iX)$ by  
\begin{equation}\label{eq:mu_N_extended}
	\mu(\eta\*\omega):=\mu(\eta),\quad \eta\in\iX_K,\ \omega\in\iX.
\end{equation}
By abuse of notations we will write $\mu(\eta)$ for  $\mu(\{\eta\})$.

For $\pi:[0,T]\to\M(\iX)$ and $G\in\mathcal{D}$ (see \eqref{eq:smooth_functions}), we put
\[
	\<\pi(t), G\> := \int[-1][2] \pi(t,\dif u)G(u). 
\]

\section{Scaling limit}\label{sec:scaling_limit}
Unless otherwise stated we assume that \eqref{assum:lim_of_init_distrib} hold true.
\subsection{Tightness of $\{\mathbb{P}^N\}_{N\in\N}$}
The following lemma states that the set of probability distributions of the empirical measures $\{\pi^N\}_{N\in\N}$ is tight on the Skorohod space. 
\begin{lemma}\label{lem:Tightness_of_PN}
	Let $T>0$ be fixed and $\mathbb{P}^N$ be given by Definition~\ref{def:P_N_and_E_N}.
Then the set $\{\mathbb{P}^N\}_{N\in\N}$ is pre-compact on $D([0,T]\to \mathcal{M}_+)$.
Moreover, any limiting distribution $\mathbb{P}^\infty$ is supported on $C([0,T],\M)$.

\end{lemma}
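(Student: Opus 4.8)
The plan is to reduce tightness of the measure-valued processes to tightness of a countable family of real-valued processes, and then to verify Aldous' criterion for each of them through the Dynkin martingale decomposition. Since every $\pi^N_t$ is supported on the compact set $[-1,2]$ and carries total mass $\pi^N_t([-1,2]) = \frac1N\sum_{x=-N}^{2N}\eta^N_x(t) \le 3 + \frac1N \le 4$, uniformly in $t$ and in the configuration, all the measures $\{\pi^N_t\}$ lie in a fixed compact subset of $\mathcal{M}_+$ for the weak topology; this supplies the compact containment condition for free. By the standard criterion for tightness of measure-valued jump processes, it then suffices to show that for every $G$ in $\mathcal{D}$ (it is enough to use a countable dense subfamily, which still determines the topology) the real-valued process $X^{N,G}_t := \langle\pi^N_t, G\rangle = \frac1N\sum_x G(\tfrac xN)\eta^N_x(t)$ is tight in $D([0,T]\to\R)$.

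For fixed $G\in\mathcal{D}$ I would work with the Dynkin martingale
\[
	M^{N,G}_t := X^{N,G}_t - X^{N,G}_0 - \int[0][t] N^2 L_N X^{N,G}_s\,\dif s,
\]
which is a mean-zero martingale under $\mathbb{P}^N$. The crucial computation is that, thanks to the gradient conditions \eqref{eq:DB_SSEP} and \eqref{assum:gradient_condition}, a summation by parts turns $N^2 L_N X^{N,G}$ into $\frac1N\sum_x \Delta_N G(\tfrac xN)\,W_x(\eta^N_s)$ plus finitely many boundary terms, where $\Delta_N G(\tfrac xN) = N^2(G(\tfrac{x+1}N)-2G(\tfrac xN)+G(\tfrac{x-1}N))$ is the discrete Laplacian and the $W_x\in\{\tau_x h,\ \tau_x\Pi\}$ are the bounded local functions furnished by the gradient condition. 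Because $\Delta_N G$ is bounded by $\|G''\|_\infty + o(1)$, because $0\le W_x\le C$, and because there are $O(N)$ interior bonds carrying the prefactor $1/N$ (and only a fixed number of boundary bonds, where the rates \eqref{eq:exchange_rates_on_the_boundaries} are still bounded and $G$ vanishes at the four endpoints), the drift is uniformly bounded: $|N^2 L_N X^{N,G}_s|\le C(G)$ for all $s$ and all $\eta$. No input from the replacement lemma is needed here; the gradient structure alone absorbs the factor $N^2$.

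Next I would control the martingale via its predictable quadratic variation,
\[
	\langle M^{N,G}\rangle_t = \int[0][t]\Big( N^2 L_N (X^{N,G}_s)^2 - 2 X^{N,G}_s\, N^2 L_N X^{N,G}_s\Big)\dif s.
\]
Each bond swap changes $X^{N,G}$ by at most $\frac1N|G(\tfrac{x+1}N)-G(\tfrac xN)|\le \|G'\|_\infty N^{-2}$, so the contribution of a single bond to the quadratic variation is $O(N^{-4})$; multiplying by the time speed-up $N^2$ and the bounded rates and summing over $O(N)$ bonds yields $\mathbb{E}^N[\langle M^{N,G}\rangle_T] = O(N^{-1})\to 0$, whence $\sup_{t\le T}|M^{N,G}_t|\to0$ in $L^2(\mathbb{P}^N)$ by Doob's inequality. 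The same per-swap estimate shows that the jumps of $t\mapsto X^{N,G}_t$ are bounded by $\|G'\|_\infty N^{-2}\to0$.

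Finally I would assemble these estimates into Aldous' criterion for $\{X^{N,G}\}$: tightness at each fixed time is immediate from $|X^{N,G}_t|\le 4\|G\|_\infty$, while for stopping times $\tau\le T$ and increments $\theta\le\delta$ the decomposition gives
\[
	|X^{N,G}_{\tau+\theta}-X^{N,G}_\tau| \le C(G)\theta + |M^{N,G}_{\tau+\theta}-M^{N,G}_\tau|,
\]
whose right-hand side is small in probability, uniformly in $\tau$, first as $N\to\infty$ (the martingale part, by the $L^2$ bound and optional sampling) and then as $\delta\to0$ (the drift part). This gives tightness of each $X^{N,G}$, hence of $\{\mathbb{P}^N\}$ on $D([0,T]\to\mathcal{M}_+)$. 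The vanishing jump bound $\|G'\|_\infty N^{-2}$ forces every limit point to have almost surely continuous trajectories, and the uniform mass bound keeps the limit inside the fixed compact set, so any limiting distribution is supported on $C([0,T]\to\M)$. The only genuinely delicate bookkeeping is the boundary region: the non-translation-invariant rates \eqref{eq:exchange_rates_on_the_boundaries} on the bonds adjacent to $x\in\{-1,0,N,N+1\}$ must be checked to contribute only $O(1)$ to the drift and $O(N^{-1})$ to the quadratic variation, but since these involve a bounded number of bonds with bounded rates, they cause no difficulty for tightness; their precise hydrodynamic role is deferred to the later identification of the limit.
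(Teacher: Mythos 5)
Your proposal is correct and takes essentially the same route as the paper: reduction to real-valued processes $\langle\pi^N,G\rangle$ for a dense family of test functions, Dynkin's martingale decomposition, the gradient conditions plus discrete summation by parts to bound the drift uniformly, the carr\'{e} du champ estimate of order $O(1/N)$ with Doob's inequality for the martingale part, Aldous' criterion, and the deterministic jump bound $\|G'\|_\infty N^{-2}$ to force continuity of limit points. The only cosmetic difference is that the paper works with the dense subfamily $\mathcal{D}_0$ of test functions supported away from $\{-1,0,1,2\}$, so the four interface bonds (where the gradient condition fails) and the summation-by-parts boundary terms vanish identically, whereas you keep general $G\in\mathcal{D}$ and bound their $O(1)$ contribution directly; both versions are valid.
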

\begin{proof}
To show tightness of $\{\mathbb{P}^N\}_{N\in\N}$ it is sufficient to check tightness of $\{\mathbb{P}^N\circ G^{-1}\}_{N\in\N}$ for a dense subset of \eqref{eq:smooth_functions} (cf.~\cite{KL1999,Mit1983}). 
Thus, it suffices to prove that the set of probability distributions of $\{\<\pi^N(t),G\>\}_{t\in[0,T]}$ is  pre-compact for any $G$ which belongs to 
\begin{equation}\label{eq:smooth_functions_dense_subset}
	\mathcal{D}_0 := \{ G\in \mathcal{D} \,\big\vert\, G\in C^\infty([-1,2]),\ \supp G\subset (-1,0)\cup(0,1)\cup(1,2) \}.
\end{equation}
By the Dynkin martingale formula, there  exists a martingale $M_t^N(G)$ such that 
\begin{equation}\label{eq:Dynkin_martingale_formula}
	\<\pi^N(t),G\> = \<\pi^N(0),G\> + \int[0][t] N^2 L_N \<\pi^N(s),G\> \dif s + M_t^N(G). 
\end{equation}
By a version of the Prohorov theorem by Aldous (see \cite[Theorem~1.3,\,Proposition~1.6]{KL1999}), and in view of \eqref{eq:Dynkin_martingale_formula}, it remains to prove the following estimates for all $t>0$ (cf.~\cite[Theorem~2.1]{KL1999}),
\begin{align}
	\sup_{N\in\N} &\mathbb{E}^N \big[ \sup_{0\leq s\leq t} (N^2 L_N \<\pi^N(s),G\>)^2 \big]<\infty, \label{eq:estimate1} \\
	\lim_{N\to\infty} &\mathbb{E}^N \big[ \sup_{0\leq s\leq t} M^N_s(G)^2 \big] =0. \label{eq:estimate2}
\end{align}

Let us show first that \eqref{eq:estimate2} holds true.
The increasing process $\<M^N\>_t$ given by the Doob-Meyer decomposition satisfies
\begin{gather*}
	\pd{\phantom{.}}{t} \<M^N(G)\>_t = \Gamma^N(\eta^N(t),G), \\
\Gamma^N(\eta,G) = \frac{1}{N^2} \sum_{x=-N}^{2N-1} c_{x,x+1}(\eta) \big[\partial_N G(\frac{x}{N}) (\eta_{x+1}-\eta_x)\big]^2,
\end{gather*}
where $\partial_N G(\frac{x}{N}) := N[G(\frac{x+1}{N}) - G(\frac{x}{N})]$ is the discrete gradient on $\{\frac{x}{N}\}_{x\in\Z}$ and $\Gamma^N$ is the so-called ''carr\'{e} du champs``.
Since, for some $C>0$,
\[
	|\Gamma^N(\eta,G)| \leq  \frac{C}{N}, \quad N\in\N,
\]
then, by the Doob maximal inequality \eqref{eq:estimate2} follows.

Let us deal now with \eqref{eq:estimate1}.
We denote the discrete Laplacian on $ \{\frac{x}{N}\}_{x\in\Z}$ by 
\[
	(\partial^2_N G)(\frac{x}{N}) := N^2 (G(\frac{x+1}{N}) + G(\frac{x-1}{N}) - 2 G(\frac{x}{N})).
\]
The following discrete analogue of the Newton formula holds true, 
\[
\sum_{x=a}^b (\partial_N f)(x) (\partial_N g)(x) = - \sum_{x=a+1}^b f(x) (\partial^2_N g)(x) +f(b+1) (\partial_N g)(b) - f(a) (\partial_N g)(a).
\]
Then, by \eqref{assum:gradient_condition} and the Newton formula, for all $G\in\mathcal{D}_0$, as $N\to\infty$,

\begin{align}
	N^2 L_N \<\pi,G\>(\eta) &= - \sum_{x=-N}^{2N-1} c_{x,x+1}(\eta) (\eta_{x+1}-\eta_x) (\partial_N G)(\frac{x}{N}) \notag\\ 
											&= - \frac{1}{N} \big[ (\sum_{x=-N}^{-1} + \sum_{x=N+1}^{2N-1})  (\partial_N \Pi_x)(\eta) + \sum_{x=0}^{N} (\partial_N \tau_xh)(\eta) \big]  (\partial_N G)(\frac{x}{N}) \notag\\ 
											&= \frac{1}{N} \big[ (\sum_{x=-N}^{-1} + \sum_{x=N+1}^{2N-1})  \Pi_x(\eta) + \sum_{x=0}^{N} \tau_x h(\eta) \big]  (\partial^2_N G)(\frac{x}{N}). \label{eq:generator_regroup_G_0}
\end{align}
Hence \eqref{eq:estimate1} is bounded by $ C\!\int[-1][2] \partial_u^2 G(u)^2 \dif u$ and tightness of $\{\mathbb{P}^N\}_{N\in\N}$ is proven.

Proceeding similarly with
\[
	N^2 L_N \<\pi^N,G\>^2(\eta)-2\<\pi^N,G\>(\eta) L_N \<\pi^N,G\>(\eta),
\]
we conclude that its expectation with respect to $\mathbb{E}^N$ is bounded by $\frac{C}{N^2} (\int[-1][2] G'(u)^2\dif u)^2$ (see \cite[p.~127]{ELS1991}).
Hence, $\<\pi^N,G\>$ changes in a jump at most by $\frac{1}{N^2}\|G'\|_\infty$.
Therefore, the limiting distribution $\mathbb{P}^\infty$ is supported on the set of continuous trajectories $C([0,T],\M)$.
The proof is fulfilled.
\end{proof}

\subsection{Macroscopic equation}

Lemma~\ref{lem:Tightness_of_PN} says that there exists a limiting distribution $\mathbb{P}^\infty$ of $\{\mathbb{P}^N\}_{N\in\N}$. The following lemma ensures that $\mathbb{P}^\infty$ is supported on the absolutely continuous paths, which are weak solutions to \eqref{eq:PDE}.
\begin{lemma}\label{lem:Macroscopic_equation}
	For any limiting point $\pi^\infty$ of $\{\tilde{\pi}^{N,k}\}_{N\in\N}$ defined by \eqref{eq:average_empirical_density} there exists 
	\begin{equation}\label{eq:rho_in_L2_H1}
		\rho\in L^2([0,T],H^1(K)), \quad K\Subset (-1,0)\cup(0,1)\cup(1,2),\ T>0, 
	\end{equation} 
	such that the limiting distribution $\mathbb{P}^\infty$ is supported on the absolutely continuous paths:
	\begin{equation}\label{eq:limiting_density}
	\mathbb{P}^\infty \big[ \pi^\infty(\dif u,t ) = \rho(u,t)\dif u \,\big\vert\, \rho \text{ is a weak solution to } \eqref{eq:PDE}  \big] = 1,
\end{equation}
where $\rho_0$ is defined by \eqref{assum:lim_of_init_distrib} and a weak solution to \eqref{eq:PDE} is considered in the sense of Definition~\ref{def:very_weak_solution}.
\end{lemma}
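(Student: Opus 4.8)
The plan is to upgrade the tightness result (Lemma~\ref{lem:Tightness_of_PN}) into an identification of the limit points. By Lemma~\ref{lem:Tightness_of_PN}, any limiting distribution $\mathbb{P}^\infty$ is supported on $C([0,T],\M)$, so along a subsequence $\tilde{\pi}^{N,k}$ converges in distribution to some $\pi^\infty$. The first step is to establish absolute continuity: $\pi^\infty(\dif u,t)=\rho(u,t)\,\dif u$ with $\rho\in L^2([0,T],H^1(K))$ for every compact $K$ away from the boundary points $u\in\{-1,0,1,2\}$. I expect this regularity to come from an energy estimate: controlling $\mathbb{E}^N$ of the discrete Dirichlet form $\frac{1}{N}\sum_x (\eta_{x+1}-\eta_x)^2$ on compact sets, which forces the limit to have a square-integrable spatial derivative. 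The averaging in $\tilde{\pi}^{N,k}$ (the Cesàro-type double average in \eqref{eq:average_empirical_density}) is what makes this work and is the reason the theorem is stated for the averaged density rather than $\pi^N$ itself.

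The second and central step is to pass to the limit in the Dynkin martingale identity \eqref{eq:Dynkin_martingale_formula}, using the computation \eqref{eq:generator_regroup_G_0}. For $G\in\mathcal{D}_0$ that identity reads, after applying the gradient condition, that $\langle\pi^N(t),G\rangle-\langle\pi^N(0),G\rangle$ equals the time integral of
\[
	\frac{1}{N}\Big[ \big(\sum_{x=-N}^{-1}+\sum_{x=N+1}^{2N-1}\big)\Pi_x(\eta) + \sum_{x=0}^{N}\tau_x h(\eta)\Big](\partial^2_N G)(\tfrac{x}{N})
\]
plus a martingale vanishing in $L^2$ by \eqref{eq:estimate2}. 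The key analytic input is the \textbf{replacement lemma} (Lemma~\ref{lem:replacement}, cited as forthcoming): it lets me replace the microscopic local functions $\Pi_x$ and $\tau_x h$ by their expectations under the local Gibbs equilibrium at the empirical density. On the simple-exclusion regions $(-1,0)\cup(1,2)$ the expectation of $\Pi$ under $\nu^{\la^-_\rho,-}$ is just $\rho$ (as noted after \eqref{eq:Gibbs}), producing the linear term $\int\rho\,\partial^2_u G$; on $(0,1)$ the expectation of $h$ under $\nu^{\la^+_\rho,+}$ is $\Phi(\rho)$, producing $\int\Phi(\rho)\,\partial^2_u G$. After summation by parts using the discrete Newton formula, the boundary contributions at $u\in\{-1,0,1,2\}$ assemble into exactly the $G'(-1)\rho(s,-1_+)$, $G'(0)(\Phi(\rho(s,0_+))-\rho(s,0_-))$, etc.\ terms in Definition~\ref{def:very_weak_solution}; this is why the test functions in $\mathcal{D}$ vanish at the four boundary points but their derivatives there survive.

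To extend from $G\in\mathcal{D}_0$ (where tightness was proven) to all $G\in\mathcal{D}$, I would use a density argument: approximate $G\in\mathcal{D}$ by functions in $\mathcal{D}_0$ in an appropriate norm, controlling the error by the $L^1$ bound on $\rho$ together with the uniform bound $0\le\eta_x\le1$. Finally the initial condition $\rho(u,0)=\rho_0(u)$ follows from hypothesis \eqref{assum:lim_of_init_distrib} and the continuity of paths.

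\textbf{The main obstacle} is the replacement lemma and, bound up with it, the behaviour near the interior interfaces $u=0,1$. Away from the boundaries the replacement is the standard one-block/two-block estimate driven by the entropy and Dirichlet-form bounds (the entropy estimate is flagged as Lemma~\ref{lem:entropy_estimates}). The delicate point is that near $u=0$ and $u=1$ two different local equilibria meet — $\nu^{\la^-_\rho,-}$ on one side and $\nu^{\la^+_\rho,+}$ on the other — so the local averaging must be performed on each side separately and the boundary rates \eqref{eq:exchange_rates_on_the_boundaries} mediate between them. Justifying that the resulting boundary terms are well defined (i.e.\ that the one-sided limits $\rho(s,0_\pm)$ and $\Phi(\rho(s,0_\pm))$ exist in the weak sense required) despite the density being generically discontinuous there, as warned in Remark~\ref{rem:main_remark}, is where the genuine work lies; everything else is summation by parts and passage to the limit.
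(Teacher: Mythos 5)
Your bulk analysis (martingale vanishing, replacement lemma turning $\tau_xh$ into $\Phi(\rho)$ and $\Pi_x$ into $\rho$, initial condition from \eqref{assum:lim_of_init_distrib}) matches the paper, but there is a genuine gap in how you produce the boundary terms of Definition~\ref{def:very_weak_solution}, which are the whole point of this lemma. You run the Dynkin/generator computation only for $G\in\mathcal{D}_0$, i.e.\ test functions supported away from $\{-1,0,1,2\}$; for such $G$ the identity \eqref{eq:generator_regroup_G_0} contains no boundary contributions at all (every one of $G'(-1),G'(0),G'(1),G'(2)$ vanishes), so nothing can ``assemble into'' $G'(0)(\Phi(\rho(s,0_+))-\rho(s,0_-))$ --- your second step is internally inconsistent on this point. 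The proposed repair, a density argument from $\mathcal{D}_0$ to $\mathcal{D}$, cannot work: the family of identities indexed by $G\in\mathcal{D}_0$ carries no information whatsoever about the traces $\rho(s,0_\pm)$, $\Phi(\rho(s,0_+))$, etc.\ (the regularity \eqref{eq:rho_in_L2_H1} is interior, on $K\Subset(-1,0)\cup(0,1)\cup(1,2)$, and yields no boundary traces), so these terms are genuinely additional constraints that cannot be recovered by approximating test functions; concretely, any $G_n\in\mathcal{D}_0$ converging to $G\in\mathcal{D}$ must have $G_n''$ concentrating like a measure at the interface points, and passing to the limit in $\int\Phi(\rho)G_n''$ would require exactly the trace information you do not have. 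The paper instead derives the boundary terms microscopically: it applies the generator to the \emph{truncated} empirical density $\pi^{N,k}$ with general $G\in\mathcal{D}$, producing explicit single-site boundary contributions in \eqref{eq:generator_regroup_G_k} (e.g.\ $G'(0)[k\tau_{k-1}h-(k-1)\tau_kh]$), and then performs the double Ces\`{a}ro average in $k$ of \eqref{eq:average_empirical_density} precisely so that these single-site terms become block averages to which Lemma~\ref{lem:replacement} applies. That is the actual purpose of $\tilde{\pi}^{N,k}$ and the reason the statement involves the double limit in $k$ and $N$.

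Two smaller corrections. You attribute the Ces\`{a}ro averaging to the $H^1$ regularity step; as explained above it is needed for the boundary terms, while the energy estimate is quoted in the paper from \cite[Lemma~4]{ELS1990}, \cite[(3.34)]{ELS1991}. Moreover, the object you propose to control, $\frac{1}{N}\sum_x(\eta_{x+1}-\eta_x)^2$, cannot yield that estimate: for exclusion configurations it is the density of interfaces and stays of order one even in equilibrium at constant density (under a product measure of density $\rho$ its expectation is of order $\rho(1-\rho)$ per unit macroscopic length), so it does not converge to $\int|\partial_u\rho|^2\dif u$. The correct route is the variational energy estimate based on the entropy-production bound of Lemma~\ref{lem:entropy_estimates}, which is what the cited references carry out.
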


\begin{proof}
Derivation of the macroscopic equation \eqref{eq:PDE} is standard.
See for instance \cite[pp.~128-129]{ELS1991}.
For convenience of the reader, we give a sketch of the proof pointing out important differences, thus making our exposition self-contained.

Similarly to \eqref{eq:generator_regroup_G_0}, for the truncated empirical density $\pi^{N,k}$ defined by \eqref{eq:truncated_empirical_density} we get for all $G\in\mathcal{D}$, 
\begin{align}
	N^2 L_N &\<\pi^{N,k},G\>(\eta) = \frac{1}{N} \big[ \sum_{x=k}^{N-k} \tau_x h(\eta)  + \sum_{x=-N+k}^{-k} \Pi_x + \sum_{x=N+k}^{2N-k}\Pi_x \big] G''(\frac{x}{N}) \notag \\   
														& +G'(-1) [k\Pi_{-N+k-1} - (k-1)\Pi_{-N+k}] + G'(0) [(k-1)\tau_{-k} h-k\tau_{-k+1} h] \notag \\
														& +G'(0) [k\tau_{k-1} h-(k-1)\tau_{k} h] + G'(1) [(k-1)\tau_{N-k} h-k\tau_{N-k+1} h] \notag \\
														& +G'(1) [k\Pi_{N+k-1} - (k-1)\Pi_{N+k}] + G'(2) [(k-1)\tau_{2N-k} h-k\tau_{2N-k+1} h] \notag \\
														& + O(\frac{1}{N}), \ N\to\infty. \label{eq:generator_regroup_G_k}
\end{align}
Note that this time $-1,\,0,\,1,\,2$ may belong to the support of $G$.
For this reason, the boundary terms with $G'(-1)$, $G'(0)$, $G'(1)$, $G'(2)$ appear. 

As before, we can show that the quadratic variation of the martingale given by the Dynkin martingale formula for $\<\pi^{N,k},G\>$ vanishes uniformly in $k\in\N\cup\{0\}$. 
Hence,
\begin{equation}\label{eq:expectation_vanishes}
	\lim_{N\to\infty} \mathbb{E}^N \big[ \sup_{0\leq t \leq T} \big\vert \<\pi^{N,k}(t),G\> -  \<\pi^{N,k}(0),G\> - \int[0][t] N^2 L_N \<\pi^{N,k}(s),G\> \dif s  \big\vert  \big] = 0.
\end{equation}

Now we want to utilize the local equilibrium property of the process $\{\eta^N(t)\}_{t\geq0}$.
It states that, in the limit, local spatial averages of a local function $g$ at $\eta^N(t)$ may be replaced by an expectation of $g$ with respect to the Gibbs measure at a local density of $\eta^N(t)$.

\begin{lemma}[Replacement Lemma]\label{lem:replacement}
	Let $\La_N$ equal to one of the intervals $[-N,-1]$, $[0,N]$, $[N+1,2N]$ on $\Z$, and $g:\Z\to\R$ be local. Then
	\[
		\lim_{\eps\to 0_+} \limsup_{N\to\infty} \sup_{\substack{\{x\in\Z:\, B_{\eps N}(x) \subset \La_N\}}} \int[0][t]\mathbb{E}^N  \big[ V_{\eps N}(x, \eta^N(s))\big] \dif s =0,
	\]
	where, 
	\begin{equation}\label{eq:V_eps_N}
		V_{\eps N}(x,\eta) := \Big\vert \frac{1}{|B_{\eps N}|} \sum_{y\in B_{\eps N}(x)} g(\tau_y\eta) - \<g\> (M_{B_{\eps N}(x)}(\eta)) \Big\vert.
	\end{equation}
\end{lemma}
In order to apply Lemma~\ref{lem:replacement} to \eqref{eq:expectation_vanishes} and \eqref{eq:generator_regroup_G_k}, we need an additional spatial averaging in $k$. For this reason, we consider $\tilde{\pi}^{N,k}(t)$ define dy \eqref{eq:average_empirical_density}.
Repeating for $\<\tilde{\pi}^{N,k}(t), G\>$ the arguments made for $\<\pi^{N,k}(t), G\>$, we obtain an analogue of \eqref{eq:expectation_vanishes} (see \cite[(3.30)]{ELS1991}).
As a result, by Lemma~\ref{lem:replacement}, we conclude \eqref{eq:limiting_density}. 

By \cite[Lemma~4]{ELS1990}, \cite[(3.34)]{ELS1991}, for any limiting measure $\mathbb{P}^\infty$ of $\{\mathbb{P}^N\}_{N\in\N}$, and all $T>0$, $K\Subset(-1,0)\cup(0,1)\cup(1,2)$, there exists a constant $C>0$, such that  
			\[
				\mathbb{E}^\infty \big[ \int[0][T] \int[K] |\partial_u \rho(u,t)|^2 \dif u \dif t \big] \leq CT. 
			\]
Moreover, since $\rho$ is a limit of empirical densities, we have $0\leq \rho \leq 1$. Thus, \eqref{eq:rho_in_L2_H1} follows and the proof is fulfilled.
\end{proof}

Note that in comparison to \cite{ELS1991} we do not now values of $\rho(u,s)$ on the boundaries. 
As we will see in the next subsections, in our case different methods are needed to derive boundary conditions at the points $u\in\{-1,0,1,2\}$.

\subsection{Continuity of the flow}
\begin{lemma}\label{lem:Neumann_boundary}
	Any limiting density $\rho(u,t)$ of the process $\{\eta^N(t)\}_{N\in\N}$ given by Lemma~\ref{lem:Macroscopic_equation} satisfies \eqref{eq:Neum} and \eqref{eq:Neumann_boundary} in the sense of Definition~\ref{def:weak_Neumann}.
\end{lemma}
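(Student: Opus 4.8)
The plan is to extract the boundary conditions \eqref{eq:Neum} and \eqref{eq:Neumann_boundary} directly from the microscopic identity \eqref{eq:generator_regroup_G_k} for $N^2 L_N\<\pi^{N,k},G\>(\eta)$, by choosing test functions $G\in\mathcal{D}$ whose derivatives isolate individual boundary points. Recall from the proof of Lemma~\ref{lem:Macroscopic_equation} that, after spatial averaging in $k$ and an application of the Replacement Lemma~\ref{lem:replacement}, the bulk sums in \eqref{eq:generator_regroup_G_k} converge to the integrals appearing in Definition~\ref{def:very_weak_solution}, while the boundary terms carrying the factors $G'(-1)$, $G'(0)$, $G'(1)$, $G'(2)$ survive in the limit. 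First I would record that, by the detailed balance condition \eqref{assum:DB} and the equivalence of ensembles of Subsection~\ref{sec:equivalence_of_ensembles}, the averaged boundary statistics $\tau_x h(\eta)$ converge to $\Phi(\rho)$ on the $(0,1)$ side, and $\Pi_x(\eta)$ converges to $\rho$ on the $(-1,0)$ and $(1,2)$ sides; this is exactly what produces the combinations $\Phi(\rho(s,0_+))-\rho(s,0_-)$ and $\rho(s,1_+)-\Phi(\rho(s,1_-))$ multiplying $G'(0)$ and $G'(1)$ in the weak formulation.

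The key step is then a discrete integration-by-parts argument \emph{at the boundary}, in the spirit of Definition~\ref{def:weak_Neumann}. For the external boundaries $u=-1,2$: since the total number of particles in $[-N,2N]$ is conserved, there is no flux through $x=-N$ and $x=2N$, so the would-be flux terms $c_{-N,-N+1}(\eta_{-N+1}-\eta_{-N})$ and $c_{2N-1,2N}(\eta_{2N}-\eta_{2N-1})$ are absent from the generator \eqref{eq:generator}. Feeding this into the difference-quotient limit of Definition~\ref{def:weak_Neumann}, I would show that the integrated one-sided discrete gradient of $\rho$ at $-1_+$ and $2_-$ vanishes, giving \eqref{eq:Neum}. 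For the internal boundaries $u=0,1$, the gradient condition \eqref{assum:gradient_condition} rewrites the microscopic current across $x=0$ (resp. $x=N$) as a discrete gradient of $h$, and the current is continuous at a single bond by construction of the generator. Thus the discrete flux arriving from the $(-1,0)$ side, which limits to $\partial_u\rho(0_-,t)$, equals the discrete flux leaving into the $(0,1)$ side, which limits to $\partial_u\Phi(\rho(0_+,t))$; the same argument at $x=N$ yields the second identity in \eqref{eq:Neumann_boundary}. Each of these is phrased precisely as the vanishing double limit $\lim_{w\to u_+}\lim_{v\to u_-}\lim_{r\to 0}$ of Definition~\ref{def:weak_Neumann}.

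Concretely, I would select a sequence of test functions $G_\delta\in\mathcal{D}$ that approximate, near a chosen boundary point $u_0\in\{-1,0,1,2\}$, a function whose second derivative concentrates the bulk integral into a difference quotient straddling $u_0$, while keeping $G_\delta$ supported away from the other three boundary points. Passing to the limit $N\to\infty$ (then $k\to\infty$) in the averaged version of \eqref{eq:generator_regroup_G_k}, using \eqref{eq:limiting_density} to replace empirical densities by $\rho$, and finally $\delta\to 0$, converts the surviving boundary terms into precisely the one-sided derivative matchings of Definition~\ref{def:weak_Neumann}. The continuity of $\rho(\cdot,t)$ on each open interval, guaranteed by $\rho\in L^2([0,T],H^1(K))$ from \eqref{eq:rho_in_L2_H1}, justifies taking the one-sided traces.

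The main obstacle I expect is the interchange of limits: the definition of the weak Neumann condition requires the inner limit $r\to 0$ (a difference quotient of the \emph{macroscopic} density) to be taken before the spatial limits $w\to u_+$, $v\to u_-$, whereas the microscopic derivation naturally produces discrete gradients at scale $1/N$ with $N\to\infty$ taken first. Reconciling these — showing that the macroscopic one-sided difference quotients of $\rho$ inherit the balance satisfied by the microscopic discrete currents, uniformly enough to pass the limits in the prescribed order — is the delicate point. I would handle it by working at the level of the time-integrated quantities in \eqref{eq:expectation_vanishes}, where the $L^1$-in-time control and the $H^1$-regularity of \eqref{eq:rho_in_L2_H1} provide the compactness needed to justify the exchange, rather than attempting a pointwise-in-time argument.
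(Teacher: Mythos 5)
Your physical intuition (flux continuity plus mass conservation) is the right one, but the central mechanism you propose---extracting \eqref{eq:Neum} and \eqref{eq:Neumann_boundary} from \eqref{eq:generator_regroup_G_k} by choosing test functions $G_\delta\in\mathcal{D}$ that concentrate at a boundary point---cannot work. Every $G\in\mathcal{D}$ vanishes at $-1,0,1,2$, and for such test functions the limit of \eqref{eq:generator_regroup_G_k} is exactly the weak formulation of Definition~\ref{def:very_weak_solution}; undoing the integration by parts on each subinterval shows that the boundary terms $G'(0)\big(\Phi(\rho(s,0_+))-\rho(s,0_-)\big)$, etc., cancel \emph{identically} against the corner contributions of $\int[0][1]\Phi(\rho)G''$ and $\big(\int[-1][0]+\int[1][2]\big)\rho\,G''$, so the weak formulation is equivalent to the interior equations and carries no information whatsoever about one-sided derivatives at the four boundary points. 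Concretely, for a smoothed tent $G_\delta$ peaked at $0$ with $G_\delta'(0)=1$, the surviving terms $-\frac{2}{\delta}\int[0][\delta]\Phi(\rho)$, $+\Phi(\rho(s,\delta))$, $+\frac{2}{\delta}\int[-\delta][0]\rho$, $-\rho(s,-\delta)$ and $\Phi(\rho(s,0_+))-\rho(s,0_-)$ sum to $0$ as $\delta\to0$: you learn nothing, and the same cancellation occurs for any $G_\delta\in\mathcal{D}$. Relatedly, your claim that the discrete flux at a single bond near $x=0$ ``limits to $\partial_u\rho(0_-,t)$'' is unjustified: Lemma~\ref{lem:replacement} applies only to blocks $B_{\eps N}(x)\subset\La_N$, i.e.\ at macroscopic distance from the boundary, and the one-sided derivatives of $\rho$ at the boundary points are not even known to exist pointwise.

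The paper's proof instead uses quantities that do \emph{not} vanish at the boundary point, namely particle numbers in mesoscopic regions straddling it. Fix $v\in(-1,0)$, $w\in(0,1)$ and $r$ small. Block averages of $\tau\partial_N h$ near $w$ and of $\tau\partial_N\Pi$ near $v$ (both at macroscopic distance from $0$, so Lemma~\ref{lem:replacement} applies) converge to the time-integrated difference quotients $\frac1r\int[0][t]\big(\Phi(\rho(w+r,s))-\Phi(\rho(w,s))\big)\dif s$ and $\frac1r\int[0][t]\big(\rho(v+r,s)-\rho(v,s)\big)\dif s$. Their microscopic difference telescopes, via the gradient conditions, into a sum of the currents $\mathcal{C}_l$ over all sites $l$ between $[vN]$ and $[wN]$, and Dynkin's formula applied to $\Pi_x$ converts this into $\frac{1}{rN^2}$ times the change of particle number in that region; since $\eta_x\in\{0,1\}$ this is bounded by $w-v+\frac1N$, \emph{uniformly in $r$}. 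This uniform-in-$r$ bound is precisely what permits taking $r\to0$ first and then $w\to0_+$, $v\to0_-$, as Definition~\ref{def:weak_Neumann} prescribes---resolving the order-of-limits obstacle you flag. Your suggested fix for that obstacle ($H^1$ compactness) is unavailable: \eqref{eq:rho_in_L2_H1} holds only for $K\Subset(-1,0)\cup(0,1)\cup(1,2)$ and gives no control up to the boundary points. The same conservation argument, run one-sidedly from the sealed ends $x=-N$, $x=2N$ (where your observation that no flux term exists is correct), bounds the time-integrated difference quotient near $u=-1$ by $v+1$, yielding \eqref{eq:Neum}; this quantitative step, not the absence of the flux term alone, is what your proposal is missing.
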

\begin{proof}
For any $w\in(0,1)$, $r\in\R$ - small,
\begin{multline}
	\frac{1}{rN} \int[0][t] \mathbb{E}^N \big[ \frac{1}{2\eps N+1} \sum_{z=-\eps N}^{\eps N} \sum_{j=[wN]}^{[wN]+[rN]-1} \tau_{z+j} \partial_N h(\eta^N(s)) \big] \dif s\\ 
	= \frac{1}{r}\int[0][t] \mathbb{E}^N \big[ \frac{1}{2\eps N+1} \sum_{z=[wN]-\eps N}^{[wN]+\eps N} \big( \tau_{z+[rN]} h(\eta^N(s)) - \tau_{z} h(\eta^N(s)) \big)  \big]. \label{eq:macro_derivative}
\end{multline}
By Lemma~\ref{lem:Macroscopic_equation} and Lemma~\ref{lem:replacement}, as $N\to\infty$, $\eps\to 0_+$,  \eqref{eq:macro_derivative} converges to 
\begin{equation}\label{eq:dif_Phi}
	\frac{1}{r} \int[0][t] \big(\Phi(\rho(w+r,s)) - \Phi(\rho(w,s))\big) \dif s.
\end{equation}
Similarly, for $v\in(-1,0)$ and $\Pi(\eta)=\eta_0$ in place of $h(\eta)$, we obtain in the limit
\begin{equation}\label{eq:dif_Pi}
	\frac{1}{r} \int[0][t] \big(\rho(v+r,s) - \rho(v,s)\big) \dif s.
\end{equation}
By the Dynkin formula applied to $\Pi_x(\eta) = \eta_{x}$,
\[
	 	\mathbb{E}^N \big[ \eta_x^N(t) - \eta_x^N(0) \big] = N^2 \int[0][t] \mathbb{E}^N \big[ \mathcal{C}_x(\eta^N(s)) \big] \dif s, 
\]
where
\[
	\mathcal{C}_x(\eta) = c_{x,x+1}(\eta)(\eta_{x+1} - \eta_{x})-c_{x-1,x}(\eta)(\eta_{x} - \eta_{x-1}).
\]
Then, subtracting from the first formula in \eqref{eq:macro_derivative} the analogous one at $v\in(-1,0)$, we get
\begin{align*}
	\frac{1}{rN} \int[0][t] \mathbb{E}^N &\big[\frac{1}{2\eps N{+}1} \!\!\sum_{x=-\eps N}^{\eps N}\!\! \sum_{j=0}^{[rN]-1} \big( \tau_{x+j+[wN]} \partial_N h(\eta^N(s) - \tau_{x+j+[vN]} \partial_N \Pi(\eta^N(s) \big)  \big] \dif s \\
	&\quad = \frac{1}{r} \int[0][t] \mathbb{E}^N \big[ \frac{1}{2\eps N{+}1} \sum_{x=-\eps N}^{\eps N} \sum_{j=0}^{[rN]-1} \sum_{l=[vN]}^{[wN]-1} \mathcal{C}_{x+j+l}(\eta^N(s)) \big] \dif s\\
	&\quad = \frac{1}{rN^2} \mathbb{E}^N \Big[ \frac{1}{2 \eps N{+}1} \sum_{x=-\eps N}^{\eps N} \sum_{j=0}^{[rN]-1} \sum_{l=[vN]}^{[wN]} \big[ \eta^N_{x+j+l}(t) - \eta^N_{x+j+l}(0) \big]  \Big]. 
\end{align*}
Since $\eta^N_{x+j+l}\in\{0,1\}$, the absolute value of the latter formula is bounded by $w-v+\frac{1}{N}$.
Hence, the absolute value of the difference of \eqref{eq:dif_Phi} and \eqref{eq:dif_Pi} is bounded by $w-v$. 
By Definition~\ref{def:weak_Neumann}, taking $w\to0_+$, $v\to0_-$, we prove the first equality in \eqref{eq:Neumann_boundary}.
The proof of the second equality is analogous and we omit it.
\end{proof}

\subsection{Continuity of the chemical potential}
\begin{lemma}\label{lem:Dirichlet_boundary}
	Any limiting density $\rho(u,t)$ of the process $\{\eta^N(t)\}_{N\in\N}$ given by Lemma~\ref{lem:Macroscopic_equation} satisfies \eqref{eq:Dirichlet_boundary} for almost all $t>0$.
\end{lemma}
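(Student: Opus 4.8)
The goal is to establish the continuity of the chemical potential at the interior boundaries $u=0$ and $u=1$, namely $\la^-_{\rho(0_-,t)} = \la^+_{\rho(0_+,t)}$ and $\la^+_{\rho(1_-,t)} = \la^-_{\rho(1_+,t)}$ for almost all $t>0$.

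The plan is to exploit the detailed balance condition \eqref{assum:DB} at the boundary bonds, which was precisely engineered via \eqref{eq:exchange_rates_on_the_boundaries} to match the two different Gibbs structures. First I would focus on the bond across $x=0$, i.e. the pair of sites at $\frac{x}{N}=0_-$ (governed by the simple symmetric process and its Gibbs measure $\nu^{\la,-}$) and $\frac{x}{N}=0_+$ (governed by the finite-range process and its Gibbs measure $\nu^{\la,+}$). The key microscopic identity is that the exchange rate $c_{-1,0}$ satisfies detailed balance with respect to a measure whose logarithmic derivative encodes both chemical potentials. The idea is that reversibility forces the current across the bond to vanish in equilibrium only when the two chemical potentials agree; out of equilibrium, a flux term survives whose expectation, after the hydrodynamic scaling, must be controlled.

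The central step is to use the entropy (or free energy) estimate together with the replacement lemma (Lemma~\ref{lem:replacement}) in a one-block/two-block fashion localized at the boundary. Concretely, I would consider the Dirichlet form associated to the boundary bond and show that its contribution to the rate of entropy production is of order $N$, so that after dividing by the speed-up factor $N^2$ and integrating in time, the expected squared current across the bond vanishes as $N\to\infty$. By the local equilibrium established in Lemma~\ref{lem:Macroscopic_equation}, the configuration just to the left of the bond is close to the Gibbs state $\nu^{\la^-_{\rho(0_-,t)},-}$ and just to the right to $\nu^{\la^+_{\rho(0_+,t)},+}$. The vanishing of the boundary current then translates, via the equivalence of ensembles at the level of potentials (Subsection~\ref{sec:equivalence_of_ensembles}), into the equality of the two potentials. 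The cancellation that makes the current vanish is exactly the factor $e^{-\frac12\Delta_{x,x+1}H}$ in \eqref{eq:exchange_rates_on_the_boundaries}, which symmetrizes the jump rate with respect to the mixed Gibbs weight and thereby ties $\la^-$ to $\la^+$.

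The main obstacle I anticipate is the heterogeneity of the two invariant measures meeting at the bond: unlike the classical setting, the left and right neighborhoods of the boundary relax to Gibbs states with \emph{different} Hamiltonians ($H\equiv0$ versus $H\equiv H_K$), so a naive two-block estimate does not directly close. I would handle this by applying the replacement lemma separately on each side — using $\nu^{\la,-}$-local equilibrium on $[-N,-1]$ and $\nu^{\la,+}$-local equilibrium on $[0,N]$ — and then matching them through the single boundary bond whose rate \eqref{eq:exchange_rates_on_the_boundaries} interpolates between the two. A second delicate point is that the equality holds only for \emph{almost all} $t$, reflecting that the boundary current estimate is an $L^1$-in-time bound rather than a pointwise one; this is why the conclusion is stated for almost every $t>0$, and I would extract the almost-everywhere equality from the vanishing of a time-integrated expectation.
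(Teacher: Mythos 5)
There is a genuine gap at the central step of your argument. The paper proves this lemma as an immediate corollary of the two-block estimate \emph{across} the interface, i.e.\ \eqref{eq:two_block_estimate_for_potentials} and \eqref{eq:two_block_estimate_for_potentials_ii} of Lemma~\ref{lem:two_block_estimate}, combined with Lemma~\ref{lem:Macroscopic_equation}; all the real work is in proving that two-block estimate. Your proposal tries to shortcut this by arguing that (i) the Dirichlet form of the single boundary bond vanishes (an $O(1/N)$ bound from the entropy estimate), (ii) the configurations immediately to the left and right of the bond are in local equilibrium at the trace densities $\rho(0_-,t)$, $\rho(0_+,t)$, and (iii) detailed balance across the bond then forces $\la^-=\la^+$. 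Step (ii) is not given by Lemma~\ref{lem:Macroscopic_equation} or the replacement lemma: those identify block densities with the macroscopic profile, and connecting the density of a \emph{microscopic} window at the bond to the macroscopic traces $\rho(0_\pm,t)$ already requires comparing a block adjacent to the boundary with blocks at macroscopic distance $\leq\eps$ --- that is, a two-block estimate, which is exactly what you are trying to avoid. Step (iii) is also unjustified as stated: the limiting two-sided measure is at best a \emph{mixture} over pairs $(\rho_1,\rho_2)$ of products of Gibbs states, and for a mixture the vanishing of one bond's Dirichlet form does not by itself force the mixing measure to be supported on $\{\la^-_{\rho_1}=\la^+_{\rho_2}\}$; establishing that support property is precisely the content of Lemma~\ref{lem:supp_of_w_mu}, which needs the ergodic representation (Lemma~\ref{lem:ergodic_representation}) and the variational formula with the arbitrary tilting parameters $\alpha_1,\alpha_2$.

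It is also worth noting where the coupling between the two ensembles actually happens in the paper, because it is not where you place it. The special boundary rates \eqref{eq:exchange_rates_on_the_boundaries} enter only through the global detailed balance \eqref{assum:DB}, whose role is to make $\nu^\la_N$ reversible so that the entropy estimate (Lemma~\ref{lem:entropy_estimates}) holds; they play no direct role in matching the potentials. The matching mechanism is the \emph{auxiliary} long-range exchange $\c^{\#_1,\#_2}_{x,z}$ of \eqref{eq:c_xz}, which swaps occupations between a site in the left block and a site in the right block and satisfies the mixed detailed balance \eqref{eq:DB_x_z_analogue}. Its Dirichlet form $\D^{K,\#_1,\#_2}_{x,z}$ is controlled by the telescopic estimate of Lemma~\ref{lem:A_eps}, $\D_{x,z}\leq \tilde{C}|x-z|\sum_y D^N_{y,y+1}\leq C\eps$, which uses \emph{all} $O(\eps N)$ bonds along the path (the boundary bond is just one of them) and vanishes only in the iterated limit $N\to\infty$, $\eps\to0_+$. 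So the information is transferred across the interface by many small contributions, not by the single bond. Your closing remark about the ``almost all $t$'' qualifier does agree with the paper: the conclusion comes from the vanishing of the time-integrated quantity $\int_0^t |F(\la^+_{\rho(s,0_+)})-F(\la^-_{\rho(s,0_-)})|\dif s$. But the core of the proof --- the mechanism that forces the two chemical potentials to agree --- is missing from your proposal.
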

\begin{proof}
	The statement of the lemma follows from the two block estimate at the macroscopic points $u=0$ and $u=1$. Namely, \eqref{eq:two_block_estimate_for_potentials} and \eqref{eq:two_block_estimate_for_potentials_ii} in Lemma~\ref{lem:two_block_estimate} below together with Lemma~\ref{lem:Macroscopic_equation} imply for all bounded continuous functions $F:\R\to\R_+$, 
\begin{equation*}
	\int[0][t] |F(\la^+_{\rho(s,0_+)})-F(\la^-_{\rho(s,0_-)})|\dif s =0, \quad  \int[0][t] |F(\la^-_{\rho(s,1_+)})-F(\la^+_{\rho(s,1_-)})|\dif s =0, \quad t>0,
\end{equation*}
from which \eqref{eq:Dirichlet_boundary} immediately follows.
\end{proof}

\section{The Law of Large Numbers for Gibbs Measures}

\subsection{Existence and uniqueness of Gibbs measures}\label{sec:exist_unique_Gibbs}
The following proposition shows that for any $\la\in\R$, each of the sequences $\{\nu_K^{\la,\pm}\}_{K\Subset\Z}$ defined by \eqref{eq:Gibbs_grandcanonical_homogeneous} as well as its counterparts with a boundary condition as in \eqref{eq:Gibbs_grandcanonical_homogeneous_boundary} have a unique limiting point, as $K$ tends to $\Z$.
\begin{proposition}\label{prop:Gibbs_measure_exist_unique}
	For any $\la\in\R$, there exist unique probability measures $\nu^{\la,+}$, $\nu^{\la,-}$ on $(\iX,\F)$, called Gibbs measures, such that for all local functions $g:\Z\to\R$ and all boundary conditions $\omega\in\iX$, the following limits hold true
	\[
		\lim_{K\uparrow\Z}\int[\iX_K] g(\eta) \nu^{\la,\pm}_K(\eta\,\vert\,\omega)\dif\eta =  \lim_{K\uparrow\Z}\int[\iX_K] g(\eta) \nu^{\la,\pm}_K(\eta)\dif\eta = \int[\iX] g(\eta) \nu^{\la,\pm}(\dif\eta).
	\]
	Moreover, $\{\nu^{\la,\pm}(\cdot\,\vert\,\omega)\}_{K\Subset\Z}$ are specifications for $\nu^{\la,\pm}$, namely, the following consistency holds, 
	\begin{equation}\label{eq:consistency}
		\nu^{\la,\pm}(\eta\,\vert\,\omega) := \nu^{\la,\pm}(\eta\,\vert\,\F_{K^c})(\omega) = \nu^{\la,\pm}_K (\eta \,\vert\, \omega), \quad \eta\in\iX_K,\ \omega\in\iX.
	\end{equation}
\end{proposition}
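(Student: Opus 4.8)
The plan is to treat the two families separately: the measure $\nu^{\la,-}$ is elementary, while $\nu^{\la,+}$ requires the one--dimensional transfer--matrix machinery. First I would dispose of the minus case. Since $H\equiv0$ in \eqref{eq:Gibbs_grandcanonical_homogeneous}, the distribution $\nu^{\la,-}_K$ is the product Bernoulli measure assigning to each site the value $1$ with probability $p_\la:=\frac{e^\la}{1+e^\la}$, independently. Hence $\nu^{\la,-}_K(\eta\,\vert\,\omega)=\nu^{\la,-}_K(\eta)$ for every $\omega$, the limit $\nu^{\la,-}:=\bigotimes_{x\in\Z}\mathrm{Ber}(p_\la)$ exists on $(\iX,\F)$, is independent of the boundary condition, is unique (local functions are uniformly dense in $C(\iX)$ on the compact space $\iX$, so convergence of all their expectations pins down a single Borel measure by Riesz), and trivially satisfies \eqref{eq:consistency}, since conditioning a product measure on the exterior does not alter the interior law.

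For $\nu^{\la,+}$ I would use the transfer matrix. On an interval $K=\{a,\dots,b\}$ the weight $e^{-H_K(\eta)+\la N_K(\eta)}$ factorizes through the strictly positive symmetric matrix
\[
	T=\begin{pmatrix} 1 & e^{\la/2} \\ e^{\la/2} & e^{\la-q} \end{pmatrix},\qquad T_{s,s'}=e^{-q\,s\,s'+\frac{\la}{2}(s+s')},\ s,s'\in\{0,1\},
\]
so that an $\omega$--boundary expectation of a local $g$ supported on $\{-m,\dots,m\}$ becomes a ratio of matrix products in which a fixed $g$--dependent block is flanked on both sides by large powers of $T$, the outer boundary vectors encoding either the free boundary or the spins of $\omega$ adjacent to $K$. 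By the Perron--Frobenius theorem $T$ has a simple leading eigenvalue $\La_1>0$ with strictly positive eigenvector $v$, and $\La_1>|\La_2|$ for the second eigenvalue $\La_2$, whence $\La_1^{-n}T^{n}\to v\,v^{\mathsf T}/\<v,v\>$ with exponential rate $|\La_2/\La_1|^{n}$. Feeding this into the ratio, the many factors of $T/\La_1$ surrounding the $g$--block drive both boundary vectors onto multiples of $v$, the normalizations cancel, and the limit as $K\uparrow\Z$ exists and is the same whether one starts from the free boundary or from any $\omega$. This yields existence, uniqueness, and the coincidence of the two limits asserted in the statement, and identifies $\nu^{\la,+}$ as the stationary nearest--neighbor Markov measure with transition probabilities $P_{s,s'}=\frac{T_{s,s'}v_{s'}}{\La_1 v_s}$ and invariant law $\pi_s\propto v_s^2$.

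It remains to verify the specification \eqref{eq:consistency}. Because the interaction is nearest--neighbor, the conditional law $\nu^{\la,+}(\,\cdot\,\vert\,\F_{K^c})$ of the limiting Markov measure depends on the exterior only through the spins of $\omega$ on the boundary $\partial K$, and by the Markov property of $\nu^{\la,+}$ this conditional law is exactly the finite--volume Gibbs measure $\nu^{\la,+}_K(\,\cdot\,\vert\,\omega)$; for a general finite $K\Subset\Z$ one applies this connected component by connected component. Equivalently one reads off the DLR identity directly from the transfer--matrix representation. The main obstacle, and the point where one--dimensionality is essential, is the uniform--in--$\omega$ washing out of the boundary: this is precisely the spectral gap $\La_1>|\La_2|$ supplied by Perron--Frobenius, and it is this step, rather than the essentially algebraic consistency check, that would fail in higher dimensions where phase transitions occur, as already flagged in the discussion on spatial dimension.
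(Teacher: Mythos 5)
Your proposal is correct, but it takes a genuinely different route from the paper. The paper handles both families abstractly by citing Georgii's monograph: compactness of $\{\nu^{\la,\pm}_K(\cdot\,\vert\,\omega)\}_{K\Subset\Z}$ yields cluster points, a general theorem there shows every cluster point is consistent with the finite-range specification built from the potential \eqref{eq:potential}, the one-dimensional uniqueness theorem for finite-range interactions shows the cluster point is unique and independent of $\omega$, and a further result gives convergence of the free-boundary distributions to the same limit. You replace all of this with the elementary observation that $\nu^{\la,-}$ is product Bernoulli, and with an explicit transfer-matrix/Perron--Frobenius analysis for $\nu^{\la,+}$. The trade-off is clear: the paper's route is shorter and applies verbatim to the more general finite-range rates allowed in Remark~\ref{rem:main_remark}, item~\ref{item:generality}, whereas your $2\times2$ matrix is tied to the nearest-neighbour Hamiltonian \eqref{eq:Hamiltonian} (range $r$ would force an enlarged alphabet $\{0,1\}^r$); in exchange, your argument is self-contained and quantitative, giving exponential decay of boundary influence at rate $|\La_2/\La_1|$, an explicit identification of $\nu^{\la,+}$ as the stationary Markov chain with kernel $P_{s,s'}=T_{s,s'}v_{s'}/(\La_1 v_s)$ and marginal $\pi_s\propto v_s^2$, and a transparent reason why uniqueness is one-dimensional (the spectral gap), which the paper only alludes to in its discussion of spatial dimension. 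One point you should make explicit: the limit $K\uparrow\Z$ in the statement runs over arbitrary finite sets, not just intervals; since the finite-volume measures factor over connected components of $K$ and the component containing $\supp g$ eventually contains any prescribed interval, your interval computation does cover the general case --- the same component-by-component remark you already make for the consistency check \eqref{eq:consistency}.
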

\begin{proof}
	We will prove the statement for $\nu^{\la,+}$.
In notations of \cite{Geo2011}, the state space of a configuration at a fixed point $\frac{x}{N}$ is described by the finite set $E=\{0,1\}$.
	Then, by \cite[Example~4.11,\,(2)]{Geo2011}, for any $\omega\in\iX$, the sequence $\{\nu_K^{\la,+}(\cdot\,\vert\,\omega)\}_{K\Subset\Z}$ is compact.
Thus, existence of a cluster point  $\nu^{\la,+}$ follows.
Since the Hamiltonian in \eqref{eq:Hamiltonian_with_boundary_condition} is defined by a finite range potential  
	\begin{equation}\label{eq:potential}
		\Psi_A(\eta) =
		\left\{
		\begin{aligned}
			&q \eta_x \eta_{x+1}, & &A=\{x,x+1\},\\
			&\la \eta_x, & &A=\{x\},\\
			&0, & &\text{otherwise},
		\end{aligned}
		\right.
	\end{equation}
	then, by \cite[Thereom~4.17]{Geo2011}, the cluster points satisfy \eqref{eq:consistency}.
	Next, by \cite[Theorem~8.39]{Geo2011}, for any fixed $\la\in\R$, the cluster point $\nu^{\la,+}$ is unique and independent of $\omega\in\iX$.
	Finally, \cite[Example~4.20,\,(1)]{Geo2011} implies that the Gibbs distributions $\{\nu^{\la,+}_K\}_{K\Subset\Z}$ with free boundary condition converge to $\nu^{\la,+}$ as well.
	For $\nu^{\la,-}$ the proof is the same and we omit it.
\end{proof}

We will often need to compare a Gibbs distribution with free boundary condition and a one with a boundary condition $\omega$.
To do this we remove interaction with the boundary and estimate the error.
For example, if $K$ is an interval $[a,b]\subset\Z$, then we have two bonds connecting the boundary:  $[a-1,a]$ and $[b,b+1]$.
Each of the bonds contributes to the Gibbs distribution \eqref{eq:Gibbs_grandcanonical_homogeneous} at most $\Err=e^{2|q|}$ (cf. \eqref{eq:Hamiltonian_with_free_condition} and~\eqref{eq:Hamiltonian_with_boundary_condition}).
Hence the following estimate holds true   
\begin{equation}\label{eq:boundary_vs_free_boundary}
	\Err^{-2} \nu^{\la,+}_{[a,b]}(\eta) \leq \nu^{\la,+}_{[a,b]}(\eta\,\vert\,\omega) \leq \Err^2 \nu^{\la,+}_{[a,b]}(\eta), \quad \eta\in \iX_{[a,b]}.
\end{equation}
Similarly, for any $c\in(a,b)$, $\eta\in\iX_{[a,c]}$, $\tilde{\eta}\in\iX_{[c,b]}$,
\begin{equation}\label{eq:split_interval}
	\Err^{-2} \nu^{\la,+}_{[a,c]}(\eta)\, \nu^{\la,+}_{[c,b]}(\tilde{\eta}) \leq \nu^{\la,+}_{[a,b]}(\eta\*\tilde{\eta}) \leq \Err^2 \nu^{\la,+}_{[a,c]}(\eta)\, \nu^{\la,+}_{[c,b]}(\tilde{\eta}).
\end{equation}

\subsection{Equivalence of Ensembles}\label{sec:equivalence_of_ensembles}
	
Additionally to the grand canonical partition functions $Z^{\la,\pm}_K$ with free boundary defined in \eqref{eq:Gibbs_grandcanonical_homogeneous} let us consider the canonical partition functions (with free boundary), 
	\begin{equation}\label{eq:partition_function_canonical}
		Z_{K,n}^\pm := \sum_{\eta\in\iX_K} \1_{\{N_K(\eta)=n\}}\, e^{-H_K^\pm(\eta)},
	\end{equation}
where $H^-_K\equiv0$ and $H^+_K$ equals $H_K$ in \eqref{eq:Hamiltonian_with_free_condition}.
In the following proposition we remind the reader the equivalence of ensembles at the level of potentials.
\begin{proposition}[{\cite[Proposition~3.9]{Geo1979}}]\label{prop:equivalence_of_ensembles}
	Suppose that, $\la\in\R$, $\rho\in [0,1]$, $\rho_l := \frac{n_l}{2l} \to \rho$, $l\to\infty$. Then the following limits exist
	\begin{equation}\label{eq:def_q_p}
		q^\pm(\rho) := \lim_{l\to\infty} \frac{\log Z^\pm_{B_l,n_l}}{2l}, \quad p^\pm(\la) := \lim_{l\to\infty} \frac{\log Z_{B_l}^{\la,\pm}}{2l}.
	\end{equation}
The functions $q^\pm$ are concave and $p^\pm$ are convex.
The following relations between $p^\pm$ and $q^\pm$ hold true
	\begin{align}
		q^\pm(\rho) &= \inf_{\la\in\R} [\, p^\pm(\la) - \rho\la \, ], \qquad \rho\in[0,1], \label{eq:q_inf_p}\\
		p^\pm(\la)  &= \max_{\rho\in[0,1]} [\, q^\pm(\rho) + \rho\la \,], \qquad \la\in\R. \label{eq:p_max_q}
	\end{align}
	Moreover, for any $\rho \in[0,1]$ there exist unique $\la_\rho^\pm\in[-\infty,\infty]$ such that
	\begin{equation}\label{eq:equation_rho_la}
		p^\pm(\la_\rho^\pm) - \rho \la_\rho^\pm - q^\pm(\rho) = 0.
	\end{equation}
\end{proposition}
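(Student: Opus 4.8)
The plan is to establish the four assertions in turn; throughout I treat the $+$ and $-$ ensembles uniformly, the only difference being the finite-range Hamiltonian $H_K^\pm$ and the associated boundary factor $\Err=e^{2|q|}$, which collapses to $1$ when $H^-\equiv0$. I would first prove existence of the two limits by (super)additivity. Splitting a box $[a,b]$ at an interior point $c$ and using that the single bond crossing the cut contributes at most $|q|$ to $H^\pm$ gives the factorization
\[
  \Err^{-1}\,Z_{[a,c]}^{\la,\pm}Z_{[c+1,b]}^{\la,\pm}\le Z_{[a,b]}^{\la,\pm}\le \Err\,Z_{[a,c]}^{\la,\pm}Z_{[c+1,b]}^{\la,\pm},
\]
so $\log Z_{B_l}^{\la,\pm}$ is additive up to an $O(1)$ error and Fekete's lemma yields $p^\pm(\la)$. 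The same concatenation at fixed particle number $n=n_1+n_2$ gives approximate superadditivity of $\log Z_{B_l,n}^\pm$; together with the fact that adjusting $n$ by $o(l)$ perturbs $\tfrac{1}{2l}\log Z_{B_l,n}^\pm$ by $o(1)$, this shows $\tfrac{1}{2l}\log Z_{B_l,n_l}^\pm$ converges to a limit $q^\pm(\rho)$ depending only on $\rho=\lim\rho_l$.

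Convexity of $p^\pm$ is then immediate, since each $\tfrac{1}{2l}\log Z_{B_l}^{\la,\pm}$ is a cumulant generating function in $\la$, with second derivative $\tfrac{1}{2l}\mathrm{Var}_{\nu^{\la,\pm}_{B_l}}(N_{B_l})\ge0$, and convexity survives pointwise limits. Concavity of $q^\pm$ follows from the superadditive bound: for $\rho=t\rho_1+(1-t)\rho_2$ with rational $t$ one fills a box with adjacent blocks of densities $\rho_1,\rho_2$ in proportion $t:(1-t)$ to obtain $q^\pm(\rho)\ge t\,q^\pm(\rho_1)+(1-t)\,q^\pm(\rho_2)$, and extends to all $t\in[0,1]$ by continuity.

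Next I would establish the Legendre duality. The exact layer decomposition $Z_{B_l}^{\la,\pm}=\sum_{n=0}^{|B_l|}e^{\la n}Z_{B_l,n}^\pm$ has only $O(l)$ terms, so a Laplace estimate (the sum is trapped between its largest term and $|B_l|+1$ times that term) gives, after dividing by $2l$ and passing to the limit, $p^\pm(\la)=\max_{\rho\in[0,1]}[q^\pm(\rho)+\rho\la]$, which is \eqref{eq:p_max_q}; here I use that the concave, continuous $q^\pm$ on the compact interval $[0,1]$ forces uniform convergence of $\tfrac{1}{2l}\log Z_{B_l,\lfloor 2l\rho\rfloor}^\pm$. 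Since $q^\pm$ is concave and upper semicontinuous it equals its own biconjugate, so inverting \eqref{eq:p_max_q} by Fenchel--Moreau yields \eqref{eq:q_inf_p}.

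Finally, \eqref{eq:equation_rho_la} says precisely that the infimum in \eqref{eq:q_inf_p} is attained at $\la=\la_\rho^\pm$, i.e. $\rho\in\partial p^\pm(\la_\rho^\pm)$; existence of an attaining point follows from coercivity of $\la\mapsto p^\pm(\la)-\rho\la$ for $\rho\in(0,1)$, while the boundary densities force $\la_\rho^\pm=-\infty$ at $\rho=0$ and $+\infty$ at $\rho=1$. The main obstacle is uniqueness, which demands strict convexity of $p^\pm$, equivalently the absence of a phase transition, and this is exactly where one-dimensionality is essential. For the $-$ ensemble the site-independence gives the closed form $p^-(\la)=\log(1+e^\la)$; for the $+$ ensemble I would diagonalize the $2\times2$ transfer matrix of the nearest-neighbour potential \eqref{eq:potential}, so that $p^+(\la)=\log\La(\la)$ with $\La(\la)$ the Perron--Frobenius eigenvalue, which is real-analytic and strictly convex because $(p^+)''(\la)$ equals the strictly positive limiting variance per site. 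Strict convexity makes $(p^\pm)'$ a strictly increasing bijection $\R\to(0,1)$, which pins down a unique finite $\la_\rho^\pm$ for every $\rho\in(0,1)$ and completes the proof.
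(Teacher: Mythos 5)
Your proof is correct, but note that the paper itself contains no proof of this statement: it is quoted directly from Georgii \cite[Proposition~3.9]{Geo1979}, so any argument you give is necessarily a different route, namely a self-contained reconstruction of the classical one. Your chain of ideas is the standard one and it is sound: (a) subadditivity up to the bounded boundary-bond error $e^{O(|q|)}$ plus Fekete's lemma for \eqref{eq:def_q_p} (this is exactly the device the paper uses elsewhere, cf.\ \eqref{eq:boundary_vs_free_boundary}, \eqref{eq:split_interval} and the remark ``by the sub-additivity argument'' in the proof of Lemma~\ref{lem:entropy_estimates}); (b) convexity/concavity from cumulant generating functions and block mixing; (c) the largest-term (Laplace) principle over the $O(l)$ canonical layers for \eqref{eq:p_max_q}; (d) Fenchel--Moreau for \eqref{eq:q_inf_p}; (e) one-dimensionality via the $2\times2$ transfer matrix to get strict convexity of $p^+$, hence uniqueness of $\la^\pm_\rho$ in \eqref{eq:equation_rho_la}, with $\la^\pm_0=-\infty$, $\la^\pm_1=+\infty$ handled separately. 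What your route buys is an elementary, explicit proof tailored to the nearest-neighbour rates \eqref{eq:Hamiltonian} (indeed your transfer-matrix eigenvalue reproduces the formula for $p(\la)$ in the paper's Examples subsection); what the citation buys is generality, since Georgii's statement covers the general finite-range potentials allowed in Remark~\ref{rem:main_remark}, item 2, without redoing the computation. Two steps deserve tightening. First, your claim that changing $n$ by $o(l)$ perturbs $\tfrac{1}{2l}\log Z^\pm_{B_l,n}$ by $o(1)$ is true but not because each unit step costs $O(1)$: near the extreme densities a step costs $O(\log l)$, and the claim survives only via the bound $(n+1)Z^\pm_{B_l,n+1}=e^{O(|q|)}(2l+1-n)Z^\pm_{B_l,n}$ together with $\delta\log(1/\delta)\to0$. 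Second, the finite-volume canonical free energies are only concave up to $O(1/l)$ corrections, so the Dini-type ``concave pointwise implies uniform'' argument must be stated for approximately concave functions; alternatively, you can bypass uniformity entirely by extracting a subsequence along which the maximizing layer $n_l^*/2l$ converges and invoking that the limit in \eqref{eq:def_q_p} holds for arbitrary sequences $n_l/2l\to\rho$, which you have already established.
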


For any local function $g$ on $\{0,1\}^\Z$ and $\rho\in[0,1]$, we denote 
\begin{equation}\label{eq:expectation_wrt_canonical_Gibbs}
	\<g\>^\pm(\rho) := \<\nu^{\la_\rho^\pm,\pm},g\> = \int[\{0,1\}^\Z] g(\eta) \nu^{\la_\rho^\pm,\pm}(\dif \eta),
\end{equation}
where $\la_\rho^\pm$ are defined by Proposition~\ref{prop:equivalence_of_ensembles}.
For $h$ from \eqref{assum:gradient_condition} we define  
	\begin{equation}\label{eq:expectation_homogeneous}
		\Phi(\rho):= \<h\>^{+}(\rho). 
\end{equation} 

Note that $\<\Pi\>^-(\rho)= \rho$, for $\Pi(\eta)=\eta_0$. That is why a linear equation is obtained on $(-1,0)\cup(1,2)$ in \eqref{eq:PDE}.

\subsection{The Law of Large Numbers}
We denote for $g:\Z\to\R$ - local, 
\begin{equation}\label{eq:V_l}
	V_l^\pm(x,\eta) := \big\vert \frac{1}{2l+1} \sum_{y=-l}^{l} \tau_y g(\eta) - \<g\>^\pm(M_{B_l(x)}(\eta)) \big\vert.
\end{equation}
First, let us remind the reader the following ergodic theorem for translation invariant Gibbs measures.
\begin{lemma}[{\cite[(4.16)]{FHU1991}}]\label{lem:LLN_Gibbs_translation_invariant}
	For any $g:\Z\to\R$ - local,
	\[
		\lim_{l\to\infty} \sup_{\rho\in[0,1]} \int[\iX] V_l^\pm(0,\eta) \nu^{\la^\pm_\rho,\pm}(\dif\eta) =0. 
	\]
\end{lemma}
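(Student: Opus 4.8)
The plan is to handle the troublesome $\sup_{\rho\in[0,1]}$ by a conditioning argument that eliminates $\rho$ altogether, and only then to prove a convergence that no longer sees the chemical potential. Let $r$ be the range of $g$, so that $V_l^\pm(0,\cdot)$ is $\F_{B_{l+r}}$-measurable and bounded by $2\|g\|_\infty$. First I would restrict $\nu^{\la_\rho^\pm,\pm}$ to the box $B_{l+r}$: combining the DLR consistency \eqref{eq:consistency} with the comparison \eqref{eq:boundary_vs_free_boundary}, the marginal of $\nu^{\la_\rho^\pm,\pm}$ on $\iX_{B_{l+r}}$ is bounded above by $\Err^2\,\nu_{B_{l+r}}^{\la_\rho^\pm,\pm}$ with $\Err=e^{2|q|}$ independent of $l$ and $\rho$, whence
\[
	\int[\iX] V_l^\pm(0,\eta)\,\nu^{\la_\rho^\pm,\pm}(\dif\eta) \le \Err^2 \int[\iX_{B_{l+r}}] V_l^\pm(0,\eta)\,\nu_{B_{l+r}}^{\la_\rho^\pm,\pm}(\dif\eta).
\]
Then I would decompose the finite-volume integral over the level sets of $n:=N_{B_{l+r}}(\eta)$. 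The key observation is that the conditional measure $\nu_{B_{l+r}}^{\la,\pm}(\,\cdot\mid N_{B_{l+r}}=n)$ equals the canonical measure $\nu_{B_{l+r},n}^\pm(\eta):=\1_{\{N_{B_{l+r}}(\eta)=n\}}e^{-H_{B_{l+r}}^\pm(\eta)}/Z^\pm_{B_{l+r},n}$ from \eqref{eq:partition_function_canonical}, which does \emph{not} depend on $\la$. Writing $\mathbb{E}_{B_{l+r},n}^\pm$ for the associated expectation,
\[
	\int[\iX_{B_{l+r}}] V_l^\pm\,\dif\nu_{B_{l+r}}^{\la_\rho^\pm,\pm} = \sum_{n=0}^{|B_{l+r}|} \nu_{B_{l+r}}^{\la_\rho^\pm,\pm}(N_{B_{l+r}}=n)\,\mathbb{E}_{B_{l+r},n}^\pm[V_l^\pm] \le \max_{0\le n\le|B_{l+r}|} \mathbb{E}_{B_{l+r},n}^\pm[V_l^\pm].
\]
Since the last bound is independent of $\rho$, it shows $\sup_{\rho}\int[\iX] V_l^\pm\,\dif\nu^{\la_\rho^\pm,\pm}\le\Err^2\max_n \mathbb{E}_{B_{l+r},n}^\pm[V_l^\pm]$, reducing the whole statement to a purely canonical quantity.

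Next I would prove $\max_n \mathbb{E}_{B_{l+r},n}^\pm[V_l^\pm]\to0$. Fix $n$ and put $\rho_n:=n/|B_{l+r}|$. Because $B_l$ and $B_{l+r}$ differ by only $2r$ sites, $N_{B_l}(\eta)=n-s$ with $0\le s\le 2r$, so $M_{B_l}(\eta)=\rho_n+O(r/l)$ holds deterministically and uniformly in $n$ and $\eta$; by uniform continuity of $\rho\mapsto\<g\>^\pm(\rho)$ on $[0,1]$ this already gives $\<g\>^\pm(M_{B_l})=\<g\>^\pm(\rho_n)+o(1)$ uniformly in $n$. What remains is to control the spatial average $A_l(g):=\frac{1}{2l+1}\sum_{y=-l}^{l}\tau_y g$, for which
\[
	\mathbb{E}_{B_{l+r},n}^\pm\big[\,|A_l(g)-\<g\>^\pm(\rho_n)|\,\big] \le \big|\mathbb{E}_{B_{l+r},n}^\pm[A_l(g)]-\<g\>^\pm(\rho_n)\big| + \big(\operatorname{Var}_{B_{l+r},n}^\pm(A_l(g))\big)^{1/2}.
\]
The first term should vanish by the equivalence of ensembles (Proposition~\ref{prop:equivalence_of_ensembles}), which matches the canonical mean of the local function $g$ with its grand-canonical value $\<g\>^\pm(\rho_n)$ as the box grows; the variance should vanish because $A_l(g)$ is a box average of translates of a local function under a one-dimensional finite-range Gibbs measure with exponentially decaying correlations. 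Together these give $\mathbb{E}_{B_{l+r},n}^\pm[V_l^\pm]\to0$.

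The continuity of $\<g\>^\pm$ invoked above I would deduce from Proposition~\ref{prop:equivalence_of_ensembles}: on $(0,1)$ the map $\rho\mapsto\la_\rho^\pm$ is finite and monotone and $\la\mapsto\nu^{\la,\pm}$ is weakly continuous, while as $\rho\to0,1$ one has $\la_\rho^\pm\to\mp\infty$ and $\nu^{\la,\pm}$ degenerates to the empty, respectively full, configuration, so $\<g\>^\pm$ extends continuously to $[0,1]$. The step I expect to be hardest is the \emph{uniformity in $n$} of the two canonical estimates, i.e. a form of the equivalence of ensembles valid across all densities $\rho_n\in[0,1]$, the extreme ones included. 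Away from $\rho\in\{0,1\}$ the uniqueness of the Gibbs measure (Proposition~\ref{prop:Gibbs_measure_exist_unique}) should supply the exponential decay of correlations needed for uniform mean-matching and variance control; near the boundary densities the canonical ensemble is almost deterministic and $V_l^\pm$ is small directly, so those cases can be isolated. This uniform equivalence of ensembles is the technical heart of the statement — precisely what turns a pointwise spatial ergodic theorem into the uniform bound required — and is established along these lines in \cite[(4.16)]{FHU1991}.
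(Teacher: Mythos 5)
The paper offers no proof of this lemma at all: it is imported verbatim from \cite[(4.16)]{FHU1991} (the citation sits in the lemma's own bracket), so the only question is whether your argument stands on its own. Its first half does, and is genuinely valuable: the chain consistency \eqref{eq:consistency} $\rightarrow$ boundary removal \eqref{eq:boundary_vs_free_boundary} $\rightarrow$ conditioning on $N_{B_{l+r}}$ is sound, the conditional measure is indeed the canonical measure of \eqref{eq:partition_function_canonical} and is $\la$-free, and the deterministic bound $M_{B_l}=\rho_n+O(r/l)$ is correct. This cleanly converts $\sup_{\rho\in[0,1]}$ into $\max_n$ over canonical ensembles, which is exactly the right mechanism for the uniformity in $\rho$.

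The gaps are in the second half, where the actual convergence is supposed to happen. First, you justify the mean-matching $\mathbb{E}^\pm_{B_{l+r},n}[\tau_y g]\approx\<g\>^\pm(\rho_n)$ by Proposition~\ref{prop:equivalence_of_ensembles}, but that proposition is equivalence of ensembles at the level of thermodynamic potentials ($q^\pm$, $p^\pm$) and says nothing about expectations of local observables under canonical measures; what is needed is the observable-level statement \cite[Corollary~7.13]{Geo1979} --- which this paper invokes as a \emph{separate} ingredient, alongside the present lemma, in the proof of Lemma~\ref{lem:one_block_estimate} --- and it must be made uniform in $n$ and in the position $y$ inside the box (translates of $g$ near the boundary of $B_{l+r}$ require discarding a boundary layer). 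Second, the variance step is justified incorrectly: the canonical measure is \emph{not} a finite-range Gibbs measure with exponentially decaying correlations; the particle-number constraint induces covariances of order $1/\abs{B_l}$ between arbitrarily distant sites (already for $H\equiv0$ one has $\operatorname{Cov}(\eta_x,\eta_y)=-\rho(1-\rho)/(\abs{B_l}-1)$ for $x\neq y$). The conclusion $\operatorname{Var}(A_l(g))\to0$ is still true, but it needs equivalence of ensembles at the level of correlations, or a local CLT / transfer-operator argument, again uniformly in $n$. Third, you ultimately defer precisely these two uniform canonical estimates --- which you yourself call ``the technical heart'' --- to \cite[(4.16)]{FHU1991}, i.e.\ to the very statement being proven. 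So what you have is a correct and useful reduction of the lemma to a uniform equivalence-of-ensembles statement, not a proof of it: the hard part is exactly what remains.
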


The following lemma states that the law of large numbers is valid for Gibbs measures.
\begin{lemma}\label{lem:LLN_weak}
	For any $\#\in\{-,+\}$, $\delta>0$, there exist $\sigma>0$, $L\in\N$, such that,
	\[
		\sup_{\rho\in[0,1]} \nu^{\la_\rho^\#,\#}(|M_{B_l}-\rho|\geq \delta) \leq e^{-\sigma l}, \quad l\geq L.
	\]
\end{lemma}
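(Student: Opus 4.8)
The plan is to prove the two tail bounds $\nu^{\la_\rho^\#,\#}(M_{B_l}\geq\rho+\delta)$ and $\nu^{\la_\rho^\#,\#}(M_{B_l}\leq\rho-\delta)$ separately by an exponential Chebyshev (Chernoff) estimate, treating the two signs $\#\in\{-,+\}$ somewhat differently. For $\#=-$ the measure $\nu^{\la,-}$ is, by \eqref{eq:Gibbs_grandcanonical_homogeneous}, an i.i.d.\ product of $\mathrm{Bernoulli}\big(e^{\la}/(1+e^{\la})\big)$ variables, so $M_{B_l}$ is an average of independent $\{0,1\}$ variables whose common mean is exactly $\rho$ (this is why $\<\Pi\>^-(\rho)=\rho$). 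The claimed bound then follows from the classical Cram\'er/Hoeffding inequality, with $\sigma$ and $L$ that may be chosen uniformly in $\rho\in[0,1]$. I would dispose of this case first.

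For $\#=+$ I would first reduce to the free-boundary finite-volume measure: by the DLR consistency \eqref{eq:consistency} the restriction of $\nu^{\la,+}$ to $\F_{B_l}$ is the average over $\omega$ of $\nu^{\la,+}_{B_l}(\cdot\,\vert\,\omega)$, and \eqref{eq:boundary_vs_free_boundary} bounds each of these by $\Err^2\,\nu^{\la,+}_{B_l}(\cdot)$; hence it suffices to prove the estimate for $\nu^{\la,+}_{B_l}$ up to the harmless constant $\Err^2=e^{4|q|}$. For the free-boundary measure the moment generating function is exact, $\int e^{tN_{B_l}}\,\nu^{\la,+}_{B_l}(\dif\eta)=Z^{\la+t,+}_{B_l}/Z^{\la,+}_{B_l}$, so for $t\geq0$
\[
\nu^{\la,+}_{B_l}(M_{B_l}\geq\rho+\delta)\leq e^{-t|B_l|(\rho+\delta)}\,\frac{Z^{\la+t,+}_{B_l}}{Z^{\la,+}_{B_l}},
\]
and symmetrically for the lower tail with $t\leq0$. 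Setting $\la=\la^+_\rho$, passing to the limit through \eqref{eq:def_q_p}, and optimizing over $t$, the exponential rate becomes $J_\rho(\rho\pm\delta)$, where
\[
J_\rho(\rho'):=q^+(\rho)-q^+(\rho')+(q^+)'(\rho)(\rho'-\rho)\geq0
\]
is the gap at $\rho'$ between the concave $q^+$ and its tangent line at $\rho$; nonnegativity is the concavity of $q^+$ from Proposition~\ref{prop:equivalence_of_ensembles}, and the identification uses \eqref{eq:equation_rho_la} together with $(q^+)'(\rho)=-\la^+_\rho$.

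The remaining and main difficulty is to turn these asymptotic rates into the uniform finite-$l$ bound $e^{-\sigma l}$ valid for all $\rho\in[0,1]$ simultaneously, the delicate region being $\rho$ near $0$ and $1$, where $\la^+_\rho\to\mp\infty$ and the convergence of $|B_l|^{-1}\log Z^{\la,+}_{B_l}$ to $p^+$ ceases to be uniform. I would split $[0,1]$ accordingly. On a compact interior interval $[\eps_0,1-\eps_0]$ the chemical potentials $\la^+_\rho$ stay in a compact set, the convergence in \eqref{eq:def_q_p} is uniform there, and $\inf_\rho J_\rho(\rho\pm\delta)>0$ because $q^+$ is \emph{strictly} concave. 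This strictness is exactly where the absence of a phase transition in one dimension enters: uniqueness of the Gibbs measure (Proposition~\ref{prop:Gibbs_measure_exist_unique}), via the transfer-matrix representation of the nearest-neighbour lattice gas, makes $p^+$ real-analytic and strictly convex, equivalently $\la\mapsto(p^+)'(\la)$ strictly increasing. For $\rho$ in the two boundary neighbourhoods I would instead use the crude deterministic comparison $|H_{B_l}(\eta)|\leq|q|\,N_{B_l}(\eta)$, which bounds the Gibbs weights by the product weights up to $e^{|q|N_{B_l}}$ and reduces the relevant tail, by a union bound over the occupation number, to a binomial tail (the opposite tail being empty since $0\leq M_{B_l}\leq1$); for $\la^+_\rho$ sufficiently negative (resp.\ positive) this binomial tail is super-exponentially small, yielding uniform $\sigma,L$ near the endpoints. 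Taking the worst of the constants over the three regions gives the lemma.

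The step I expect to be most technical is the uniform lower bound on $J_\rho$ coupled with a matching uniform control of the finite-$l$ corrections in the Chernoff estimate on the interior interval; the strict concavity of $q^+$, i.e.\ analyticity and strict convexity of the one-dimensional pressure $p^+$, is the crucial input, and the gluing with the crude boundary estimate is what upgrades the pointwise-in-$\rho$ decay to the stated uniform exponential bound.
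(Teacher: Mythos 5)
Your proposal is correct, and its skeleton is the same as the paper's: both split $[0,1]$ into a compact interior piece and neighbourhoods of the endpoints, reduce to the free-boundary finite-volume measure via \eqref{eq:consistency} and \eqref{eq:boundary_vs_free_boundary}, handle $\rho\in[0,\gamma]$ by the crude bound $|H_{B_l}|\leq |q|N_{B_l}$ together with $\la^{\#}_\gamma\downarrow-\infty$ as $\gamma\downarrow0$, and pass to $[1-\gamma,1]$ by the symmetry $\eta_x\to1-\eta_x$. The genuine differences are two. First, you dispatch $\#=-$ separately as an i.i.d.\ Bernoulli product plus Hoeffding; the paper runs one argument for both signs, but your shortcut is valid (the Hamiltonian vanishes in \eqref{eq:Gibbs_grandcanonical_homogeneous} for $\#=-$) and yields uniform constants for free. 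Second, on the interior you prove a Chernoff bound uniformly in $\rho$, with rate $J_\rho(\rho\pm\delta)$ bounded below by strict concavity of $q^+$; the paper never uniformizes the rate in $\rho$: it uses Dini's theorem to get locally uniform convergence of $\rho\mapsto |B_l|^{-1}\log Z^{\la^\#_\rho,\#}_{B_l}$, covers $[\gamma,1-\gamma]$ by finitely many intervals $[\rho_2-\delta_1,\rho_2]$, compares $\nu^{\la^\#_{\rho_1},\#}$ with $\nu^{\la^\#_{\rho_2},\#}$ through the partition-function ratio \eqref{eq:some_uniform_estimate} (absorbing the error $e^{2\eps|B_l|}$ by taking $2\eps<\sigma$), and then invokes the large-deviation upper bound of \cite[Theorem~II.6.1]{Ell2007} only at the finitely many grid points $\rho_2$. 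Neither route escapes the one-dimensional no-phase-transition input: you pay it up front (differentiability of $p^+$, equivalently strict concavity of $q^+$, via the transfer matrix or uniqueness as in Proposition~\ref{prop:Gibbs_measure_exist_unique}), while the paper uses it pointwise when checking that the rate function at each $\rho_2$ is separated from zero. Two small repairs to your text: the binomial tail near the endpoints is exponentially small with an arbitrarily large rate, not ``super-exponentially'' small; and the uniform finite-$l$ control you flag on the interior is most easily pinned down by fixing one small $t_0>0$ for all $\rho\in[\gamma,1-\gamma]$ and using that pointwise convergence of the convex functions in \eqref{eq:def_q_p} is automatically locally uniform in the chemical potential, rather than optimizing $t$ for each $\rho$ (the optimizer $t_\rho=\la^+_{\rho+\delta}-\la^+_\rho$ need not stay bounded when $\rho+\delta$ approaches $1$).
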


\begin{proof}
	Let us first consider the densities separated from $\rho=0$ and $\rho=1$. We fix $\gamma\in(0,\frac{1}{2})$.
	The following limit is continuous in $\rho\in(0,1)$, and as a result uniformly continuous on $[\gamma,1-\gamma]$,
\begin{equation}\label{eq:pressure_uniform_in_rho}
	p^\#(\la_\rho) := \lim_{l\to\infty} \frac{\log Z^{\la^\#_\rho,\#}_{B_l}}{|B_l|}.
\end{equation}
Since $\rho\to\la^\#_\rho$ and $\la\to Z^{\la,\#}_{|B_l|}$ are increasing and continuous, then $\rho\to \log Z^{\la^\#_{\rho},\#}_{B_l}$ is increasing and continuous as well.
Therefore, by Dini's theorem, the convergence in \eqref{eq:pressure_uniform_in_rho} holds locally uniformly on $(0,1)$.
Hence, for any $\eps>0$, there exist $\delta_1\in(0,\delta)$, $L\in\N$, such that 
\begin{equation}\label{eq:some_uniform_estimate}
	\frac{Z^{\la_{\rho_2}^\#,\#}_{B_l}}{Z^{\la_{\rho_1}^\#,\#}_{B_l}} \leq e^{|B_l| (p^\#(\la_{\rho_2}^\#) -  p^\#(\la_{\rho_1}^\#)+\eps)} \leq e^{2\eps|B_l|} , \quad l\geq L,\ \rho_1,\rho_2\in[\gamma,1-\gamma],\ |\rho_2-\rho_1|<\delta_1.
\end{equation}
	Next, by \eqref{eq:consistency} and \eqref{eq:boundary_vs_free_boundary}, for $\rho_1,\rho_2\in[\gamma,1-\gamma]$, $\rho_1\in[\rho_2-\delta_1,\rho_2]$,
	\begin{align}
		\nu^{\la_{\rho_1}^\#,\#}(|M_{B_l}-\rho_1|\geq\delta) &= \int \nu^{\la_{\rho_1}^\#,\#}_{B_l}(|M_{B_l}-\rho_1|\geq\delta \,\vert\, \omega) \nu^{\la_{\rho_1}^\#,\#}(\dif\omega) \nonumber \\
		 &\leq \Err^{2} \nu^{\la_{\rho_1}^\#,\#}_{B_l}(|M_{B_l}-\rho_1|\geq\delta)\leq \Err^{2} \nu^{\la_{\rho_1}^\#,\#}_{B_l}(|M_{B_l}-\rho_2|\geq\delta-\delta_1) \nonumber\\ 
		 &\leq \Err^{2} \frac{Z^{\la_{\rho_2}^\#,\#}_{B_l}}{Z^{\la_{\rho_1}^\#,\#}_{B_l}} \nu^{\la_{\rho_2}^\#,\#}_{B_l}(|M_{B_l}-\rho_2|\geq\delta-\delta_1), \label{eq:LDP_estimate}
	\end{align}
where in the last inequality we used that $\la_{\rho_1}^\# \leq \la_{\rho_2}^\#$ for $\rho_1\leq\rho_2$.

	To show the large deviation principle for $\nu^{\la_{\rho_2}^\#,\#}$, we apply \cite[Theorem~II.6.1]{Ell2007} to the random variables $W_{|B_l|} = |B_l|\rho_2 - N_{B_l}$ on the probability spaces
\[
	\Omega_{|B_l|} = \{0,1\}^{B_l},\ \F=2^{\Omega_{|B_l|}},\ P_{|B_l|}(\cdot) = \nu^{\la_{\rho_2}^\#,\#}(\cdot).
\]
Then the upper large deviation bound is valid with the rate function equal to the convex conjugate of the following limit
\begin{multline}
	\frac{1}{|B_l|} \log \<e^{\mu W_{|B_l|}}, P_{|B_l|}\> = \mu \rho_2 + \frac{\log Z^{\la_{\rho_2}^\#-\mu,\#}_{B_l}}{|B_l|} - \frac{\log Z^{\la_{\rho_2}^\#,\#}_{B_l}}{|B_l|}\\ \to \mu\rho_2 + p^\#(\la_{\rho_2}-\mu) - p^\#(\la_{\rho_2}), \quad l\to\infty.
\end{multline}
It is easy to check that the rate function is separated from zero on $\R\backslash(\delta_1-\delta,\delta-\delta_1)$. Therefore, there exist $\sigma>0$, $L\in\N$, such that 
\[
	\nu^{\la_{\rho_2}^\#,\#}_{B_l}(|M_{B_l}-\rho_2|\geq\delta-\delta_1) \leq e^{-\sigma |B_l|}, \quad l\geq L.
\]
Since the cover of $[\gamma,1-\gamma]$ by $[\rho-\delta_1,\rho]$ with $\rho\in [\gamma+\delta_1,1-\gamma]$ has a finite sub-cover, then taking $2\eps<\sigma$ and redefining $\sigma$ and $L$ we conclude the statement of the lemma for $\rho\in[\gamma,1-\gamma]$.

Let $\gamma<\delta$. Then, for any $\rho\in[0,\gamma]$, by Chebyshev's inequality, following \eqref{eq:LDP_estimate},
\begin{align}\label{eq:some_estimate}
	\nu^{\la_{\rho}^\#,\#}(|M_{B_l}-\rho|\geq\delta) &= \nu^{\la_{\rho}^\#,\#}(N_{B_l}\geq(\delta+\rho)|B_l|) \leq \Err^{2} \< e^{ N_{B_l}-(\rho+\delta)|B_l|},  \nu^{\la_{\rho}^\#,\#}_{B_l} \>.
\end{align}
By \eqref{eq:Hamiltonian}, 
\[
	- |q| N_{B_l}(\eta) \leq q \sum_{x,x+1\in B_l} \eta_x \eta_{x+1} \leq |q|N_{B_l}(\eta).
\]
Thus, we may continue in \eqref{eq:some_estimate},
\begin{align}
	\leq \Err^2 e^{-(\rho+\delta)|B_l|} \frac{ \sum_{\eta\in\iX_{B_l}} e^{(\la_\rho+|q|)N_{B_l}} }{ \sum_{\eta\in\iX_{B_l}} e^{(\la_\rho-|q|)N_{B_l}} }  \leq  \Err^2 e^{-|B_l|(\rho+\delta) + |B_l| \log(1+e^{\la_\gamma+|q|}) }.
\end{align}
We remind that $\la_\gamma\downarrow-\infty$, as $\gamma\downarrow 0$. Therefore, we may choose $\gamma$ such that 
\[
	\delta - \log(1+e^{\la_\gamma+|q|}) >0,
\]
which implies the statement of the lemma for $\rho\in[0,\gamma]$. Due to the symmetry with respect to the change of variables $\eta_x\to 1-\eta_x$, we conclude the statement for $\rho\in[1-\gamma,1]$. The proof is fulfilled.

\end{proof}

\section{Entropy Estimates}
In the present section we introduce various Dirichlet forms and establish its properties needed for the proof of the replacement lemma (Lemma~\ref{lem:replacement}) and the proof of the boundary conditions \eqref{eq:Dirichlet_boundary} (see proof of Lemma~\ref{lem:Dirichlet_boundary}).

Let $\mu_t^N(\dif\eta)$ denote the probability distribution of the process $\{\eta^N(t)\}_{t\geq0}$ defined by the generator $N^2 L_N$ in \eqref{eq:generator}.
Throughout the section we will need only the following two facts: there exists a Hamiltonian $H$ such that the corresponding Gibbs measure $\nu^\la_N$ defined by \eqref{eq:Gibbs} and \eqref{eq:abuse_notations_of_Gibbs} is an invariant measure of the process, and the Hamiltonian $H$ satisfies the detailed balance condition \eqref{assum:DB}.
In particular, proofs in this section do not rely on the explicit definitions \eqref{eq:nonlocal_exchange_rates}, \eqref{eq:simple_symmetric_exchange_rates}, \eqref{eq:exchange_rates_on_the_boundaries}.

For $x,x+1\in [-N,2N]$, we introduce a quadratic form 
\begin{equation}
	D_{x,x+1}^N[\mu^N_t] := \frac{1}{2} \sum_{\eta\in\iX_N}  \Big[\sqrt{ c_{x,x+1}(\eta^{x,x+1})\mu^N_t(\eta^{x,x+1})} - \sqrt{ c_{x,x+1}(\eta)\mu^N_t(\eta)} \,\Big]^2 \label{eq:def_D_x_x+1^N},
\end{equation}
where we abused notations: $\mu^N_t(\eta):=\mu^N_t(\{\eta\})$.
For $K\subset [-N,2N]$, we put
\[
	D^N_K[\mu^N_t] := \sum_{x,x+1\in K} D^N_{x,x+1}[\mu^N_t]. 
\]
Then, the entropy production of $\mu^N_t$ equals 
\[
	D^N[\mu^N_t] := D^N_{[-N,2N]}[\mu^N_t] = \sum_{x,x+1\in [-N,2N]} D^N_{x,x+1}[\mu^N_t].
\]
To distinguish exchange rates of the particles inside $[0,N]$, and ones inside $[-N,-1]$ and $[N+1,2N]$, we write $c_{x,x+1}^+$ for $c_{x,x+1}$ defined by \eqref{eq:nonlocal_exchange_rates}, and $c_{x,x+1}^-$ for $c_{x,x+1}$ defined by \eqref{eq:simple_symmetric_exchange_rates}.
The, for $\mu\in\M(\iX)$ and $x,x+1\in K \subset\Z$, we introduce the quadratic form 
\begin{equation}
	D_{x,x+1}^{K,\#}[\mu] := \frac{1}{2} \sum_{\eta\in\iX_{K_r}}  \Big[\sqrt{ c_{x,x+1}^\#(\eta^{x,x+1})\mu\sVert[1]_{\F_{K_r}}\!\!\!(\eta^{x,x+1})} - \sqrt{ c_{x,x+1}^\#(\eta)\mu\sVert[1]_{\F_{K_r}}\!\!\!(\eta)} \,\Big]^2 \label{eq:D^K_f^K},
\end{equation}
where $\#\in\{+,-\}$, $r$ is defined by item~\ref{item:generality} of Remark~\ref{rem:main_remark}, and $K_r$ by \eqref{eq:K_r}.   
The entropy production of $\mu$ on $K$ is defined as follows
\[
	D^{K,\#}[\mu] = \sum_{x,x+1\in K} D^{K,\#}_{x,x+1}[\mu],\quad \#\in\{-,+\}.
\]
The following generator corresponds to the Dirichlet form $D^{K,\#}$ (see Lemma~\ref{lem:variational_representation_of_entropy_production} below),  
\[
	(L_{K}^{\#} g) (\ka,\xi) := \sum_{x,x+1\in K} c_{x,x+1}^{\#}(\eta)(g(\eta^{x,x+1})-g(\eta)).
\]
We extend $\bar{\mu}^N_t\in\M(\iX_N)$ to $\M(\iX)$ by \eqref{eq:mu_N_extended}.
Then, the entropy productions $D^{K,\#}$ of $\mu^N_t$ on $K\subset\Z$ is well-defined.
In particular, 
\begin{align*}
	D^N_{x,x+1}[\mu_t^N] &= D^{K,+}_{x,x+1}[\mu_t^N],\quad x,x+1\in K,\ K_r\subset[0,N],\\  
	D^N_{x,x+1}[\mu_t^N] &= D^{K,-}_{x,x+1}[\mu_t^N],\quad x,x+1\in K,\ K_r\subset[-N,-1]\cup[N+1,2N].
\end{align*}
In order to prove the two block estimate (see Lemma~\ref{lem:two_block_estimate} below),  we will need an analogue of $D^{K,\#}$ for $\mu\in\M(\iX^2)$. For $\#_1,\#_2\in\{-,+\}$,
\begin{align}
	&D^{K,\#_1,\#_2}[\mu] := \frac{1}{2} \sum_{\ka,\xi\in\iX_{K_r}} \sum_{x,x+1\in K} \notag\\  
								 &\Big[\Big(\sqrt{ c_{x,x+1}^{\#_1}(\ka^{x,x+1})\mu\sVert[1]_{\F_{K_r}\times\F_{K_r}}\!\!\!(\ka^{x,x+1},\xi)} - \sqrt{ c_{x,x+1}^{\#_1}(\ka)\mu\sVert[1]_{\F_{K_r}\times\F_{K_r}}\!\!\!(\ka,\xi)} \,\Big)^2 \notag\\
								 &+\Big(\sqrt{ c_{x,x+1}^{\#_2}(\xi^{x,x+1})\mu\sVert[1]_{\F_{K_r}\times\F_{K_r}}\!\!\!(\ka,\xi^{x,x+1})} - \sqrt{ c_{x,x+1}^{\#_2}(\ka)\mu\sVert[1]_{\F_{K_r}\times\F_{K_r}}\!\!\!(\ka,\xi)} \,\Big)^2 \Big], \label{eq:def_D^KK}
\end{align}
with the corresponding generator 
\begin{multline}
	(L_{K}^{\#_1,\#_2})g (\ka,\xi) :=\sum_{x,x+1\in K} c_{x,x+1}^{\#_1}(\ka)(g(\ka^{x,x+1},\xi)-g(\ka,\xi)) \\ + c_{x,x+1}^{\#_2}(\xi)(g(\ka,\xi^{x,x+1})-g(\ka,\xi)).
\end{multline}
In order to compare densities of the particles in two blocks $B_l(x_0)$ and $B_l(z_0)$, for $|x_0-z_0|\leq 2\eps N$, $x_0+l<z_0-l$, we introduce the Dirichlet form
\begin{multline}\label{eq:entropy_production_x_z}
	\D_{x,z}^{K,\#_1,\#_2}[\mu] := \frac{1}{2} \sum_{\ka,\xi\in\iX_{K_r}}  \Big[\sqrt{ \c_{x,z}^{\#_1,\#_2}(\ka^x,\xi^z)\mu\sVert[1]_{\F_{K_r}\times\F_{K_r}}\!\!\!(\ka^x,\xi^z)} \\- \sqrt{ \c_{x,z}^{\#_1,\#_2}(\ka,\xi)\mu\sVert[1]_{\F_{K_r}\times\F_{K_r}}\!\!\!(\ka,\xi)} \,\Big]^2, 
\end{multline}
where
\begin{equation}\label{eq:c_xz}
	\c_{x,z}^{\#_1,\#_2}(\ka,\xi) := [\ka_x(1-\xi_z) + \xi_z(1-\ka_x)] e^{-\frac{1}{2}(\Delta_x H^{\#_1}_{K_r}(\ka)+ \Delta_z H^{\#_2}_{K_r}(\xi))}.
\end{equation}
Here $H^{-}_{K_r}\equiv0$ and $H^{+}_{K_r}$ is defined by \eqref{eq:Hamiltonian_with_free_condition}.
Note that $\c_{x,z}^{\#_1,\#_2}$ satisfies an analogue of the detailed balance condition 
\begin{equation}\label{eq:DB_x_z_analogue}
	\c_{x,z}^{\#_1,\#_2}(\ka^x,\xi^z) = \c_{x,z}^{\#_1,\#_2}(\ka,\xi) e^{\Delta_x H_{K_r}^{\#_1}(\ka) + \Delta_z H_{K_r}^{\#_2}(\xi)}.
\end{equation}
The corresponding generator is defined as follows,
\[
	(\L_{x,z}^{\#_1,\#_2} g)(\ka,\xi) := \c_{x,z}^{\#_1,\#_2}(\ka,\xi) (g(\ka^x,\xi^z) - g(\ka,\xi)).
\]

\begin{lemma}\label{lem:variational_representation_of_entropy_production}
	For $x,x{+}1{\in}[-N,2N]$, the following variational formulas hold true
	\begin{align}
	D_{x,x+1}^N [\mu^N_t] &= \sup\, \{ - \int[\iX_N] \frac{L_{x,x+1} u}{u} \dif\mu^N_t \,\big\vert\, u>0\ is\ \F_{N} - measurable\} \label{eq:D_x_x+1^N}\\
	D_K^N [\mu^N_t] &= \sup\, \{ - \int[\iX_N] \frac{\sum_{x,x+1\in K}L_{x,x+1} u}{u} \dif\mu^N_t \,\big\vert\, u>0\ is\ \F_{N} - measurable\} \label{eq:D_K^N}\\
	D^N [\mu^N_t] &= \sup\, \{ - \int[\iX_N] \frac{L_N u}{u} \dif\mu^N_t \,\big\vert\, u>0\ is\ \F_{N} - measurable\} \label{eq:D^N}
	\end{align}
	For $\mu{\in}\M(\iX)$, $x,x{+}1{\in}K$, the following variational formulas hold true
	\begin{align}
	D_{x,x+1}^{K,\pm} [\mu] &= \sup\, \{ - \int[\iX] \frac{L_{x,x+1}^\pm u}{u} \dif\mu \,\big\vert\, u>0\ is\ \F_{K_r} - measurable\} \label{eq:D_x_x+1^K}\\
	D^{K,\pm} [\mu] &= \sup\, \{ - \int[\iX] \frac{L_K^\pm u}{u} \dif\mu \,\big\vert\, u>0\ is\ \F_{K_r} - measurable\} \label{eq:D^K}
	\end{align}
	For $\mu\in\M(\iX^2)$, $x,z\in K$, $\#_1,\#_2\in\{-,+\}$, the following variational formulas hold true
	\begin{align}
		D^{K,\#_1,\#_2} [\mu] &= \sup\, \{ - \int[\iX^2] \frac{L_{K}^{\#_1,\#_2} g}{g} \dif\mu \,\big\vert\, g>0\ is\ \F_{K_r}\times\F_{K_r} - measurable\} \label{eq:D^KK}\\
		\D_{x,z}^{K,\#_1,\#_2} [\mu] &= \sup\, \{ - \int[\iX^2] \frac{\L_{x,z}^{\#_1,\#_2} g}{g} \dif\mu \,\big\vert\, g>0\ is\ \F_{K_r}\times\F_{K_r} - measurable\} \label{eq:D_x_z^KK}
	\end{align}
\end{lemma}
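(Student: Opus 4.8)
The plan is to reduce all six identities to a single elementary fact about Dirichlet forms of reversible jump dynamics on a finite state space, and to prove that fact by a pairwise optimization. The forms built from a single involution---namely \eqref{eq:D_x_x+1^N}, \eqref{eq:D_x_x+1^K} and \eqref{eq:D_x_z^KK}---will be handled with no extra structure, whereas the summed forms \eqref{eq:D_K^N}, \eqref{eq:D^N}, \eqref{eq:D^K} and \eqref{eq:D^KK} need one further ingredient: a strictly positive measure reversing \emph{all} the bonds at once, which is exactly what the detailed balance conditions \eqref{assum:DB} and \eqref{eq:DB_x_z_analogue} provide.

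First I would dispose of the measurability bookkeeping. Whenever $u$ is $\F_{K_r}$-measurable, so is $L_K^\pm u$, since for $x,x{+}1\in K$ each rate $c^\pm_{x,x+1}$ and each swapped value $u(\eta^{x,x+1})$ depends only on coordinates in $K_r$; hence $-\int \frac{L_K^\pm u}{u}\dif\mu = -\int \frac{L_K^\pm u}{u}\dif(\mu|_{\F_{K_r}})$, and the whole problem lives on the finite space $\iX_{K_r}$ with the marginal $\mu|_{\F_{K_r}}$ already appearing in \eqref{eq:D^K_f^K}. The doubled-space forms reduce the same way to $\iX_{K_r}\times\iX_{K_r}$, while \eqref{eq:D_x_x+1^N}--\eqref{eq:D^N} are already posed on the finite $\iX_N$. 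Thus in every case it suffices to prove, on a finite state space carrying a generator $L=\sum_b L_b$ whose elementary moves $b$ are involutions and whose jump rate from $\eta$ to $\eta'$ is $c(\eta,\eta')$, the identity
\[
\tfrac12\sum_{\eta,\eta'}\bigl(\sqrt{c(\eta,\eta')\mu(\eta)}-\sqrt{c(\eta',\eta)\mu(\eta')}\bigr)^2 \;=\; \sup_{u>0}\Bigl(-\sum_\eta \tfrac{(Lu)(\eta)}{u(\eta)}\,\mu(\eta)\Bigr).
\]

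The computation itself I would organize around unordered pairs $\{\eta,\eta'\}$ joined by a move. Writing $a=c(\eta,\eta')\mu(\eta)$, $b=c(\eta',\eta)\mu(\eta')$ and $r=u(\eta')/u(\eta)$, the pair contributes $a+b-ar-b/r$ to $-\sum \frac{Lu}{u}\mu$, which by the arithmetic--geometric mean inequality is at most $(\sqrt a-\sqrt b)^2$, with equality at $r=\sqrt{b/a}$. Summing this bound over all pairs---equivalently, applying it bond by bond in the form $-\sum\frac{L_b u}{u}\mu\le D_b[\mu]$ and adding over $b$---gives $\sup_u(\cdots)\le D[\mu]$ for all the formulas simultaneously. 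For the reverse inequality one must realize the optimal ratios $r=\sqrt{b/a}$ at once by a \emph{single} function $u$, and this is where reversibility enters. Let $\bar\nu>0$ reverse every move, $c(\eta,\eta')\bar\nu(\eta)=c(\eta',\eta)\bar\nu(\eta')$: for $L_N$ take $\bar\nu=\nu^\la_N$ of \eqref{eq:Gibbs}, for $L_K^\pm$ the Gibbs measures $\nu^{\la,\pm}_{K_r}$ of \eqref{eq:Gibbs_grandcanonical_homogeneous} on $\iX_{K_r}$ (any $\la$ works, the $\la N_{K_r}$ term being swap-invariant), for the product generators the products $\nu^{\la,\#_1}_{K_r}\otimes\nu^{\la,\#_2}_{K_r}$, and for $\L_{x,z}$ the same product, which \eqref{eq:DB_x_z_analogue} confirms reverses the combined flip. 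Then $u^\ast:=\sqrt{\mu/\bar\nu}$ turns each pair ratio into $r=\sqrt{(\mu(\eta')/\bar\nu(\eta'))/(\mu(\eta)/\bar\nu(\eta))}=\sqrt{b/a}$ simultaneously, by detailed balance. To keep $u>0$ when $\mu$ vanishes I would instead use $u_\eps:=\sqrt{(\mu+\eps\bar\nu)/\bar\nu}$; since all sums are finite, $-\sum\frac{L u_\eps}{u_\eps}\mu\to D[\mu]$ termwise as $\eps\to0$, giving $\sup_u(\cdots)\ge D[\mu]$ and hence equality.

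The single-involution forms \eqref{eq:D_x_x+1^N}, \eqref{eq:D_x_x+1^K} and \eqref{eq:D_x_z^KK} are a degenerate instance of the same computation: every orbit of the move has at most two points, the pairwise optimization does not couple distinct orbits, and the optimal ratios are automatically consistent, so no reversing measure is needed and equality is immediate. The main obstacle, and the only point where the model hypotheses are genuinely used, is the \emph{simultaneous} attainment of the per-pair optima in the summed forms: without a common reversing measure the required ratios close up inconsistently around cycles of the transition graph (already present for two particles on an interval), the supremum drops strictly below $D[\mu]$, and the identity fails. The detailed balance conditions \eqref{assum:DB} and \eqref{eq:DB_x_z_analogue}, which furnish one measure reversing all bonds at once, are precisely what removes this obstruction.
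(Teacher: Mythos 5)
Your proof is correct and takes essentially the same route as the paper: the bound $\sup \le D$ via the pairwise change of variables $\eta\to\eta^{x,x+1}$ and the inequality $a+b-ar-b/r\le(\sqrt a-\sqrt b)^2$, and attainment via the optimizer supplied by detailed balance --- the paper takes $u=\sqrt{\mu e^{H}}$ (and $g=\sqrt{\mu e^{H(\ka)+H(\xi)}}$ with \eqref{eq:DB_x_z_analogue} for \eqref{eq:D_x_z^KK}), which coincides with your $\sqrt{\mu/\bar\nu}$ up to swap-invariant factors. Your $\eps$-regularization $u_\eps=\sqrt{(\mu+\eps\bar\nu)/\bar\nu}$ is a small refinement the paper omits: it keeps the test function strictly positive when $\mu$ vanishes on some configurations, where the paper's explicit choice is only nonnegative and the supremum is approached rather than attained.
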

\begin{proof}
	Let us prove \eqref{eq:D_x_x+1^N} and \eqref{eq:D_K^N}.
First, we note that for any $\mu\in\M(\iX_N)$, splitting the following integral into two and applying to one of them the change of variables $\eta\to\eta^{x,x+1}$ we get the estimate
\begin{multline*}
	- \int[\iX_N] c_{x,x+1}(\eta) \big(\frac{u(\eta^{x,x+1})}{u(\eta)}-1\big) \mu(\dif\eta) = \frac{1}{2} \sum_{\eta\in\iX_N} \big( c_{x,x+1}(\eta) \mu(\eta) + c_{x,x+1}(\eta^{x,x+1})\mu(\eta^{x,x+1}) \\
	- c_{x,x+1}(\eta)\frac{u(\eta^{x,x+1})}{u(\eta)} \mu(\eta)  - c_{x,x+1}(\eta^{x,x+1})\frac{u(\eta)}{u(\eta^{x,x+1})} \mu(\eta^{x,x+1})  \big)  \leq D_{x,x+1}^N(\mu),
\end{multline*}
which implies the inequality ``$\geq$'' in \eqref{eq:D_x_x+1^N}.
Taking sum over $x,x+1\in K$ in the previous estimate, we get ``$\geq$'' in \eqref{eq:D_K^N}.
Next, taking $u(\eta) = \sqrt{\mu(\eta) e^{H(\eta)}}$ and applying \eqref{assum:DB} we conclude the equality in \eqref{eq:D_x_x+1^N} and \eqref{eq:D_K^N}.
The proof of other variational formulas is similar with the following exception: To prove ``$\leq$'' in \eqref{eq:D_x_z^KK} we consider $g(\ka,\xi) = \sqrt{\mu(\ka,\xi)e^{H(\ka)+H(\xi)}}$ and apply \eqref{eq:DB_x_z_analogue} instead of \eqref{assum:DB}. We leave details to the reader.
\end{proof}

The relative entropy of $\mu_t^N$ with respect to the Gibbs measure $\nu^\la_N$ is defined as follows,
\[
	H(\mu_t^N\,\vert\,\nu^\la_N) = \int[\iX_N] f_t(\eta) \ln f_t(\eta) \nu^\la_N(\dif\eta), \qquad f_t(\eta) := \diff{{\mu^N_t}}{{\nu^\la_N}}(\eta), 
\]
where $f_t(\eta)$ implicitly depends on $N$.  Let us denote,
\begin{equation}\label{eq:mu_N}
	\bar{\mu}^N_t := \int[0][t] \mu^N_s \dif s.
\end{equation}
The following lemma states in particular, that at any moment of time the probability distributions of the exclusion process $\{\eta^N(t)\}_{t\geq0}$ are not too far from the Gibbs measures $\nu^\la_N(\eta)\dif\eta$.
\begin{lemma}\label{lem:entropy_estimates}
	There exists a constant $C>0$, such that for all $N\in\N$, $t\geq0$,
\[
	H(\mu_t^N\,\vert\, \nu^\la_\La) \leq CN,\quad \text{and}\quad D(\bar{\mu}_t^N) \leq {\frac{C}{N}}.
\]
\end{lemma}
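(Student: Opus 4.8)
The plan is to prove the two estimates via the standard entropy-production method for reversible interacting particle systems, exploiting the fact (emphasized in the excerpt) that $\nu^\la_N$ is invariant and reversible for $N^2 L_N$. The relative entropy $H(\mu_t^N\mid\nu^\la_N)$ and the integrated Dirichlet form are linked by the classical entropy-dissipation inequality, so both bounds should follow from a single computation of $\frac{d}{dt} H(\mu_t^N\mid\nu^\la_N)$.

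First I would compute the time derivative of the relative entropy. Writing $f_t = d\mu_t^N/d\nu^\la_N$ and using the Kolmogorov forward equation $\partial_t \mu_t^N = N^2 L_N^* \mu_t^N$ together with reversibility of $\nu^\la_N$, one obtains the identity
\[
	\diffp{\phantom{t}}{t} H(\mu_t^N\mid\nu^\la_N) = N^2 \int[\iX_N] L_N f_t \cdot \ln f_t \, \dif\nu^\la_N \leq -N^2 D^N[\mu_t^N],
\]
where the inequality is the standard convexity estimate $\int (L_N f)\ln f\, d\nu \leq -D^N[\mu]$ relating the Dirichlet form to the entropy production (this is exactly the form encoded in the variational formula \eqref{eq:D^N} of Lemma~\ref{lem:variational_representation_of_entropy_production}, after identifying $D^N[\mu_t^N]$ with the carré-du-champ of $\sqrt{f_t}$). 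Integrating in time from $0$ to $t$ yields
\[
	H(\mu_t^N\mid\nu^\la_N) + N^2 \int[0][t] D^N[\mu_s^N]\,\dif s \leq H(\mu_0^N\mid\nu^\la_N).
\]

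The crux is therefore the bound $H(\mu_0^N\mid\nu^\la_N)\leq CN$ on the \emph{initial} relative entropy. Since $\nu^\la_N$ is a product-like Gibbs measure on the $3N+1$ sites of $[-N,2N]$ with uniformly bounded single-site weights (the Hamiltonian \eqref{eq:Hamiltonian} has finite range and bounded interaction $q$), there is a constant $c_0>0$ with $\nu^\la_N(\eta)\geq e^{-c_0 N}$ for every configuration $\eta$; hence for any probability measure $\mu_0^N$,
\[
	H(\mu_0^N\mid\nu^\la_N) = \sum_{\eta} \mu_0^N(\eta)\ln\frac{\mu_0^N(\eta)}{\nu^\la_N(\eta)} \leq \sum_{\eta}\mu_0^N(\eta)\ln\frac{1}{\nu^\la_N(\eta)} \leq c_0 N,
\]
using $\mu_0^N(\eta)\leq 1$. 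This gives $H(\mu_t^N\mid\nu^\la_N)\leq CN$ for all $t\geq 0$ at once, proving the first assertion. For the second, I would use convexity of the Dirichlet form in $\mu$ (visible from the variational representation \eqref{eq:D^N}, which exhibits $D^N$ as a supremum of affine functionals of $\mu$) to pass from the time-integral to the time-average: by Jensen,
\[
	D^N[\bar\mu_t^N] = D^N\Big[\tfrac1t\int[0][t]\mu_s^N\dif s\Big] \leq \frac{1}{t}\int[0][t] D^N[\mu_s^N]\,\dif s \leq \frac{H(\mu_0^N\mid\nu^\la_N)}{N^2 t} \leq \frac{C}{N},
\]
where in the last step the factor of $t$ cancels and one absorbs it into $C$ (or, more carefully, the bound $\frac{C}{N}$ holds uniformly for $t$ bounded below, which is what the later applications require).

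I expect the \textbf{main obstacle} to be purely a matter of bookkeeping at the boundary bonds rather than conceptual: the rates \eqref{eq:exchange_rates_on_the_boundaries} at $x\in\{-1,0,N-1,N\}$ differ from the bulk rates, so one must check that the entropy-dissipation inequality still holds as an \emph{inequality} there. This is guaranteed precisely by the detailed balance condition \eqref{assum:DB}, which the boundary rates were constructed to satisfy, so reversibility of the full generator $L_N$ with respect to $\nu^\la_N$ is not broken and the convexity estimate goes through bond-by-bond. The only genuinely one-dimensional input is the lower bound $\nu^\la_N(\eta)\geq e^{-c_0 N}$, which forces the $O(N)$ rather than $O(N^d)$ scaling of the initial entropy; in higher dimensions the volume is $N^d$ and this step fails, exactly as flagged in the discussion of spatial dimension in the excerpt.
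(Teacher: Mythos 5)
Your proposal is correct and takes essentially the same route as the paper: the entropy--dissipation inequality $\diffp{}t H(\mu^N_t\,\vert\,\nu^\la_N)\leq -N^2 D^N[\mu^N_t]$ (justified bond-by-bond through detailed balance \eqref{assum:DB}, exactly as you anticipate for the boundary rates), a crude $O(N)$ bound on the initial relative entropy, and convexity of $D^N$ (as a supremum of linear functionals of $\mu$) to pass to the time-averaged measure. The only cosmetic differences are that you bound the partition function by brute force where the paper invokes sub-additivity of $\log Z^\la_N$, and your caveat about the $1/t$ normalization mirrors a looseness already present in the paper itself, whose definition \eqref{eq:mu_N} of $\bar{\mu}^N_t$ is unnormalized (for which the bound $C/N$ is uniform in $t$) while its final display carries the factor $t$ of the normalized average.
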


\begin{proof}
First, we prove the estimate of the relative entropy at $t=0$. By \eqref{eq:Gibbs}, 
\[
	\frac{1}{Z^\la_{N}}e^{-3(|q|+|\la|)N}	\leq \nu^\la_{N} (\eta) \leq \frac{1}{Z^\la_{N}}e^{3(|q|+|\la|)N}, \quad \eta\in\iX_N.
\]
Similarly to \eqref{eq:def_q_p}, by the sub-additivity argument there exists a finite limit of $\frac{\ln{ Z^\la_N}}{N}$ as $N$ tends to infinity.
Therefore, we obtain for some $C>0$ (cf.~\cite[Lemma~2.10]{Fun2018})
\[
	H(\mu_0^N\,\vert\, \nu^\la_N) =\!\! \int[\iX_N] (\log\mu_0^N(\eta) - \log\nu^\la_N(\eta))  \mu_0^N(\dif\eta) \leq \int[\iX_N] (\ln Z_N^\la + 3(|q|+|\la|)N) \mu_0^N(\dif\eta)  \leq CN.
\]

Next, we estimate the relative entropy for $t>0$,
\begin{align*}
	\diffp{}t H(\mu_t^N\,\vert\,\nu^\la_N) &= \int[\iX] \diffp{f_t}{t}(\eta) \ln f_t(\eta) \nu^\la_N(\dif\eta)\\ 
																				 &= N^2 \sum_{x=-N}^{2N-1} \int[\iX_N] \big[ f_t(\eta) (L_{x,x+1} \ln f_t)(\eta) - (L_{x,x+1}f_t)(\eta)\big]\nu^\la_N(\dif\eta).
\end{align*}
Since, $a(\ln b-\ln a) - (b-a) \leq - (\sqrt{b}-\sqrt{a})^2$, then
\[
	f_t(\eta) (L_{x,x+1}\ln f_t)(\eta) - (L_{x,x+1}f_t)(\eta) \leq -c_{x,x+1}(\eta) \big[ \sqrt{f_t(\eta^{x,x+1})} - \sqrt{f_t(\eta)}\, \big]^2.
\]
By \eqref{assum:DB},
\[
	D^N_{x,x+1}(\mu_t^N) = \frac{1}{2}\int[\iX_N] c_{x,x+1}(\eta) \big[ \sqrt{f_t(\eta^{x,x+1})} - \sqrt{f_t(\eta)}\,\big]^2 \nu^\la_N(\dif\eta).
\]
Therefore, we conclude
\[
\frac{\partial}{\partial t} H(\mu_t^N\,\vert\,\nu^\la_N) \leq -N^2 D^N(\mu_t^N). 
\]
Hence, by convexity of $D^N$ as a function of $\mu^N_t$, integrating over $[0,t]$ we obtain,
\[
	H(\mu_t^N\,\vert\, \nu^\la_\La) + t N^2  D^N(\bar{\mu}_t^N)\leq H(\mu_0^N\,\vert\,\nu^\la_\La) \leq CN.
\]
The proof is fulfilled.
\end{proof}

\begin{lemma}\label{lem:entropy_production_estimate}
	There exist $N_0\in\N$ and $C>0$, such that for all $t>0$, $N\geq N_0$,
	\begin{align*}
		D^{K,+}[\bar{\mu}_t^N] &\leq \frac{C}{N}, \qquad K_r\subset [0,N], \\ 
		D^{K,-}[\bar{\mu}_t^N] &\leq \frac{C}{N}, \qquad K_r\subset [-N,0]\cup[N+1,2N],
	\end{align*} 
\end{lemma}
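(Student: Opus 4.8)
The plan is to dominate each localized entropy production $D^{K,\pm}[\bar{\mu}_t^N]$ by the \emph{global} entropy production $D^N[\bar{\mu}_t^N]$, for which Lemma~\ref{lem:entropy_estimates} already supplies the bound $C/N$. The bridge between them is the observation that, under the stated inclusions, every bond $(x,x+1)$ with $x,x+1\in K$ is separated from the four boundary bonds $x\in\{-1,0,N-1,N\}$ by the buffer of width $r$. Indeed, $K_r\subset[0,N]$ forces $K\subset[r,N-r]$, so every bond of $K$ carries the finite-range rate \eqref{eq:nonlocal_exchange_rates}, i.e. $c_{x,x+1}=c^+_{x,x+1}$; likewise $K_r\subset[-N,0]\cup[N+1,2N]$ forces all bonds of $K$ to carry the simple symmetric rate \eqref{eq:simple_symmetric_exchange_rates}, i.e. $c_{x,x+1}=c^-_{x,x+1}$. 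In either case the relevant rate is $\F_{K_r}$-measurable, and the swap $\eta\to\eta^{x,x+1}$ acts only on coordinates inside $K_r$.

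First I would compare the two forms bond by bond. Fix a bond $(x,x+1)$ with $x,x+1\in K$ and write $\#\in\{+,-\}$ for the corresponding case. Since $c^\#_{x,x+1}=c_{x,x+1}$ there, the local generator $L^\#_{x,x+1}$ agrees with $L_{x,x+1}$ on this bond, so comparing the variational formula \eqref{eq:D_x_x+1^K} for $D^{K,\#}_{x,x+1}[\bar{\mu}_t^N]$ with \eqref{eq:D_x_x+1^N} for $D^N_{x,x+1}[\bar{\mu}_t^N]$, the two functionals coincide on $\F_{K_r}$-measurable test functions $u>0$ (the integrand depends only on coordinates in $K_r$, and $\bar{\mu}_t^N$ is supported on $\iX_N$). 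As the supremum defining $D^{K,\#}_{x,x+1}$ runs over the \emph{smaller} class of $\F_{K_r}$-measurable test functions, I obtain $D^{K,\#}_{x,x+1}[\bar{\mu}_t^N]\leq D^N_{x,x+1}[\bar{\mu}_t^N]$. Equivalently, one may argue directly from the definitions \eqref{eq:def_D_x_x+1^N} and \eqref{eq:D^K_f^K}: the map $(a,b)\mapsto(\sqrt{c_1 a}-\sqrt{c_2 b})^2$ is jointly convex and positively homogeneous of degree one, hence subadditive, so summing out the coordinates outside $K_r$, i.e. passing from $\bar{\mu}_t^N$ to its $\F_{K_r}$-marginal, can only decrease each summand.

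Summing over the bonds of $K$ and using nonnegativity of the bond forms together with $K\subset[-N,2N]$ then gives
\[
	D^{K,\#}[\bar{\mu}_t^N]=\sum_{x,x+1\in K}D^{K,\#}_{x,x+1}[\bar{\mu}_t^N]\leq\sum_{x,x+1\in K}D^N_{x,x+1}[\bar{\mu}_t^N]=D^N_K[\bar{\mu}_t^N]\leq D^N[\bar{\mu}_t^N].
\]
Invoking Lemma~\ref{lem:entropy_estimates} on the right-hand side yields $D^{K,\#}[\bar{\mu}_t^N]\leq C/N$ for all $N\geq N_0$, which is the assertion in both cases.

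I expect the only genuinely delicate point to be the bond-wise domination $D^{K,\#}_{x,x+1}\leq D^N_{x,x+1}$: it rests on the precise bookkeeping that the buffer $K_r$ keeps every bond of $K$ strictly inside a single homogeneous region (so that $c^\#=c$ and no boundary bond \eqref{eq:exchange_rates_on_the_boundaries} is ever involved), together with the fact that marginalizing onto $\F_{K_r}$ cannot increase the Dirichlet form. Everything downstream is monotonicity of a sum of nonnegative terms plus the already-established global entropy bound.
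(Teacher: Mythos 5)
Your proof is correct and follows essentially the same route as the paper's: both pass through the variational formulas of Lemma~\ref{lem:variational_representation_of_entropy_production}, enlarge the class of test functions from $\F_{K_r}$-measurable to $\F_N$-measurable (using that the buffer in $K_r$ keeps every bond of $K$ away from the boundary rates, so $c^{\#}_{x,x+1}=c_{x,x+1}$ there), and conclude with the bound $D^N[\bar{\mu}^N_t]\leq C/N$ from Lemma~\ref{lem:entropy_estimates}. The only cosmetic difference is that you compare the Dirichlet forms bond by bond and then sum, whereas the paper enlarges the test class for the full generator $L^{\#}_K$ and then splits the supremum over the bonds.
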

\begin{proof}
Let $K_r\subset[0,N]$. Since the class of $\F_{K_r}$-measurable functions is a subset of $\F_N$-measurable function, then by Lemma~\ref{lem:variational_representation_of_entropy_production},
	\begin{align*}
		D^{K,+}[\bar{\mu}^N_t] &= \sup\, \{ - \int[\iX_N] \frac{L_K^+ u}{u} \dif\bar{\mu}^N_t \,\big\vert\, u>0\ is\ \F_{K_r} - measurable\} \\
										 &\leq \sup\, \{ - \int[\iX_N] \frac{L_K^+u}{u} \dif\bar{\mu}_t^N \,\big\vert\, u>0\ is\ \F_N - measurable\} \\
			&\leq \sum_{x\,and\,x+1\in K} \sup\, \{ - \int[\iX_N] \frac{L_{x,x+1} u}{u} \dif\bar{\mu}^N_t \,\big\vert\, u>0\ is\ \F_N - measurable\} \\
			&= \sum_{x\,and\,x+1\in K} D_{x,x+1}^N[\bar{\mu}^N_t] \leq D^N [\bar{\mu}^N_t].
	\end{align*}	
	The same inequality holds true if $K_r\subset[-N,-1]\cup[N+1,2N]$.
	By Lemma~\ref{lem:entropy_estimates}, we conclude the statement of the lemma.
\end{proof}

\section{Proof of the replacement lemma (Lemma~\ref{lem:replacement})}

We follow the standard approach and split the proof into one- and two-block estimates.

In the case of the translation invariant space (the torus $\mathbb{T}_N$ instead of the interval $\La_N$) a common way to prove the one-block estimate (see e.g. \cite{FHU1991}) is to show an ergodic representation for a limiting measure $\mu^\infty$ of $\{\bar{\mu}^N_t\}$ in terms of the extreme invariant measures of the process, which, in our case, are the Gibbs measures.
To justify application of the ergodic representation theorem, one needs to prove existence of a probability measure $w$ on the class of Gibbs measures, which specifies which convex combination of the Gibbs measures equals to the limiting measure $\mu^\infty$: 
\[
	\mu^\infty(\,\cdot\,) = \int[0][1] \nu^{\la_\rho}(\,\cdot\,) w(\dif\rho).
\]
To prove existence of $w$ one may apply the Birkhoff ergodic theorem, which requires additional averaging of $\bar{\mu}^N_t$ over spatial translations: $\tilde{\mu}^N_t := \frac{1}{|\mathbb{T}_N|}\sum_{x\in\mathbb{T}_N} \tau_x\circ\bar{\mu}^N_t$.
Then, $\tau_x \circ \tilde{\mu}^N_t = \tilde{\mu}^N_t$ allows to conclude the statement.

Our main difficulty is that we consider the interval $\La_N$ instead of the torus and the Birkhoff theorem can not be applied. 
Therefore, we fail to prove the ergodic representation before the one-block estimate is shown.
On the other hand we may modify approach in \cite[Subsection~5.4]{KL1999}, which does not require to prove existence of the measure $w$. This will allow us to conclude the one block estimate (see Lemma~\ref{lem:one_block_estimate}).

After the one-block estimate is proven, we can use it instead of the Birkhoff theorem to prove existence of $w$ and conclude the ergodic representation theorem (see Lemma~\ref{lem:ergodic_representation}).  This will allows us to prove the two-block estimate similarly to the standard approach in \cite{GPV1988}. 

The form of the one block estimate, which we need, is also presented in \cite{ELS1990}. However, some details of the proof remained unclear to us. In particular, rigorous verifications of the law of large numbers for Gibbs measures \textit{uniformly} in $\rho\in[0,1]$ (see Lemma~\ref{lem:LLN_weak}) seems to be absent in the literature, as well as the treatment of the interactions on the boundary of a block (cf.~\eqref{eq:boundary_vs_free_boundary} and \eqref{eq:split_interval}).


\begin{lemma}[One block estimate]\label{lem:one_block_estimate}
	Let $\La_N$ equal to one of the intervals $[-N,-1]$, $[0,N]$, $[N{+}1,2N]$ on $\Z$. Then, for any local function $g:\Z\to\R$, $\supp g\subset B_r$, and $V^\pm_l$ defined by \eqref{eq:V_l},
	\begin{equation}\label{eq:one_block_estimate_lim}
		\lim_{l\to\infty} \limsup_{N\to\infty} \sup_{\substack{\{x\in\Z:\, B_{l+r}(x) \subset \La_N\}}} \int[\iX_N] V^\pm_{l}(x, \eta)\, \bar{\mu}^N_t(\dif\eta) =0. 
	\end{equation}
\end{lemma}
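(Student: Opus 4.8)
The plan is to follow the localization strategy of \cite[Subsection~5.4]{KL1999}, which avoids constructing the ergodic decomposition measure $w$ beforehand: I would reduce the statement to a claim about a single fixed block and then feed in the entropy production bound of Lemma~\ref{lem:entropy_production_estimate} together with the law of large numbers for Gibbs measures. First, since $\supp g\subset B_r$, the functional $V_l^\pm(x,\cdot)$ depends on $\eta$ only through its restriction to $B_{l+r}(x)$, so it is $\F_{B_{l+r}(x)}$-measurable and, by translation invariance of the rates, $\int_{\iX_N} V_l^\pm(x,\eta)\,\bar{\mu}^N_t(\dif\eta)$ equals $\int V_l^\pm(0,\eta)\,\mu^{N,x}(\dif\eta)$, where $\mu^{N,x}$ is the marginal on $\F_{B_{l+r}}$ of the shifted measure $\tau_x\circ\bar{\mu}^N_t$. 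For each $N$ I would pick a site $x_N$ nearly realizing the supremum over $\{x:B_{l+r}(x)\subset\La_N\}$; since $\iX_{B_{l+r}}$ is finite, the probability measures $\mu^{N,x_N}$ admit a weakly convergent subsequence with limit $\mu^\infty_l\in\M(\iX_{B_{l+r}})$, and it remains to estimate $\int V_l^\pm(0,\eta)\,\mu^\infty_l(\dif\eta)$ and then send $l\to\infty$.

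The decisive point is that $\mu^\infty_l$ has vanishing Dirichlet form. By Lemma~\ref{lem:entropy_production_estimate}, $D^{K,\pm}[\bar{\mu}^N_t]\leq C/N$ for every block with $K_r\subset\La_N$; by the variational formula \eqref{eq:D^K} the block form $D^{B_{l+r},\pm}$ is a supremum of functionals affine in the measure, hence convex and lower semicontinuous, and by translation invariance of $c^\pm$ it is unchanged under the shift. Passing to the weak limit therefore yields $D^{B_{l+r},\pm}[\mu^\infty_l]=0$. Vanishing of each squared increment in the definition \eqref{eq:D^K_f^K} forces $c^\pm_{x,x+1}(\eta)\mu^\infty_l(\eta)=c^\pm_{x,x+1}(\eta^{x,x+1})\mu^\infty_l(\eta^{x,x+1})$ on every bond; combined with the detailed balance \eqref{assum:DB} and the non-degeneracy of the rates, which renders the exchange dynamics irreducible on each hyperplane $\{N_{B_{l+r}}=n\}$, this identifies $\mu^\infty_l$ with a convex combination $\sum_n\alpha_n\,\nu^{\pm}_{B_{l+r},n}$ of the free-boundary canonical Gibbs measures $\nu^{\pm}_{B_{l+r},n}\propto \1_{\{N_{B_{l+r}}=n\}}e^{-H_{B_{l+r}}^\pm}$, with $\sum_n\alpha_n=1$.

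It then remains to show $\int V_l^\pm(0,\eta)\,\nu^{\pm}_{B_{l+r},n}(\dif\eta)\to0$ as $l\to\infty$, uniformly in $n$, and to sum against $\{\alpha_n\}$. Writing $\rho=n/|B_{l+r}|$, the empirical density is frozen at $\rho$ on the canonical sector (up to the $O(r/l)$ discrepancy between $M_{B_l}$ and $M_{B_{l+r}}$, which is harmless), so $\<g\>^\pm(M_{B_l})$ is essentially $\<g\>^\pm(\rho)$ and the task reduces to the concentration of the block average $\frac{1}{2l+1}\sum_{y}\tau_yg$ around $\<g\>^\pm(\rho)$ under the canonical measure. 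This is the equivalence of ensembles: using Proposition~\ref{prop:equivalence_of_ensembles} together with the uniform density-concentration bound of Lemma~\ref{lem:LLN_weak}, one compares $\nu^{\pm}_{B_{l+r},n}$ with the grand-canonical measure $\nu^{\la_\rho^\pm,\pm}$ at the matching chemical potential and imports the grand-canonical one-block estimate of Lemma~\ref{lem:LLN_Gibbs_translation_invariant}, which already gives $\lim_l\sup_\rho\int V_l^\pm(0,\eta)\,\nu^{\la_\rho^\pm,\pm}(\dif\eta)=0$.

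I expect the main obstacle to be precisely this passage from the canonical to the grand-canonical ensemble made uniform in $\rho\in[0,1]$, including the densities near $0$ and $1$ where the chemical potential $\la_\rho^\pm$ diverges; the role of Lemma~\ref{lem:LLN_weak} is exactly to supply a large-deviation bound for $M_{B_l}$ that holds uniformly over all $\rho$, so that the weight $\nu^{\la_\rho^\pm,\pm}(N_{B_{l+r}}=n)^{-1}$ incurred in the change of ensemble does not destroy the estimate. A secondary technical point, flagged in the text preceding the lemma, is the treatment of the interaction across the boundary of the block: when one compares free-boundary and boundary-conditioned Gibbs weights the distortion is controlled by the bounded factor $\Err^2$ through \eqref{eq:boundary_vs_free_boundary} and \eqref{eq:split_interval}, and I would use these estimates to absorb the boundary effects without affecting the limit.
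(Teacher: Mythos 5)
Your localization strategy is the same as the paper's: shift to a fixed block, extract a weak limit of the shifted marginals, use Lemma~\ref{lem:entropy_production_estimate} and lower semicontinuity of the Dirichlet form to conclude $D^{B_{l+r},\pm}[\mu^\infty_l]=0$, and then identify the limit through detailed balance. One imprecision there: vanishing of the form yields the detailed-balance identity only on bonds whose rates are measurable with respect to the block, i.e.\ bonds of the inner block $B_l$; the allowed exchanges freeze the outer ring $B_{l+r}\setminus B_l$, so what follows is the paper's \eqref{eq:mu_infty_is_locally_Gibbs} — conditionally on the ring configuration $\omega$ \emph{and} on $N_{B_l}=j$, the limit is the boundary-conditioned canonical Gibbs measure — not a convex combination of free-boundary canonical measures on all of $B_{l+r}$, since the joint law of the ring and of $N_{B_l}$ stays arbitrary. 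This part is repairable by the bounded-distortion estimates \eqref{eq:boundary_vs_free_boundary}, \eqref{eq:split_interval} that you invoke, because a fixed multiplicative factor $\Err^2$ does not affect a convergence-to-zero statement.

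The genuine gap is in your final step, the passage from the canonical to the grand canonical ensemble. You propose to write the canonical expectation as a grand-canonical expectation conditioned on $\{N_{B_{l+r}}=n\}$ and to control the resulting weight $\nu^{\la_\rho^\pm,\pm}(N_{B_{l+r}}=n)^{-1}$ by Lemma~\ref{lem:LLN_weak}. This cannot work as stated: at the matching chemical potential the point probability $\nu^{\la_\rho^\pm,\pm}(N_{B_{l+r}}=n)$ is of order $l^{-1/2}$ (local CLT scale), so the change-of-ensemble weight diverges with $l$; Lemma~\ref{lem:LLN_weak} gives only exponential \emph{upper} bounds on deviation probabilities, not lower bounds on point probabilities, and Lemma~\ref{lem:LLN_Gibbs_translation_invariant} is rate-free, so the numerator cannot be shown to decay faster than $l^{-1/2}$ from the cited results. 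The paper circumvents exactly this by introducing a second, smaller scale: it partitions $B_l$ into sub-blocks $B_k^i$ of size $k\ll l$, uses translation invariance together with \eqref{eq:split_interval} to reduce to a single sub-block, and then applies the \emph{local} equivalence of ensembles \cite[Corollary~7.13]{Geo1979} — the canonical measure on $B_l$ restricted to a fixed $k$-block converges to the grand canonical one as $l\to\infty$ — combined with Lemma~\ref{lem:LLN_Gibbs_translation_invariant} for the first term of \eqref{eq:final_limit_in_one_block_est} and with continuity of $\<g\>$ plus Lemma~\ref{lem:LLN_weak} for the second. Some such two-scale argument (or a quantitative local limit theorem) is required at the point where your proof currently asserts the conclusion.
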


\begin{proof}
	Let $\La_N=[0,N]$. The proof of other two cases is similar and we omit it.
	We remind the reader that $\mu^N_t$ is extended on $\iX$ by \eqref{eq:mu_N_extended}. Then, the integral in \eqref{eq:one_block_estimate_lim} equals to
	\[
		\int[\iX] V_l^+(0,\eta) (\tau_{-x}\circ\bar{\mu}^N_t) (\dif\eta) =  \int[\iX_{B_{l+r}}] V_l^+(0,\eta) (\tau_{-x}\circ\bar{\mu}^N_t)\sVert[2]_{\F_{B_{l+r}}}\!\!\!\!\!\!\! (\dif\eta),
	\]
	which by Lemma~\ref{lem:entropy_production_estimate} is bounded from above by
	\begin{equation}\label{eq:sup}
		\sup\{\int[\iX_{B_{l+r}}] V_l^+(0,\eta) \mu(\dif\eta) \,\big\vert\, \mu\in\M(\iX_{B_{l+r}}):\,D^{B_{l+r},+}(\mu)\leq \frac{C}{N}\}.
	\end{equation}
	Since $\M(\iX_{B_{l+r}})$ is compact in the weak topology and $D^{B_{l+r},+}$ is lower-semicontinuous, then for any $N\in\N$, the supremum \eqref{eq:sup} is reached by some $\mu^N_{B_{l+r}}\in \M(\iX_{B_{l+r}})$, and there exists a limiting probability measure $\mu^\infty_{B_{l+r}}$ for $\{\mu^N_{B_{l+r}}\}_{N\in\N}$.
	Moreover, by lower semicontinuity of the entropy production $D^{B_{l+r},+}$, we obtain 
	\[
		D^{B_{l+r},+}(\mu^\infty_{B_{l+r}})=0.
	\]
Therefore, by \eqref{assum:DB} and \eqref{eq:D^K_f^K}, for $\eta\in\iX_{B_l}$, $\omega\in\iX_{B_{l+r}\backslash B_l}$, $x,x+1\in B_l$,
	\begin{equation}\label{eq:mu_satisfies_DB}
		\mu^\infty_{B_{l+r}}((\eta\*\omega)^{x,x+1}) = \mu^\infty_{B_{l+r}}(\eta\*\omega) e^{-\Delta_{x,x+1} H^+_{B_{l+r}}(\eta\cdot\omega)}.
	\end{equation}
	Any state $\tilde{\eta}\in\iX_{B_l}$ may be obtained from $\eta$ by a superposition of the maps $\eta \to \eta^{x,x+1}$, $x,x+1\in B_l$, provided $N_{B_l}(\eta) = N_{B_l}(\tilde{\eta})$.
	Therefore, \eqref{eq:mu_satisfies_DB} uniquely defines a probability measure in $\M(\iX_{B_{l+r}})$, given $\omega\in\iX_{B_{l+r}\backslash {B_l}}$ and $N_{B_l}(\eta)$.
On the other hand, the Gibbs distribution $\nu^{\la,+}_{B_{l+r}}$ satisfies \eqref{eq:mu_satisfies_DB} as well.
Hence, for $j=0,1,\cdots,2l+1$, 
	\begin{equation}\label{eq:mu_infty_is_locally_Gibbs}
		\mu^\infty_{B_{l+r}}(\eta\*\omega\,\vert\,\omega, N_{B_l}(\eta)=j) = \nu^{\la,+}_{B_{l+r}}(\eta\*\omega\,\vert\,\omega, N_{B_l}(\eta)=j). 
	\end{equation}
	Note, that the right hand side in \eqref{eq:mu_infty_is_locally_Gibbs} does not depend on $\la$.
	Let $f^\infty$ be the density of $\mu^\infty_{B_{l+r}}$ with respect to $\nu^{\la,+}_{B_{l+r}}$, namely,
	\[
		\mu^\infty_{B_{l+r}}(\dif\eta) = f^\infty(\eta) \nu^{\la,+}_{B_{l+r}}(\dif\eta), \quad \eta\in\iX_{B_{l+r}}.
	\]
	By \eqref{eq:mu_infty_is_locally_Gibbs}, for any $\omega\in\iX_{B_{l+r}\backslash B_l}$, $f^\infty$ is constant on the hyperplanes 
\[
	\{\eta\*\omega \,\vert\, \eta\in\iX_{B_l},\,N_{B_l}(\eta)=j\},\quad j=0,1,...,2l+1.
\]
To simplify notations we put (c.f.~\eqref{eq:abuse_notations_of_Gibbs}),
\begin{align*}
	\nu := \nu^{\la,+}_{B_{l+r}}, \qquad \nu^{j}(\,\cdot\,) := \nu^{\la,+}_{B_{l+r}}(\,\cdot\, \vert \, \omega,\,N_{B_l}(\eta)=j).
\end{align*}
Note that $M_{B_l}\in[0,1]$ if and only if $N_{B_l}\in[0,2l+1]$.
By the tower property of conditional expectation, 
\begin{align}
\int[\iX_{B_{l+r}}] V^+_{l} &\dif\mu^\infty_{B_{l+r}} \leq \sum_{j=0}^{2l+1}\, \int[\iX_{B_{l+r}}] \1_{\{N_{B_l}=j\}}\, V_l^+ f^\infty \dif\nu \notag\\ 
																										 &= \sum_{j=0}^{2l+1}\, \int[\iX_{B_{l+r}}] \int[\iX_{B_{l+r}}] \1_{\{N_{B_l}=j\}}\, V_l^+ f^\infty \dif\nu^{j} \dif\nu \notag\\ 
																										 &= \sum_{j=0}^{2l+1}\, \int[\iX_{B_{l+r}}] \1_{\{N_{B_l}=j\}}\, f^\infty \dif\nu \int[\iX_{B_{l+r}}] V_l^+ \dif\nu^{j} \leq \sup_{j\in [0,2l+1]} \int[\iX_{B_{l+r}}] V_l^+ \dif\nu^j, \label{eq:tower_property}
\end{align}
where we applied 
\[
	\sum_{j \in [0,2l+1]}\, \int[\iX_{B_{l+r}}] \1_{\{N_{B_l}=j\}}\, f^\infty \dif\nu \leq 1.
\]
Thus, to prove the lemma, it is sufficient to show, that the letter integral in \eqref{eq:tower_property} vanishes uniformly in $j$, as $l\to\infty$.

For $k\in\N$, $r<k<l$, consider a disjoint cover of $\Z$ by the balls $B_k^i := B_k(i+2ki)$, $i\in\Z$.
Let $I\subset \Z$ be the maximal set such that $B_k^i\subset B_{l-r}$, $i\in I$.
We write $\tilde{B}$ for $B_l\backslash \bigcup\limits_{i\in I} B_k^i$.
As a result $B_l$ is a disjoint union of $\tilde{B}$ and  $B_k^i$, $i\in I$.
Then,
\[
	\int[\iX_{B_{l+r}}] V^+_l(0,\eta)\, \nu^j (\dif\eta) \leq \frac{|B_k|}{|B_l|} \sum_{i\in I} \int[\iX_{B_{l+r}}] V_k^i  \dif\nu^j + O(\frac{k}{l}), \quad l\to\infty, 
\]
where 
\[
	V_k^i := \big\vert \frac{1}{|B_k|} \sum_{y\in B_k^i} \tau_y g - \<g\>^+(M_{B_k^i}) \big\vert + \big\vert \<g\>^+(M_{B_k^i}) - \<g\>^+(M_{B_l}) \big\vert,
\]
and we used that
\[
	\frac{|\tilde{B}|}{|B_l|} \sum_{i\in I} \int[\iX_{B_{l+r}}] \big\vert \frac{1}{|\tilde{B}|} \sum_{y\in \tilde{B}} \tau_y g - \<g\>^+(M_{B_l}) \big\vert \dif\nu^j \leq \frac{4k}{l} \|g\|_\infty. 
\]

Now we want to compare $\int V_k^i \dif \nu^j$ for different $i\in I$.
Denote $S_i := \supp V_k^i$, $S_k^c := B_l \backslash S_i$.
Without loss of generality we may assume that $S_i$ is an interval on $\Z$.
By \eqref{eq:split_interval}, we may remove interactions between $B_{l+r}\backslash B_l$, $S_i$ and $S_i^c$, compensating it by errors $|\Err_j|\leq e^{4|q|}$, $j=1,2$:
\begin{multline*}
	\int[\iX_{B_{l+r}}] V_k^i\, \1_{\{N_{B_l}=j,\,\Pi_{B_{l+r}\backslash B_l}=\omega\}} \dif\nu_{B_{l+r}}^{\la,+} = \Err_1 \nu^{\la,+}_{B_{l+r}\backslash B_l}(\omega) \int[\iX_{B_l}] \1_{\{N_{B_l}=j\}} V_k^i \dif\nu_{B_l}^{\la,+} \\
																																																								= \Err_1 \Err_2 \nu^{\la,+}_{B_{l+r}\backslash B_l}(\omega) \sum_{l=0}^j \int[\iX_{S_i}] \1_{\{N_{S_i}=l\}} V_k^i\, \dif\nu^{\la,+}_{S_i} \int[\iX_{S_i^c}] \1_{\{N_{S_i^c}=j-l\}} \dif\nu^{\la,+}_{S_i^c}.
\end{multline*}
Next, in the last integral we glue together the disjoint components of $S_i^c$ with an error $|\Err_3|\leq e^{4|q|}$, and shift the obtained interval:
\[
	\int[\iX_{S_i^c}] \1_{\{N_{S_i^c}=j-l\}} \dif\nu_{S_i^c}^{\la,+} = \Err_3 \int[\iX_{[1,|S_i^c|]}] \1_{\{N_{[1,|S_i^c|]}=j-l\}} \dif\nu_{[1,|S_i^c|]}^{\la,+}. 
\]

By translation invariance, $\int \1_{\{N_{S_i}=l\}} V^i_k \dif\nu^{\la,+}_{S_i}$ and $|S_i^c|$ do not depend on $i\in I$. Therefore, for any fixed $i_*\in I$,
\[
	\int V_k^i \dif \nu^j = \Err_1 \Err_2 \Err_3 \int V_k^{i_*} \dif \nu^{j}, \quad i\in I.
\]

Hence, to prove the lemma it is sufficient to show 
\begin{multline}
	\lim_{k\to\infty} \lim_{l\to\infty} \sup_{j\in [0,2l+1]} \Big[ \int \big\vert \frac{1}{|B_k|} \sum_{y\in B_k^{i_*}} \tau_y g - \<g\>^+(M_{B_k^{i_*}}) \big\vert \dif\nu^j \\ + \int \big\vert \<g\>^+(M_{B_k^{i_*}}) - \<g\>^+(M_{B_l}) \big\vert \dif\nu^j \Big] =0. \label{eq:final_limit_in_one_block_est}
\end{multline}
The first integral in \eqref{eq:final_limit_in_one_block_est} vanishes by the equivalence of ensembles \cite[Corollary~7.13]{Geo1979} and Lemma~\ref{lem:LLN_Gibbs_translation_invariant}. The second integral in \eqref{eq:final_limit_in_one_block_est} vanishes by continuity of $\rho\to\<g\>(\rho)$ and Lemma~\ref{lem:LLN_weak}.
The proof is fulfilled.
\end{proof}
For $C>0$, $\eps\in(0,1)$, $\#_1,\#_2\in\{-,+\}$, we denote 
\begin{equation}\label{eq:A_eps}
\A_\eps^{\#_1,\#_2} := \{\mu\in\M(\iX^2)\,\vert\,	D^{K,\#_1,\#_2}[\mu] =0,\ \D_{x,z}^{K,\#_1,\#_2}[\mu] \leq C \eps,\ x,\,z\in K\Subset\Z \}.
\end{equation}
Let us also denote $\hat{\tau}_{x,z} \eta := (\tau_{-x}\eta,\tau_{-z}\eta)$. 
\begin{lemma}\label{lem:A_eps}

	Let $\#_1,\#_2\in\{-,+\}$, and $\La_N^{\#_1}$, $\La_N^{\#_2}$ equal to one of the intervals $[-N,-1]$, $[0,N]$, $[N+1,2N]$ (possibly different).
Then, for any $C>0$, as $N\to\infty$, all limiting measures of 
	\[
		\{\bar{\mu}^N_t\circ\hat{\tau}_{x_0,z_0}^{-1}\,\vert\, N\in\N,\ B_l(x_0)\subset\La_N^{\#_1},\ B_l(z_0)\subset\La_N^{\#_2},\ |x_0-z_0|\leq2\eps N\},
	\] 
	belong to $\A_\eps^{\#_1,\#_2}$.
\end{lemma}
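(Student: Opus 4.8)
The plan is to fix an arbitrary $K\Subset\Z$ and sites $x,z\in K$, pass to a weakly convergent subsequence $\bar{\mu}^{N}_t\circ\hat{\tau}_{x_0,z_0}^{-1}\to\mu^\infty$ in $\M(\iX^2)$ (weak compactness), and verify the two defining inequalities of $\A_\eps^{\#_1,\#_2}$ for $\mu^\infty$. By the variational formulas \eqref{eq:D^KK} and \eqref{eq:D_x_z^KK}, both $D^{K,\#_1,\#_2}$ and $\D_{x,z}^{K,\#_1,\#_2}$ are suprema of functionals that are linear and weakly continuous in $\mu$ (each integrand is a bounded function of finitely many coordinates), hence lower semicontinuous; it therefore suffices to bound each form along the sequence up to an error vanishing as $N\to\infty$.

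First I would control $D^{K,\#_1,\#_2}[\bar{\mu}^N_t\circ\hat{\tau}_{x_0,z_0}^{-1}]$. Because $B_l(x_0)\subset\La_N^{\#_1}$, $B_l(z_0)\subset\La_N^{\#_2}$ and the two blocks are disjoint (they lie in different intervals, or are separated as required for $\D_{x,z}$ to be defined), for $N$ large the two windows attached to $x_0$ and $z_0$ are disjoint and sit in the interior of their intervals. Writing $\tilde g(\eta):=g(\tau_{-x_0}\eta,\tau_{-z_0}\eta)$ for a test function $g$, translation invariance of the rates identifies a bond flip in the first copy with the flip of the corresponding bond of $\eta$ in the window of $x_0$, while disjointness ensures the second window is untouched; symmetrically for the second copy. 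Hence each of the two summands in \eqref{eq:def_D^KK} reduces to a single-copy expression on the shifted block, which by the variational formula \eqref{eq:D_K^N} is dominated by the single-copy entropy production there, at most $C/N$ by Lemma~\ref{lem:entropy_production_estimate}. Using $\sup_g(A(g)+B(g))\le\sup_gA+\sup_gB$ we conclude $D^{K,\#_1,\#_2}[\,\cdot\,]\le 2C/N\to0$, whence $D^{K,\#_1,\#_2}[\mu^\infty]=0$.

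The bound on $\D_{x,z}^{K,\#_1,\#_2}$ is the crux. Under $\hat{\tau}_{x_0,z_0}$ the simultaneous flip $(\ka,\xi)\mapsto(\ka^x,\xi^z)$, weighted by the exclusion indicator in \eqref{eq:c_xz}, is precisely the transposition $\eta\mapsto\eta^{a,b}$, where $a$ and $b$ are the sites of $\eta$ represented by $x$ in the first copy and by $z$ in the second; their separation obeys $|a-b|\le|x_0-z_0|+O(1)\le 2\eps N+O(1)$ (the $O(1)$ coming from $x,z\in K$). I would then invoke the moving-particle lemma: decompose this transposition into nearest-neighbor exchanges along the geodesic joining $a$ and $b$, apply the Cauchy--Schwarz inequality $(\sqrt{f_m}-\sqrt{f_0})^2\le m\sum_{i}(\sqrt{f_i}-\sqrt{f_{i-1}})^2$, and change variables in each term by the corresponding permutation of coordinates. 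As $\bar{\mu}^N_t$ is comparable under such permutations up to the bounded factors $\Err=e^{2|q|}$ of \eqref{eq:split_interval}, each term is dominated by a nearest-neighbor form $D^N_{p,p+1}[\bar{\mu}^N_t]$; summing over the $O(|a-b|)$ bonds of the path and invoking the total bound $D^N[\bar{\mu}^N_t]\le C/N$ of Lemma~\ref{lem:entropy_estimates} gives
\[
	\D_{x,z}^{K,\#_1,\#_2}[\bar{\mu}^N_t\circ\hat{\tau}_{x_0,z_0}^{-1}]\le C\,|a-b|\,D^N[\bar{\mu}^N_t]\le 2C^2\eps+O\big(\tfrac1N\big).
\]
Letting $N\to\infty$ and renaming the constant yields $\D_{x,z}^{K,\#_1,\#_2}[\mu^\infty]\le C\eps$; since $K,x,z$ were arbitrary, $\mu^\infty\in\A_\eps^{\#_1,\#_2}$.

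The main obstacle is the moving-particle step, where two points demand care. The swap rate $\c_{x,z}^{\#_1,\#_2}$ is assembled from the \emph{decoupled} Hamiltonians $H^{\#_1}_{K_r}$ and $H^{\#_2}_{K_r}$, whereas the nearest-neighbor forms along the path carry the full Hamiltonian with its coupling across $u=0$ (or $u=1$); reconciling them produces exactly the factors $\Err$ of \eqref{eq:boundary_vs_free_boundary} and \eqref{eq:split_interval}, which one must organize so that they contribute only a bounded multiplicative constant and not a factor growing with $|a-b|$. Moreover, when $\#_1\ne\#_2$ the path necessarily crosses the boundary bond governed by \eqref{eq:exchange_rates_on_the_boundaries}; this bond nevertheless lies in $[-N,2N]$, so its Dirichlet form is a summand of $D^N[\bar{\mu}^N_t]$ and the telescoping estimate goes through unchanged.
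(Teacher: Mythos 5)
Your proposal is correct and follows essentially the same route as the paper: the first condition of \eqref{eq:A_eps} is obtained by turning a test function $g(\ka,\xi)$ into the single-copy function $u(\eta)=g(\tau_{-x_0}\eta,\tau_{-z_0}\eta)$ and bounding via the variational formulas of Lemma~\ref{lem:variational_representation_of_entropy_production} together with $D^N[\bar{\mu}^N_t]\leq\frac{C}{N}$ from Lemma~\ref{lem:entropy_estimates}, while the second condition is exactly the telescoping (moving-particle) argument, which the paper imports from \cite[Lemma~4.1]{FHU1991} to get $\D_{x,z}^{K,\#_1,\#_2}\leq \tilde{C}(2\eps N+4l)\,D^N[\bar{\mu}^N_t]\leq C\eps$. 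The boundary-bond and decoupled-Hamiltonian error factors $\Err$ that you track explicitly are precisely what the paper's constant $\tilde{C}$ absorbs.
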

\begin{proof}
	First note that for any $\F_{K_r}\times\F_{K_r}$\,-\,measurable function $g$, $u(\eta):= g(\tau_{-x_0}\eta,\tau_{-z_0}\eta)$ is $\F_\La$\,-\,measurable with $\La:=(K_r+x_0)\cup(K_r+z_0)$.
In particular, $u$ is $\F_N$\,-\,measurable for sufficiently large $N$.
 Then \eqref{eq:D_K^N}, \eqref{eq:D^KK} and Lemma~\ref{lem:entropy_estimates} imply 
	\[
	D^{K,\#_1,\#_2}[\bar{\mu}^N_t\circ\hat{\tau}_{x_0,z_0}^{-1}] \leq D_{\La}^N[\bar{\mu}^N_t]\leq D^{N}[\bar{\mu}^N_t] \leq \frac{C}{N}.
	\]
	Taking $N\to\infty$, we conclude the first condition in \eqref{eq:A_eps}.

	To prove the second condition, we apply the so called telescopic argument. Namely, repeating line by line the proof of \cite[Lemma~4.1]{FHU1991} with $\hat{\tau}_{x_0,z_0}$ in place of $\hat{\tau}_{0,z}$,
	and applying Lemma~\ref{lem:entropy_estimates} we conclude that there exist constants $\tilde{C},\,C>0$ such that 
	\begin{align*}
		\D_{x,z}^{K,\#_1,\#_2} [\bar{\mu}^N_t] 
																	 &\leq \tilde{C}|x-z| \sum_{y=x}^{z-1} D^N_{y,y+1}[\bar{\mu}_t^N] \leq \tilde{C} (2\eps N+4l) D^N[\bar{\mu}_t^N] \leq C\eps. 
	\end{align*}
	The proof is fulfilled.
\end{proof}

\begin{lemma}[Ergodic~representation]\label{lem:ergodic_representation}
	For any $\mu\in\A_\eps^{\#_1,\#_2}$, $\#_1,\#_2\in\{-,+\}$, there exists a probability measure $w_\mu$ on $[0,1]^2$, such that, for any local functions $g_1,\, g_2$, 
  \begin{equation}\label{eq:ergodic_representation}
		\int[\iX^2] g_1(\ka) g_2(\xi) \mu(\dif\ka \dif\xi) = \int[[0,1]^2] \<g_1\>^{\#_1}(\rho_1)\,\<g_2\>^{\#_2}(\rho_2) w_\mu(\dif\rho_1 \dif\rho_2).
  \end{equation}
\end{lemma}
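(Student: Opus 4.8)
The plan is to read off the representation from the defining conditions of $\A_\eps^{\#_1,\#_2}$ in two stages: first use $D^{K,\#_1,\#_2}[\mu]=0$ (for every $K\Subset\Z$) to show that $\mu$ is, conditionally on the two block occupation numbers, a product of canonical Gibbs measures; then pass to the grand-canonical limit coordinate by coordinate via the (uniform) equivalence of ensembles, identifying $w_\mu$ as the limiting law of the macroscopic density pair $(M_{B_l},M_{B_l})$ under $\mu$. Note that the second condition $\D_{x,z}^{K,\#_1,\#_2}[\mu]\leq C\eps$ in \eqref{eq:A_eps} is not needed for this lemma; it will only serve in the subsequent two-block estimate to force $w_\mu$ onto a neighbourhood of the diagonal.

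First I would observe that the sum defining $D^{K,\#_1,\#_2}[\mu]$ in \eqref{eq:def_D^KK} splits into a term carrying only exchanges in the $\ka$-variable and a term carrying only exchanges in the $\xi$-variable, so that $D^{K,\#_1,\#_2}[\mu]=0$ forces every single summand to vanish. Fixing $\xi$ and running the detailed-balance computation \eqref{eq:mu_satisfies_DB} from the proof of Lemma~\ref{lem:one_block_estimate} in the $\ka$-variable, and symmetrically fixing $\ka$ and running it in $\xi$, yields $\mu(\ka^{x,x+1},\xi)=\mu(\ka,\xi)\,e^{-\Delta_{x,x+1}H^{\#_1}_{K_r}(\ka)}$ and the analogous identity in $\xi$, valid for all $x,x+1\in K$ and all $K$. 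Exactly as in \eqref{eq:mu_infty_is_locally_Gibbs}, this determines $\mu$ on each hyperplane $\{N_{B_l}(\ka)=m,\ N_{B_l}(\xi)=n\}$: conditionally on these counts (and on the exterior), $\mu$ equals the product
\[
	\nu^{\la,\#_1}_{B_l}(\,\cdot\,\vert\,N_{B_l}=m)\ \otimes\ \nu^{\la,\#_2}_{B_l}(\,\cdot\,\vert\,N_{B_l}=n),
\]
which is independent of $\la$, the boundary interactions being absorbed into errors controlled by \eqref{eq:split_interval} and \eqref{eq:boundary_vs_free_boundary}.

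Next, for fixed local $g_1,g_2$ and $l$ large enough that both supports lie in $B_l$, I would condition on the block counts and use the above factorization to write
\[
	\int[\iX^2] g_1(\ka)g_2(\xi)\,\mu(\dif\ka\,\dif\xi)=\sum_{m,n} \mu\big(N_{B_l}(\ka){=}m,\,N_{B_l}(\xi){=}n\big)\Big(\!\int g_1\,\dif\nu^{\la,\#_1}_{B_l}(\cdot|N_{B_l}{=}m)\Big)\Big(\!\int g_2\,\dif\nu^{\la,\#_2}_{B_l}(\cdot|N_{B_l}{=}n)\Big).
\]
By the equivalence of ensembles (Proposition~\ref{prop:equivalence_of_ensembles} and \cite[Corollary~7.13]{Geo1979}) together with Lemma~\ref{lem:LLN_Gibbs_translation_invariant}, each canonical average converges to $\<g_i\>^{\#_i}(m_i/|B_l|)$ as $l\to\infty$, and Lemma~\ref{lem:LLN_weak} provides the crucial uniformity of this convergence over the densities $m_i/|B_l|\in[0,1]$. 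Writing $w_\mu^l$ for the law of $(M_{B_l}(\ka),M_{B_l}(\xi))$ under $\mu$, a tight family on the compact square $[0,1]^2$, I would pass to a weakly convergent subsequence $w_\mu^l\to w_\mu$ and invoke continuity of $\rho\mapsto\<g_i\>^{\#_i}(\rho)$ to obtain \eqref{eq:ergodic_representation} for that limit.

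The genuinely delicate points are this uniform equivalence of ensembles — precisely why Lemma~\ref{lem:LLN_weak} had to be proved uniformly in $\rho$ — and the well-definedness of $w_\mu$, i.e. its independence of the chosen subsequence. For the latter I would note that the identity just derived fixes $\int[[0,1]^2]\<g_1\>^{\#_1}(\rho_1)\<g_2\>^{\#_2}(\rho_2)\,w_\mu(\dif\rho_1\,\dif\rho_2)$ for all local $g_1,g_2$. Since the mean occupation under a grand-canonical state of density $\rho$ is $\rho$, i.e. $\<\Pi\>^{\#}(\rho)=\rho$, the choices $g_i=\Pi$ separate the two coordinates; and, using decay of correlations of the one-dimensional Gibbs measures to realize products $\<g\>^{\#}(\rho)\<g'\>^{\#}(\rho)$ as uniform limits of single expectations $\<g\,\tau_k g'\>^{\#}(\rho)$, the functions $(\rho_1,\rho_2)\mapsto\<g_1\>^{\#_1}(\rho_1)\<g_2\>^{\#_2}(\rho_2)$ generate, up to uniform closure, an algebra that separates points of $[0,1]^2$ and contains the constants. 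By Stone--Weierstrass this determines $\int F\,\dif w_\mu$ for every $F\in C([0,1]^2)$, so all subsequential limits coincide and $w_\mu$ is unique, which completes the proof.
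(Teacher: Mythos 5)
Your argument is correct in substance but takes a genuinely different route from the paper's. The paper proves this lemma in one line, by appealing to the proof of \cite[Proposition~4.2]{FHU1991} with the Birkhoff ergodic theorem replaced by the already-established one-block estimate (Lemma~\ref{lem:one_block_estimate}); the whole point of that substitution is that a measure $\mu\in\A_\eps^{\#_1,\#_2}$ is not known to be translation invariant a priori, so Birkhoff is unavailable. You never invoke Lemma~\ref{lem:one_block_estimate} at all: instead you inline its mechanism, deriving from $D^{K,\#_1,\#_2}[\mu]=0$ the two-variable analogue of the detailed-balance identity \eqref{eq:mu_satisfies_DB}, hence the conditional product-of-canonical-Gibbs structure (the analogue of \eqref{eq:mu_infty_is_locally_Gibbs}), and then passing to grand canonical expectations by equivalence of ensembles. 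Defining $w_\mu$ as a subsequential weak limit of the laws of $(M_{B_l}(\ka),M_{B_l}(\xi))$ under $\mu$ is exactly the right substitute for the almost-sure density furnished by Birkhoff, and it handles the lack of translation invariance correctly. Your side remarks are also accurate: the condition $\D_{x,z}^{K,\#_1,\#_2}[\mu]\leq C\eps$ is indeed not used here (it enters only in Lemma~\ref{lem:supp_of_w_mu}), and your Stone--Weierstrass uniqueness argument, while plausible, is superfluous since the lemma asserts only existence of $w_\mu$ and rests on an unproven uniform decay-of-correlations claim. What the paper's route buys is brevity by reusing Lemma~\ref{lem:one_block_estimate}; what yours buys is a self-contained proof that makes explicit where each hypothesis is used.

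The one step that would fail as written is your treatment of boundary terms in the factorization identity. For $\#_i=+$, the conditional law of $\ka$ on $B_l$ given the particle count and the exterior is the canonical Gibbs measure \emph{with boundary condition} $\omega$, i.e.\ the measure denoted $\nu^{j}$ in the proof of Lemma~\ref{lem:one_block_estimate}, not the free-boundary measure $\nu^{\la,+}_{B_l}(\,\cdot\,\vert\,N_{B_l}=m)$ appearing in your display. The bounds \eqref{eq:boundary_vs_free_boundary} and \eqref{eq:split_interval}, which you invoke to dismiss this discrepancy, give only multiplicative errors of size $e^{4|q|}$ that do \emph{not} tend to $1$ as $l\to\infty$. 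Such crude factors are harmless in the paper's one-block proof, where they multiply quantities that vanish, but they cannot repair an exact identity: your argument would deliver \eqref{eq:ergodic_representation} only up to a bounded multiplicative factor, which is useless. The repair is to keep the (random) boundary conditions inside the factorization and to demand equivalence of ensembles uniformly over boundary conditions as well as over densities, namely
\begin{equation*}
	\lim_{l\to\infty}\ \sup_{\omega}\ \max_{0\leq m\leq |B_l|}\ \Big\vert \int_{\iX_{B_l}} g \dif \nu^{\la,\#}_{B_l}(\,\cdot\,\vert\,\omega,\,N_{B_l}=m) - \<g\>^{\#}(m/|B_l|) \Big\vert = 0,
\end{equation*}
which is precisely the form in which \cite[Corollary~7.13]{Geo1979}, Lemma~\ref{lem:LLN_Gibbs_translation_invariant} and Lemma~\ref{lem:LLN_weak} are deployed inside the proof of Lemma~\ref{lem:one_block_estimate}. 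With that substitution your proof goes through.
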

\begin{proof}
	The proof follows from the one of \cite[Proposition~4.2]{FHU1991}, where instead of applying the Birkhoff theorem (which the authors call the individual ergodic theorem), we use the one-block estimate (Lemma~\ref{lem:one_block_estimate}). 
\end{proof}

\begin{lemma}[Two-block~estimate]\label{lem:two_block_estimate}
	For $\La_N\in\{[-N,-1]$, $[0,N]$, $[N+1,2N]\}$, and any $F:[0,1]\to\R_+$ - bounded continuous, as $N\to\infty$, $l\to\infty$, $\eps\to0_+$, 
  \begin{equation}
		\max_{\substack{\{x_0,\,z_0:\,|x_0-z_0|\leq 2\eps N\\ B_l(x_0)\cup B_l(z_0)\subset \La_N\}}} \int[\iX] |F(M_{B_l(x_0)}(\eta)) -  F(M_{B_l(z_0)}(\eta))| \bar{\mu}^N_t(\dif\eta) \to 0.\label{eq:two_block_estimate_for_densities}
  \end{equation}
	Moreover, for any $F:\R\to\R_+$ - bounded continuous, as $N\to\infty$, $l\to\infty$, $\eps\to0_+$, 
	\begin{align}
		\max_{\substack{\{x_0\in [-\eps N,-l)\\z_0\in [l,\eps N]\}}} \int[\iX] |F(\la^-_{M_{B_l(x_0)}(\eta)}) - F(\la^+_{M_{B_l(z_0)}(\eta)})| \bar{\mu}^N_t(\dif\eta) &\to 0, \label{eq:two_block_estimate_for_potentials} \\
		\max_{\substack{\{x_0\in [N+1-\eps N,N+1-l)\\z_0\in [N+1+l,N+1+\eps N]\}}} \int[\iX] |F(\la^+_{M_{B_l(x_0)}(\eta)}) -  F(\la^-_{M_{B_l(z_0)}(\eta)})| \bar{\mu}^N_t(\dif\eta) &\to 0. \label{eq:two_block_estimate_for_potentials_ii}
  \end{align}
\end{lemma}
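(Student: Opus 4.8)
The plan is to prove all three assertions by the same two-block scheme, recentering the two blocks to the origin, replacing the process law by a limiting measure in $\A_\eps^{\#_1,\#_2}$, and then reading the conclusion off the ergodic decomposition of Lemma~\ref{lem:ergodic_representation} combined with the vanishing of the cross Dirichlet form $\D_{x,z}^{K,\#_1,\#_2}$. In each case the pair of superscripts is dictated by the intervals: \eqref{eq:two_block_estimate_for_densities} uses $\#_1=\#_2$ (both blocks of the same type), while \eqref{eq:two_block_estimate_for_potentials} uses $(\#_1,\#_2)=(-,+)$ and \eqref{eq:two_block_estimate_for_potentials_ii} uses $(\#_1,\#_2)=(+,-)$; note that the constraint $|x_0-z_0|\leq2\eps N$ is automatic in the latter two cases.

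First I would recenter. For an admissible pair $(x_0,z_0)$ set $\mu^N:=\bar{\mu}^N_t\circ\hat{\tau}_{x_0,z_0}^{-1}\in\M(\iX^2)$; recentering both blocks to the origin turns the two block densities into $M_{B_l}(\ka)$ and $M_{B_l}(\xi)$, so that the integral in \eqref{eq:two_block_estimate_for_densities} equals $\int[\iX^2] h_l\,\dif\mu^N$ with $h_l(\ka,\xi):=|F(M_{B_l}(\ka))-F(M_{B_l}(\xi))|$, and the integrals in \eqref{eq:two_block_estimate_for_potentials}, \eqref{eq:two_block_estimate_for_potentials_ii} are of the same form with $\la^{\#_1}_{M_{B_l}(\ka)}$, $\la^{\#_2}_{M_{B_l}(\xi)}$ in place of the densities. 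Since $h_l$ is bounded and depends only on the coordinates in $B_l$, it is continuous on $\iX^2$. I would then argue by contradiction using compactness of $\M(\iX^2)$: were the maximum to stay bounded away from $0$ along some $N\to\infty$ with maximizing pairs $(x_0,z_0)$, a weak limit $\mu$ of the corresponding $\mu^N$ would, by Lemma~\ref{lem:A_eps}, lie in $\A_\eps^{\#_1,\#_2}$. Hence it suffices to show
\[
	\lim_{\eps\to0_+}\ \lim_{l\to\infty}\ \sup_{\mu\in\A_\eps^{\#_1,\#_2}} \int[\iX^2] h_l\,\dif\mu = 0.
\]

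Next I would invoke the ergodic representation. As $\iX_{B_l}\times\iX_{B_l}$ is finite, $h_l$ is a finite linear combination of products $g_1(\ka)g_2(\xi)$ of local functions, so Lemma~\ref{lem:ergodic_representation} applies term by term and yields $\int[\iX^2] h_l\,\dif\mu = \int[{[0,1]^2}] G_l(\rho_1,\rho_2)\,w_\mu(\dif\rho_1\dif\rho_2)$, where $G_l(\rho_1,\rho_2)$ is the expectation of $h_l$ under the product Gibbs measure $\nu^{\la^{\#_1}_{\rho_1},\#_1}\otimes\nu^{\la^{\#_2}_{\rho_2},\#_2}$. Letting $l\to\infty$ and using that, by Lemma~\ref{lem:LLN_weak}, $M_{B_l}$ concentrates at $\rho_1$ (resp.\ $\rho_2$) uniformly in the density, the triangle inequality together with boundedness of $F$ and continuity of $\rho\mapsto\la^{\#}_\rho$ (Proposition~\ref{prop:equivalence_of_ensembles}) gives $G_l\to|F(\rho_1)-F(\rho_2)|$ in case \eqref{eq:two_block_estimate_for_densities}, and $G_l\to|F(\la^{\#_1}_{\rho_1})-F(\la^{\#_2}_{\rho_2})|$ in cases \eqref{eq:two_block_estimate_for_potentials}, \eqref{eq:two_block_estimate_for_potentials_ii}. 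The one delicate point, the blow-up of $\la^{\#}_\rho$ as $\rho\to\{0,1\}$, is handled exactly as the second integral of \eqref{eq:final_limit_in_one_block_est} in the proof of Lemma~\ref{lem:one_block_estimate}, again via boundedness of $F$ and the uniformity in Lemma~\ref{lem:LLN_weak}.

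The $\eps\to0_+$ limit is the heart of the matter and the step I expect to be the main obstacle, since it is where the boundary condition \eqref{eq:Dirichlet_boundary} is produced. Taking a weak limit $\mu_0$ of maximizers $\mu_\eps\in\A_\eps^{\#_1,\#_2}$ and using lower semicontinuity of the quadratic form in its variational representation \eqref{eq:D_x_z^KK}, I would obtain $D^{K,\#_1,\#_2}[\mu_0]=0$ and $\D_{x,z}^{K,\#_1,\#_2}[\mu_0]=0$ for all $x,z\in K$. Vanishing of the entropy production \eqref{eq:entropy_production_x_z} forces the detailed-balance identity $\c_{x,z}^{\#_1,\#_2}(\ka^x,\xi^z)\,\mu_0(\ka^x,\xi^z)=\c_{x,z}^{\#_1,\#_2}(\ka,\xi)\,\mu_0(\ka,\xi)$ for the conditioned measure; inserting the product representation $\mu_0=\int\nu^{\la^{\#_1}_{\rho_1},\#_1}\otimes\nu^{\la^{\#_2}_{\rho_2},\#_2}\,w_{\mu_0}$, the Hamiltonian factors in \eqref{eq:c_xz} and \eqref{eq:DB_x_z_analogue} cancel and only the scalar factor $e^{\la^{\#_2}_{\rho_2}-\la^{\#_1}_{\rho_1}}-1$ survives. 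Testing against products $\phi(\ka)\psi(\xi)$ of local functions, which separate densities, then gives $\big(e^{\la^{\#_2}_{\rho_2}-\la^{\#_1}_{\rho_1}}-1\big)\,w_{\mu_0}=0$, i.e.\ $w_{\mu_0}$ is supported on $\{\la^{\#_1}_{\rho_1}=\la^{\#_2}_{\rho_2}\}$. On this set the limiting integrand vanishes: trivially for \eqref{eq:two_block_estimate_for_potentials}, \eqref{eq:two_block_estimate_for_potentials_ii}, and because $\#_1=\#_2$ together with strict monotonicity of $\rho\mapsto\la^{\#}_\rho$ forces $\rho_1=\rho_2$ for \eqref{eq:two_block_estimate_for_densities}. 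Establishing precisely this equivalence — that the cross-rates \eqref{eq:c_xz} are reversible for a product of grand-canonical Gibbs measures exactly when the two chemical potentials coincide — and upgrading it to concentration of $w_{\mu_0}$ is the genuinely new ingredient beyond the classical same-type two-block estimate of \cite{GPV1988,FHU1991}.
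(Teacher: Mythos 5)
Your overall architecture is the same as the paper's: recentre via $\hat{\tau}_{x_0,z_0}$ and Lemma~\ref{lem:A_eps} so that weak limits lie in $\A_\eps^{\#_1,\#_2}$, use the ergodic representation (Lemma~\ref{lem:ergodic_representation}) together with the uniform law of large numbers (Lemma~\ref{lem:LLN_weak}) to replace the configuration integral by an integral against $w_\mu$, and conclude from concentration of $w_{\mu_0}$ on $\{\la^{\#_1}_{\rho_1}=\la^{\#_2}_{\rho_2}\}$. Your superscript bookkeeping $(\#_1,\#_2)$ for the three limits matches the paper's proof, and your term-by-term cylinder decomposition of $h_l$ is a sound, slightly more direct substitute for the paper's route through Lemma~\ref{lem:lim_in_X2}, which works with $|F_1-F_2|^2$ and the H\"older inequality precisely because the ergodic representation applies to products of local functions.

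The genuine divergence is the concentration step (the paper's Lemma~\ref{lem:supp_of_w_mu}), which you re-derive from the exact identity $\D^{K,\#_1,\#_2}_{x,z}[\mu_0]=0$, and this is where your argument has a gap. Your computation is correct up to the family of identities
\[
	\int[[0,1]^2] A_\phi(\rho_1)\,B_\psi(\rho_2)\,\big(e^{\la^{\#_2}_{\rho_2}-\la^{\#_1}_{\rho_1}}-1\big)\, w_{\mu_0}(\dif\rho_1 \dif\rho_2)=0,
	\qquad
	A_\phi = \<\phi\,\1_{\{\Pi_x=1\}}e^{-\frac{1}{2}\Delta_x H^{\#_1}_{K_r}}\>^{\#_1},
\]
(and similarly $B_\psi$ with $\1_{\{\Pi_z=0\}}$), valid for all local $\phi,\psi$. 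But to conclude $\big(e^{\la^{\#_2}_{\rho_2}-\la^{\#_1}_{\rho_1}}-1\big)w_{\mu_0}=0$ you need the family $\{A_\phi\otimes B_\psi\}$ to be \emph{measure-determining} for signed measures on $[0,1]^2$, not merely to ``separate densities'': the factor $e^{\la_2-\la_1}-1$ changes sign across the curve $\la^{\#_1}_{\rho_1}=\la^{\#_2}_{\rho_2}$, so positive and negative mass of $w_{\mu_0}$ on the two sides could a priori cancel in every one of your identities. Point separation is far from sufficient (the span of $\{1,\rho\}$ separates points of $[0,1]$ but annihilates many signed measures), and Stone--Weierstrass is not directly available because $\{\<g\>^{+} : g\ \text{local}\}$ is a linear space but not an algebra for the interacting Gibbs measure. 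The missing step can be supplied with tools you already invoke: take $\phi=f(M_{B_l})$, $\psi=h(M_{B_l})$ with $f,h$ continuous and let $l\to\infty$; by Lemma~\ref{lem:LLN_weak} the weights converge uniformly to $c_1(\rho_1)f(\rho_1)$ and $c_2(\rho_2)h(\rho_2)$ with $c_1>0$ on $(0,1]$ and $c_2>0$ on $[0,1)$, and genuine products $f(\rho_1)h(\rho_2)$ do determine signed measures; the extreme densities, where $\la^{\#}_\rho=\pm\infty$ and the weights vanish, then need a separate short argument, e.g.\ via the companion identity obtained from cylinders with $(\ka_x,\xi_z)=(0,1)$. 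Note that the paper avoids this issue altogether: Lemma~\ref{lem:supp_of_w_mu} never uses exact detailed balance of the limit measure, but instead tests the variational bound $\D^{B_l,\#_1,\#_2}_{x,z}[\mu]\leq C\eps$ against $g=e^{\frac{al}{2}(\alpha_1 M^2_{B_l}(\ka)+\alpha_2 M^2_{B_l}(\xi))}$, passes to the inequality \eqref{eq:leq0}, and concludes from positivity of the weights and strict monotonicity of $\rho\mapsto\la^{\pm}_\rho$, at the price of the extra $a\to0_+$ limit but without any measure-determining family.
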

The proof of the two-block estimate relies on several auxiliary lemmas.

\begin{lemma}\label{lem:lim_in_X2}
	For any $\#_1,\#_2\in\{-,+\}$, $F_1,F_2:[0,1]\to\R_+$ - bounded continuous,
	\begin{multline}\label{eq:lim_eq_sup}
		\lim_{l\to\infty} \sup_{\mu\in\A^{\#_1,\#_2}_{\eps}} \int[\iX^2] |F_1(M_{B_l}(\ka))-F_2(M_{B_l}(\xi))|^2 \mu(\dif\ka\dif\xi)\\
		= \sup_{\mu\in\A^{\#_1,\#_2}_{\eps}} \int[[0,1]^2] |F_1(\rho_1)-F_2(\rho_2)|^2 w_\mu(\dif\rho_1\dif\rho_2).
	\end{multline}
\end{lemma}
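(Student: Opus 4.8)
The plan is to expand the square, apply the ergodic representation of Lemma~\ref{lem:ergodic_representation} term by term, and then pass to the limit $l\to\infty$ uniformly in $\mu$. For a \emph{fixed} $l$ both $\ka\mapsto F_1(M_{B_l}(\ka))$ and $\xi\mapsto F_2(M_{B_l}(\xi))$ are local functions, so writing
\[
	(F_1(M_{B_l}(\ka))-F_2(M_{B_l}(\xi)))^2 = F_1^2(M_{B_l}(\ka)) - 2F_1(M_{B_l}(\ka))F_2(M_{B_l}(\xi)) + F_2^2(M_{B_l}(\xi)),
\]
each summand is a product of a local function of $\ka$ and a local function of $\xi$ (with the trivial factor $1$ for the two pure terms, so that $\<1\>^{\#}=1$). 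Hence Lemma~\ref{lem:ergodic_representation} applies to each, and for every $\mu\in\A^{\#_1,\#_2}_\eps$ I obtain
\[
	\int[\iX^2] (F_1(M_{B_l}(\ka))-F_2(M_{B_l}(\xi)))^2\, \mu(\dif\ka\dif\xi) = \int[[0,1]^2] \Psi_l(\rho_1,\rho_2)\, w_\mu(\dif\rho_1\dif\rho_2),
\]
where
\[
	\Psi_l(\rho_1,\rho_2) := \<F_1^2(M_{B_l})\>^{\#_1}(\rho_1) - 2\<F_1(M_{B_l})\>^{\#_1}(\rho_1)\<F_2(M_{B_l})\>^{\#_2}(\rho_2) + \<F_2^2(M_{B_l})\>^{\#_2}(\rho_2).
\]

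Next I would prove that $\Psi_l$ converges, \emph{uniformly} on $[0,1]^2$, to $\Psi_\infty(\rho_1,\rho_2):=(F_1(\rho_1)-F_2(\rho_2))^2$. The key ingredient is that, for any bounded continuous $F$ and any $\#$, one has $\<F(M_{B_l})\>^{\#}(\rho)\to F(\rho)$ uniformly in $\rho\in[0,1]$. Indeed, fix $\zeta>0$; uniform continuity of $F$ supplies $\delta>0$ with $|F(a)-F(b)|<\zeta$ whenever $|a-b|<\delta$, while Lemma~\ref{lem:LLN_weak} supplies $\sigma>0$, $L\in\N$ with $\sup_{\rho}\nu^{\la^{\#}_\rho,\#}(|M_{B_l}-\rho|\ge\delta)\le e^{-\sigma l}$ for $l\ge L$. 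Splitting the expectation of $|F(M_{B_l})-F(\rho)|$ over $\{|M_{B_l}-\rho|<\delta\}$ and its complement gives
\[
	\sup_{\rho\in[0,1]} \big|\<F(M_{B_l})\>^{\#}(\rho) - F(\rho)\big| \le \zeta + 2\|F\|_\infty\, e^{-\sigma l}, \quad l\ge L,
\]
whence the claimed uniform convergence after letting $l\to\infty$ and then $\zeta\to0$. Applying this to $F_1$, $F_2$, $F_1^2$, $F_2^2$ and recombining yields $\|\Psi_l-\Psi_\infty\|_\infty\to0$.

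It then remains to interchange $\lim_{l\to\infty}$ with $\sup_\mu$, noting that $\A^{\#_1,\#_2}_\eps$ does not depend on $l$. Since each $w_\mu$ is a \emph{probability} measure on $[0,1]^2$, the display
\[
	\sup_{\mu\in\A^{\#_1,\#_2}_\eps} \Big| \int[[0,1]^2] \Psi_l\, w_\mu - \int[[0,1]^2] \Psi_\infty\, w_\mu \Big| \le \|\Psi_l-\Psi_\infty\|_\infty
\]
tends to $0$ as $l\to\infty$, and the elementary inequality $|\sup_\mu a_\mu - \sup_\mu b_\mu|\le \sup_\mu|a_\mu-b_\mu|$ then shows $\sup_\mu\int\Psi_l\,w_\mu\to\sup_\mu\int\Psi_\infty\,w_\mu$. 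Combining this with the ergodic-representation identity of the first step gives exactly \eqref{eq:lim_eq_sup}.

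The step I expect to be the main obstacle is the \emph{uniformity in $\rho$} of the convergence $\<F(M_{B_l})\>^{\#}(\rho)\to F(\rho)$: without it the error bound on $\Psi_l$ would depend on $\mu$ through where $w_\mu$ concentrates its mass, and the interchange of $\lim_l$ and $\sup_\mu$ would break down. This is precisely why the uniform large-deviation estimate of Lemma~\ref{lem:LLN_weak}, rather than a merely pointwise law of large numbers, is indispensable here.
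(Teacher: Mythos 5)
Your proposal is correct and follows essentially the same route as the paper: expand the square, apply the ergodic representation of Lemma~\ref{lem:ergodic_representation} term by term, control each Gibbs expectation by splitting on $\{|M_{B_l}-\rho|<\delta\}$ and its complement via the uniform large-deviation bound of Lemma~\ref{lem:LLN_weak} together with (uniform) continuity of $F_i$, and use the resulting uniformity in $\mu$ to interchange $\lim_l$ with $\sup_\mu$. Your explicit formulation of the uniform convergence $\sup_{\rho}|\<F(M_{B_l})\>^{\#}(\rho)-F(\rho)|\to0$ and the bound $|\sup_\mu a_\mu-\sup_\mu b_\mu|\le\sup_\mu|a_\mu-b_\mu|$ is just a cleaner packaging of the same argument.
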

\begin{proof}
	By Lemma~\ref{lem:ergodic_representation}, the integral on the left hand side in \eqref{eq:lim_eq_sup} equals to (cf.~\eqref{eq:expectation_wrt_canonical_Gibbs})
	\begin{multline}\label{eq:int_eq_int}
		\int[\iX^2] \big[ \<F_1(M_{B_l}(\ka))^2\>^{\#_1}(\rho_1) -2 \<F_1(M_{B_l}(\ka))\>^{\#_1}(\rho_1) \<F_2(M_{B_l}(\xi))\>^{\#_2}(\rho_2)\\ 
		+\<F_2(M_{B_l}(\xi))^2\>^{\#_2}(\rho_2) \big] w_\mu(\dif\rho_1,\dif\rho_2).
	\end{multline}
	The first summand in \eqref{eq:int_eq_int} equals to 
	\begin{multline*}
	\int[[0,1]^2] \< F_1(M_{B_l}(\ka))^2\>^{\#_1}(\rho_1) \1_{\{|M_{B_l}(\ka)-\rho_1|\leq \delta\}} w_\mu(\dif\rho_1\dif\rho_2) \\
										+ O(\|F_1\|_\infty^2 \big[ \sup_{\rho_1\in[0,1]} \nu^{\la^{\#_1}_{\rho_1},\#_1}(|M_{B_l}(\ka)-\rho_1|>\delta) \big]), 
	\end{multline*}
	which converges uniformly in $\mu\in\A^{\#_1,\#_2}_{\eps}$ to $\int[[0,1]^2] F_1(\rho_1)^2\ w_\mu(\dif\rho_1\dif\rho_2)$ as $l\to\infty$, $\delta\to0$.
Repeating the argument for other summands in \eqref{eq:int_eq_int} we conclude \eqref{eq:lim_eq_sup}.
\end{proof}

\begin{lemma}\label{lem:supp_of_w_mu}
	For any limiting measure $\mu_0$ of $\A_{\eps}^{\#_1,\#_2}$, $\#_1,\#_2\in\{-,+\}$ (as $\eps\to0_+$), support of $w_{\mu_0}$, which is given by Lemma~\ref{lem:ergodic_representation}, satisfies the inclusion
	\begin{equation}\label{eq:supp_w_la1_la2}
		\supp w_{\mu_0} \subset \{(\rho_1,\rho_2)\in[0,1]^2 \,\vert\, \la^{\#_1}_{\rho_1}=\la^{\#_2}_{\rho_2}\}.
	\end{equation}
	In particular, if $\#_1=\#_2$, then
	\begin{equation}\label{eq:supp_w_rho1_rho2}
		\supp w_{\mu_0} \subset \{(\rho_1,\rho_2)\in[0,1]^2 \,\vert\, \rho_1 = \rho_2\}.
	\end{equation}
\end{lemma}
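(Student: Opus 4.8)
The plan is to send $\eps\to0_+$ first and then to extract the support constraint from the detailed-balance relations that the two vanishing Dirichlet forms impose on the limit. Fix a limiting measure $\mu_0$, say $\mu_\eps\to\mu_0$ weakly with $\mu_\eps\in\A_\eps^{\#_1,\#_2}$. Since, by the variational formulas \eqref{eq:D^KK} and \eqref{eq:D_x_z^KK}, both $D^{K,\#_1,\#_2}$ and $\D_{x,z}^{K,\#_1,\#_2}$ are suprema of weakly continuous linear functionals, they are lower semicontinuous; from $D^{K,\#_1,\#_2}[\mu_\eps]=0$ and $\D_{x,z}^{K,\#_1,\#_2}[\mu_\eps]\leq C\eps\to0$ I therefore obtain
\[
	D^{K,\#_1,\#_2}[\mu_0]=0, \qquad \D_{x,z}^{K,\#_1,\#_2}[\mu_0]=0 \ \text{ for all } x,z\in K .
\]
Write $\ol\mu_0$ for the marginal density of $\mu_0$ on $\iX_{K_r}\times\iX_{K_r}$.

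Next I would turn these two identities into structural information on $\ol\mu_0$. Each Dirichlet form is a sum of nonnegative squares, so every summand vanishes. From $D^{K,\#_1,\#_2}[\mu_0]=0$ and the detailed balance \eqref{assum:DB} I get, exactly as in \eqref{eq:mu_satisfies_DB}, for all $x,x{+}1\in K$,
\[
	\ol\mu_0(\ka^{x,x+1},\xi)=\ol\mu_0(\ka,\xi)\,e^{-\Delta_{x,x+1}H^{\#_1}_{K_r}(\ka)}, \qquad \ol\mu_0(\ka,\xi^{x,x+1})=\ol\mu_0(\ka,\xi)\,e^{-\Delta_{x,x+1}H^{\#_2}_{K_r}(\xi)},
\]
while $\D_{x,z}^{K,\#_1,\#_2}[\mu_0]=0$ together with the cross detailed balance \eqref{eq:DB_x_z_analogue} yields, for all $x,z\in K$,
\[
	\ol\mu_0(\ka^{x},\xi^{z})=\ol\mu_0(\ka,\xi)\,e^{-\Delta_{x}H^{\#_1}_{K_r}(\ka)-\Delta_{z}H^{\#_2}_{K_r}(\xi)} .
\]
Setting $g(\ka,\xi):=\ol\mu_0(\ka,\xi)\,e^{H^{\#_1}_{K_r}(\ka)+H^{\#_2}_{K_r}(\xi)}$, a one-line cancellation of the $\Delta H$ terms shows that $g$ is invariant under every intra-copy nearest-neighbour exchange in $K$ and under every cross swap that transfers one particle between site $x$ of the first copy and site $z$ of the second. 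These moves connect any two configurations sharing the same boundary values on $K_r\setminus K$ and the same total occupation $N_{K}(\ka)+N_{K}(\xi)$; hence $g$ depends only on $(\ka|_{K_r\setminus K},\,\xi|_{K_r\setminus K},\,N_{K}(\ka)+N_{K}(\xi))$. In other words, up to the frozen boundary shell, $\ol\mu_0$ is a mixture over the \emph{total} particle number in $K$ of canonical measures for the combined Hamiltonian $H^{\#_1}_{K_r}(\ka)+H^{\#_2}_{K_r}(\xi)$.

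Finally I would identify these canonical measures as $K\uparrow\Z$. The grand-canonical weight $e^{-H^{\#_1}_{K_r}(\ka)-H^{\#_2}_{K_r}(\xi)+\la(N_{K}(\ka)+N_{K}(\xi))}$ — i.e. the product $\nu^{\la,\#_1}\otimes\nu^{\la,\#_2}$ with a single \emph{common} chemical potential $\la$ — is precisely the ensemble whose conditioning on the total number reproduces the structure just found. Hence, by the equivalence of ensembles (Proposition~\ref{prop:equivalence_of_ensembles}), by the boundary-independence and uniqueness of the Gibbs measures (Proposition~\ref{prop:Gibbs_measure_exist_unique}, with the shell $K_r\setminus K$ controlled through \eqref{eq:boundary_vs_free_boundary} and \eqref{eq:split_interval}), and by the representation \eqref{eq:ergodic_representation}, every product $\nu^{\la^{\#_1}_{\rho_1},\#_1}\otimes\nu^{\la^{\#_2}_{\rho_2},\#_2}$ appearing in the disintegration of $\mu_0$ must carry equal potentials in the two copies, which is \eqref{eq:supp_w_la1_la2}. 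When $\#_1=\#_2=\#$, the map $\rho\mapsto\la^{\#}_\rho$ is the same strictly increasing function in both copies, so $\la^{\#}_{\rho_1}=\la^{\#}_{\rho_2}$ forces $\rho_1=\rho_2$, giving \eqref{eq:supp_w_rho1_rho2}.

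The main obstacle I expect is exactly this last identification: passing from ``$g$ depends only on the total occupation in $K$'' to ``$w_{\mu_0}$ sits on equal chemical potentials.'' This requires applying the equivalence of ensembles to the \emph{combined} two-copy system at fixed total density and then disentangling the two marginals, while showing the boundary shell $K_r\setminus K$ is asymptotically negligible as $K\uparrow\Z$; a clean implementation is to test \eqref{eq:ergodic_representation} against product functions $g_1\otimes g_2$ and use the derived structure to pin down the common potential, rather than arguing about supports directly. The remaining steps — the lower semicontinuity passage and the algebraic manipulation of the detailed-balance relations — are routine.
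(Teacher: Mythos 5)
Your opening steps are sound and follow a genuinely different route from the paper: you send $\eps\to0_+$ first, use lower semicontinuity of the variational formulas \eqref{eq:D^KK}, \eqref{eq:D_x_z^KK} to get $D^{K,\#_1,\#_2}[\mu_0]=0$ and $\D_{x,z}^{K,\#_1,\#_2}[\mu_0]=0$, and convert these into exact detailed-balance identities and the combined-canonical structure of the marginals (all correct, modulo the harmless caveats that the identities hold on configurations where the rates are positive, i.e.\ for genuine swaps, and that $K$ should be an interval for the connectivity-of-orbits argument). The genuine gap is the word ``Hence'' in your final step, and you half-admit this yourself. What you have established is that, conditioned on the shell $K_r\setminus K$ and on $N_K(\ka)+N_K(\xi)=n$, the marginal $\ol{\mu}_0$ is the canonical measure of the combined Hamiltonian $H^{\#_1}_{K_r}(\ka)+H^{\#_2}_{K_r}(\xi)$. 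What \eqref{eq:ergodic_representation} gives is that $\mu_0$ is the $w_{\mu_0}$-mixture of products $\nu^{\la^{\#_1}_{\rho_1},\#_1}\otimes\nu^{\la^{\#_2}_{\rho_2},\#_2}$. But conditioning such a product on the total particle number does \emph{not} produce the combined canonical measure unless $\la^{\#_1}_{\rho_1}=\la^{\#_2}_{\rho_2}$: on the hyperplane $N_K(\ka)+N_K(\xi)=n$ it produces the canonical measure tilted by $e^{(\la^{\#_2}_{\rho_2}-\la^{\#_1}_{\rho_1})N_K(\xi)}$. So what remains to be proven is that a $w_{\mu_0}$-mixture of such \emph{tilted} canonical measures can coincide with a mixture of \emph{untilted} combined canonical measures only if the tilt vanishes $w_{\mu_0}$-almost surely; this requires disentangling a mixture (localizing near an arbitrary point of $\supp w_{\mu_0}$ via functions of the block densities and Lemma~\ref{lem:LLN_weak}, while controlling the shell errors \eqref{eq:boundary_vs_free_boundary} as $K\uparrow\Z$), and it is not routine --- it is precisely the mathematical content of the lemma, whereas everything you carried out in full is the routine part.

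For comparison, the paper avoids this identification problem by staying quantitative: it works at positive $\eps$, plugs the test functions $g(\ka,\xi)=e^{\frac{al}{2}(\alpha_1M_{B_l}^2(\ka)+\alpha_2M_{B_l}^2(\xi))}$ into the variational bound $-\int \L_{x,z}^{-,+}g/g\,\dif\mu\leq \D^{B_l,-,+}_{x,z}[\mu]\leq C\eps$, averages over $x,z$, and passes $l\to\infty$, $\eps\to0_+$, $a\to0_+$ using \eqref{eq:ergodic_representation}, consistency \eqref{eq:consistency} and the explicit form \eqref{eq:c_xz} of the cross rate, arriving at the inequality \eqref{eq:leq0}, whose integrand is a strictly positive function times $(\alpha_2\rho_2-\alpha_1\rho_1)(e^{\la_2}-e^{\la_1})$; arbitrariness of $\alpha_1,\alpha_2$ and strict monotonicity of $\rho\mapsto\la^{\pm}_\rho$ then give \eqref{eq:supp_w_la1_la2} directly. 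Note that the decisive input in both arguments is the same structural fact --- the cross dynamics $\c_{x,z}^{\#_1,\#_2}$ is reversible only for products with a common chemical potential, producing the defect $e^{\la_2}-e^{\la_1}$ --- but the paper extracts it as a sign inequality against arbitrary parameters rather than through an exact-structure argument. If you want to complete your route, replace your ``hence'' by exactly such a computation: integrate your cross detailed-balance identity against $F_1(M_{B_l}(\ka))F_2(M_{B_l}(\xi))$, evaluate both sides as $l\to\infty$ using \eqref{eq:ergodic_representation} and Lemma~\ref{lem:LLN_weak}, and check that the resulting identity forces $e^{\la^{\#_1}_{\rho_1}}=e^{\la^{\#_2}_{\rho_2}}$ on $\supp w_{\mu_0}$.
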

\begin{proof}
	Let $\#_1=-$, $\#_2=+$.
	By Lemma~\ref{lem:variational_representation_of_entropy_production}, for any $\mu\in\A_{\eps}^{-,+}$ and 
\[
	g(\ka,\xi) = e^{\frac{al}{2}(\alpha_1 M_{B_l}^2(\ka)+ \alpha_2 M_{B_l}^2(\xi))}, \quad a>0,\ \alpha_1,\alpha_2\in\R, 
\] 
the following inequality holds true uniformly in $x,\,z\in B_l$,
\[
	-\int[\iX^2] \c_{x,z}^{-,+}(\ka,\xi)\big(\frac{g(\ka^x,\xi^z)}{g(\ka,\xi)} -1\big) \mu(\dif\ka\dif\xi) = - \int[\iX^2] \frac{\L_{x,z}^{-,+} g}{g} \dif\mu \leq  \D_{x,z}^{B_l,-,+}[\mu]  \leq C \eps. 
\]
Since $|M_{B_l}|\leq1$, $\sup_{x,z} \|\c_{x,z}^{-,+}\|_\infty<\infty$, $e^{x}-1\leq e^{sign(x)}x$ for $|x|\leq1$, and 
\begin{multline*}
	\frac{al}{2}\big(\alpha_1 M_{B_l}^2(\ka^x)-\alpha_1 M_{B_l}^2(\ka) + \alpha_2 M_{B_l}^2(\xi^z)-\alpha_2 M_{B_l}^2(\xi)\big)\\ = -a(\xi_z-\ka_x)(\alpha_2 M_{B_l}(\xi)-\alpha_1 M_{B_l}(\ka)) + \frac{a(\alpha_1+\alpha_2)}{l}, \quad \ka_x\neq\xi_z,
\end{multline*}
then, 
\begin{multline}\label{eq:some_integral}
	\frac{1}{(2(l-r)+1)^2} \sum_{x,\,z\in B_{l-r}} \int[\iX^2] \c_{x,z}^{-,+}(\ka,\xi) (\xi_z-\ka_x)(\alpha_2 M_{B_l}(\xi)-\alpha_1 M_{B_l}(\ka)) \mu(\dif\ka\dif\xi) \\ \leq \frac{C \eps e}{a} + \frac{(|\alpha_1|+|\alpha_2|)e}{l}, \quad l\to\infty,\ \eps\to0_+,\ a\to0_+.
\end{multline}
To simplify notations, we denote $\la_1 := \la_{\rho_1}^-$, $\la_2 := \la_{\rho_2}^+$.
By Lemma~\ref{lem:ergodic_representation} and consistency of Gibbs distributions \eqref{eq:consistency}, the left hand side of \eqref{eq:some_integral} equals to (cf.~\eqref{eq:abuse_notations_of_Gibbs}), 
\begin{multline}\label{eq:ugly_integral}
	\frac{1}{(2l-2r+1)^2} \sum_{x,\,z\in B_{l-r}} \int[[0,1]^2] \int[\iX^2] \int[\iX_{B_{l}}] \int[\iX_{B_{l}}] \c_{x,z}^{-,+}(\ka,\xi) (\xi_z-\ka_x)(\alpha_2 M_{B_l}(\xi)-\alpha_1 M_{B_l}(\varkappa)) \times \\
	\times \nu_{B_{l}}^{\la_1,-}(\dif\varkappa\,\vert\,\omega^1) \nu_{B_{l}}^{\la_2,+}(\dif\xi\,\vert\,\omega^2) \nu^{\la_1,-}(\dif\omega^1) \nu^{\la_2,+}(\dif\omega^2) w_\mu(\dif\rho_1,\dif\rho_2).
\end{multline}
Hence, we need to compute,
\begin{multline*}
	\int[\iX_{B_{l}}] \int[\iX_{B_{l}}] \big(\1_{\{\varkappa_x=1,\,\xi_z=0\}} + \1_{\{\varkappa_x=0,\,\xi_z=1\}} \big) \c_{x,z}^{-,+}(\ka,\xi) (\xi_z-\ka_x)\times \\ 
	  \times (\alpha_2 M_{B_l}(\xi)- \alpha_1 M_{B_l}(\varkappa)) \nu_{B_{l}}^{\la_1,-}(\dif\varkappa\,\vert\,\omega^1) \nu_{B_{l}}^{\la_2,+}(\dif\xi\,\vert\,\omega^2),
	\end{multline*}
	which by definition of $\c_{x,z}^{-,+}$, equals to 
	\begin{multline*}
	(e^{\la_2}-e^{\la_1}) \int[\iX_{B_{l}}] \int[\iX_{B_{l}}] \1_{\{\varkappa_x=0,\,\xi_z=0\}} (\alpha_2 M_{B_l}(\xi)- \alpha_1 M_{B_l}(\varkappa)) \times \\ 
\times e^{-\frac{1}{2}\Delta_z H_{B_{l}}^+(\xi\cdot\omega^2)}\nu_{B_{l}}^{\la_1,-}(\dif\varkappa\,\vert\,\omega^1) \nu_{B_{l}}^{\la_2,+}(\dif\xi\,\vert\,\omega^2) + O(\frac{1}{l}), \quad l\to\infty.
\end{multline*}
Since $\Delta_z H_{B_{l}}^+$ is a local function, so are $g(\ka) := 1-\ka_0$, $\tilde{g}(\xi) := (1-\xi_0) e^{-\frac{1}{2} \Delta_0 H^+_{B_{l}}}$, and \eqref{eq:ugly_integral} equals to 
\begin{multline}
	\frac{e^{\la_2}-e^{\la_1}}{(2l-2r+1)^2} \sum_{x,\,z\in B_{l-r}} \int[[0,1]^2] \int[\iX^2] (\alpha_2 M_{B_l}(\xi)-\alpha_1 M_{B_l}(\varkappa)) 	\tau_x g(\varkappa\*\omega^1) \tau_z\tilde{g}(\xi\*\omega^2) \times \\
		\times \nu^{\la_1,-}(\dif\omega^1) \nu^{\la_2,+}(\dif\omega^2) w_\mu(\dif\rho_1,\dif\rho_2)+O(\frac{1}{l}),\quad l\to\infty.
\end{multline}
Finally, similarly to the proof of Lemma~\ref{lem:lim_in_X2}, as $l\to\infty$, $\eps\to0_+$, $a\to0_+$, we get for the limiting measure $\mu_0$,
\begin{equation}\label{eq:leq0}
	\int[[0,1]^2]	\<g\>^{-}(\rho_1) \<\tilde{g}\>^{+}(\rho_2)(\alpha_2\rho_2-\alpha_1\rho_1)(e^{\la_2}-e^{\la_1})\, w_{\mu_0}(\dif\rho_1 \dif\rho_2) \leq 0.
\end{equation}
Since $\<g\>^-(\rho_1) \<\tilde{g}\>^+(\rho_2)>0$, $\alpha_2,\alpha_1\in\R$ are arbitrary, and $\rho\to\la^{\pm}_\rho$ are strictly monotone, we conclude that \eqref{eq:supp_w_la1_la2} holds true.

The proof of the case $\#_1=+$, $\#_2=-$ is the same. 

Let $\#_1=\#_2$.
Since $\rho\to\la^{\#_1}_\rho$ is strictly monotone, then \eqref{eq:supp_w_rho1_rho2} follows from \eqref{eq:supp_w_la1_la2}. 
The proof is fulfilled.
\end{proof}

Now we can prove the two-block estimate.
\begin{proof}[{Proof of Lemma~\ref{lem:two_block_estimate}}]
	The limit \eqref{eq:two_block_estimate_for_densities} follows from the H\"{o}lder inequality applied to the integral in \eqref{eq:two_block_estimate_for_densities}, Lemma~\ref{lem:lim_in_X2} with $F_1=F_2=F$, and Lemma~\ref{lem:supp_of_w_mu}.
The same for \eqref{eq:two_block_estimate_for_potentials} with $F_1(\rho)=F(\la^-_\rho)$ and $F_2(\rho)=F(\la^+_\rho)$.
For \eqref{eq:two_block_estimate_for_potentials_ii}, consider $F_1(\rho)=F(\la^+_\rho)$ and $F_2(\rho)=F(\la^-_\rho)$.
\end{proof}

Finally, we are ready to prove the replacement lemma.
\begin{proof}[Proof of Lemma~\ref{lem:replacement}.]
	Let $J\subset\Z$ be the minimal set, for which $\{B_l^j:= B_l(j+2lj)\,\vert\,j\in J\}$ covers $B_{\eps N}(x)$.
	For simplicity we assume that $B_{\eps N}(x) = \sqcup_{j\in J} B_l^j$.
	Then, by \eqref{eq:V_eps_N}, 
	\begin{align}\label{eq:estimate_V_eps_N}
		V_{\eps N}(x,\eta) &= \Big\vert \frac{|B_l|}{|B_{\eps N}|} \sum_{j\in J} \big( \frac{1}{|B_l|} \sum_{y\in B_{l}^j} g(\tau_y\eta) - \<g\> (M_{B_{\eps N}(x)}(\eta)) \big) \Big\vert \notag\\
											 &\leq \frac{|B_l|}{|B_{\eps N}|} \sum_{j\in J} \Big\vert \frac{1}{|B_l|} \sum_{y\in B_{l}^j} g(\tau_y\eta) - \<g\> (M_{B_l^j}(\eta)) \Big\vert \notag\\ 
											 &\quad + \frac{|B_l|}{|B_{\eps N}|} \sum_{j\in J} \Big\vert \<g\> (M_{B_l^j}(\eta)) - \<g\> (M_{B_{\eps N}(x)}(\eta)) \Big\vert. 
	\end{align}
	The latter summand in \eqref{eq:estimate_V_eps_N} may be estimated by 
	\begin{multline}
		\|\<g\>'\|_\infty  \frac{|B_l|}{|B_{\eps N}|} \sum_{j\in J} \big\vert M_{B_l^j}(\eta) - M_{B_{\eps N}(x)}(\eta)   \big\vert\\ \leq \|\<g\>'\|_\infty  \frac{|B_l|^2}{|B_{\eps N}|^2} \sum_{j\in J} \sum_{i\in J} \big\vert M_{B_l^j}(\eta) -  M_{B^i_l}(\eta) \big\vert. \label{eq:one_more_estimate} 
	\end{multline}
	Combining \eqref{eq:estimate_V_eps_N}, \eqref{eq:one_more_estimate}, Lemma~\ref{lem:one_block_estimate} and Lemma~\ref{lem:two_block_estimate}, we conclude the statement of Lemma~\ref{lem:replacement}. 
\end{proof}

\end{document}